\documentclass[num-refs,serif]{wiley-article}

\usepackage{graphicx} 
\usepackage{hyperref}
\usepackage{amsmath}
\usepackage{arydshln} 
\usepackage{setspace}
\usepackage{caption}
\usepackage{verbatim}
\usepackage{bm}
\usepackage{relsize}
\usepackage[normalem]{ulem} 
\usepackage{tikz}
\usetikzlibrary{arrows.meta}
\usepackage{float} 
\usepackage{booktabs}
\usepackage{multirow}

\newtheorem{Theorem}{Theorem}

\newtheorem{Lemma}{Lemma}

\papertype{Original Article}
\paperfield{Journal Section}

\title{Saturation in G: simple \& robust causal inference in cluster randomized trials with informative cluster sizes}

\author[1,2,3\authfn{1}]{Kenneth M. Lee}
\author[1,2,3]{Michael O. Harhay}
\author[4,5]{Fan Li}

\affil[1]{Department of Biostatistics, Epidemiology and Informatics, University of Pennsylvania, Philadelphia, PA, USA}
\affil[2]{Center for Clinical Trials Innovation, Department of Biostatistics, Epidemiology, Informatics, University of Pennsylvania, Philadelphia, Pennsylvania, USA}
\affil[3]{Clinical Trials Methods and Outcomes Lab, Palliative and Advanced Illness Research (PAIR) Center, Perelman School of Medicine, University of Pennsylvania, Philadelphia, PA, USA}
\affil[4]{Department of Biostatistics, Yale School of Public Health, New Haven, CT, USA}
\affil[5]{Center for Methods in Implementation and Prevention Science, Yale School of Public Health, New Haven, CT, USA}

\corraddress{\authfn{1} Department of Biostatistics, Epidemiology and Informatics, University of Pennsylvania School of Medicine, 3600 Civic Center Boulevard, Philadelphia, PA 19104}
\corremail{kenneth.lee@pennmedicine.upenn.edu}

\fundinginfo{Funding information is included at the end of the article.}

\runningauthor{Lee et al.}

\begin{document}

\begin{frontmatter}
\maketitle

\begin{abstract}
\small
Cluster randomized trials (CRTs) can exhibit informative cluster sizes (ICS) where cluster size is associated with outcomes and/or treatment effects. 
Under ICS, the individual and cluster-average treatment effects (iATE, cATE) can diverge, and the conventional linear mixed-effects model (LMM) and generalized estimating equation (GEE) with an exchangeable working correlation can produce data-dependent weighted contrasts that are not consistent for either estimand.
In these settings with ICS, we propose easy to implement ``cluster-size saturated models with g-computation'' (CS-g), which employ a simple two-step adjustment to standard practice: (1.) augment the appropriately weighted working LMM or GEE with a saturated continuous cluster-size main effect and treatment $\times$ cluster-size interaction, and (2.) apply g-computation to target an interpretable marginal estimand.
We prove that the appropriately weighted cluster-size saturated LMM with g-computation and more general cluster-size saturated GEE with g-computation can consistently target the iATE and cATE, among a broad class of interpretable estimands, while allowing for ICS. 
Crucially, this consistency holds under arbitrary misspecification of other model components, including the functional form of the saturated cluster-size terms.
Furthermore, we demonstrate exact finite-sample equivalence between these consistent CS-g estimators and their model-robust standardization counterparts.
Across simulations with continuous and binary outcomes, the proposed CS-g estimators were unbiased,  more efficient than other consistent estimators, and returned greater power to detect ICS.
A re-analysis of the PPACT P-CRT further illustrates the approach. 
Altogether, CS-g offers a simple, robust, and efficient route to target interpretable marginal effects in P-CRTs with ICS.

\keywords{cluster randomized trial, informative cluster sizes, saturation, g-computation, standardization, robust}
\end{abstract}
\end{frontmatter}

\section{Introduction}

Cluster randomized trials (CRTs) refer to the collection of study designs where randomization is carried out at the cluster level (e.g., hospital, clinic, or worksite level), with outcome measurements often collected at the individual participant level \cite{hayes_cluster_2017}. 
Among the many existing CRT design variations, the standard parallel cluster randomized trial (P-CRT) is the simplest, with clusters randomized to implement either the treatment or control over the entire trial duration.

In a P-CRT, researchers can define two target estimands with particularly natural interpretations: the cluster-average treatment effect (cATE) and the individual-average treatment effect (iATE) \cite{kahan_informative_2023,kahan_demystifying_2024,kahan_crt-estimands_2026}. 
Briefly, the cATE (sometimes also referred to as the “unit average treatment effect” or UATE \cite{imai_essential_2009}) is the average treatment effect giving all clusters equal weight, with individuals pooled across their corresponding cluster, and can be of interest when studying interventions designed for implementation at the cluster level.
Inverse cluster size weights can then be specified to ensure that all clusters contribute equally regardless of cluster size \cite{williamson_marginal_2003}.
The iATE (sometimes also referred to as the ``participant average treatment effect'') is the average treatment effect giving all individuals equal weight, mimicking what would typically be targeted in an individually-randomized trial, and can be of relevance when studying individual-level interventions that are cluster randomized due to logistical or administrative considerations.
Notably, these two estimands can differ in the presence of heterogeneous treatment effects that vary according to cluster size, also referred to as ``informative cluster sizes'' (ICS) \cite{kahan_informative_2023,kahan_demystifying_2024, williamson_marginal_2003,bugni_inference_2024}. 

Historically, P-CRTs are often analyzed with an exchangeable correlation structure specified through a mixed-effects model or generalized estimating equation (GEE) framework to account for within-cluster correlation \cite{hayes_cluster_2017}. 
With an identity link, the corresponding point estimators from a linear mixed-effects model with a random intercept and a GEE with an exchangeable working correlation will coincide \cite{gardiner_fixed_2009,hubbard_gee_2010}.
This mixed-effects model is often preferred for modeling the assumed underlying data generating process, and is the best linear unbiased estimator (BLUE) when the model is correctly specified \cite{girling_statistical_2016}.
These models can also be model-robust when certain assumptions are met, including an assumption of non-informative sampling or arm-specific random sampling to functionally avoid ICS \cite{wang_how_2024,wang_mixed-model_2026,wang_model-robust_2024}.

Despite their widespread implementation, previous work has indicated that in the presence of ICS, treatment effect coefficients from such a linear mixed-effects model produce model-based estimators that have undesirable data-dependent weights and are neither consistent for the iATE nor cATE in the analysis of P-CRTs \cite{wang_two_2022}.
The interpretation of similar model coefficients can be even less transparent for nonlinear links (e.g., logit-links), where generalized linear mixed-effects model coefficients target conditional estimands, unlike corresponding GEEs that target marginal estimands.
In contrast, work in P-CRTs \cite{wang_two_2022}, along with corresponding studies in other CRT designs \cite{lee_how_2024,lee_what_2025}, have found that appropriately weighted independence estimating equations (IEE; equivalent to ordinary least squares in a linear model) yield consistent estimators for the iATE and cATE estimands under ICS.
However, the implementation of an IEE can face resistance, with the mixed-effects model often still preferred for the previously stated reasons.

While the model-based estimators resulting from the basic mixed-effects model or GEE with an exchangeable working correlation are neither consistent for the marginal iATE nor cATE estimands in the presence of ICS, recent work has derived a ``model-robust standardization'' (MRS) approach that can be used alongside these basic models to produce consistent estimators \cite{li_model-robust_2025}.
MRS achieves this consistency by extending g-computation (a simple procedure using model predictions to target marginal estimands; also referred to as ``marginal standardization'' \cite{rosenbaum_model-based_1987}) with an additional weighted cluster-level residual term \cite{li_model-robust_2025}.

In this work, we instead propose a simple parametric model adjustment to standard mixed-effects models and GEEs with an exchangeable working correlation that allows researchers to target the iATE and cATE estimands in P-CRTs, despite the presence of ICS.
This approach (1.) implements a continuous cluster-size main effect and cluster-size $\times$ treatment interaction in the model, and (2.) uses g-computation to target an interpretable marginal estimand.
We refer to this approach as ``cluster-size saturated models with g-computation'' (CS-g), and will explicitly focus on linear mixed-effects models (CSLMM-g, CSLMMw-g) and GEEs (CSGEE-g, CSGEEw-g) without or with inverse cluster-size weights.
In contrast to MRS, the proposed CS-g approach instead uses model adjustment within the simpler g-computation framework to achieve robust consistency.
The CS-g modeling approach was previously mentioned by Hooper et al. \cite{hooper_wood_2026}, which recommended careful modeling of the treatment $\times$ cluster-size interactions to accurately capture the assumed underlying treatment effect heterogeneity resulting from ICS.
In contrast, we will provide a model-robust perspective to these estimators and prove that in P-CRTs with ICS, the CSLMM-g \& CSLMMw-g and the more general CSGEE-g \& CSGEEw-g estimators can robustly target clearly defined marginal individual and cluster-level estimands, despite arbitrary misspecification of the correlation and covariate structures, including the functional form of the cluster-size main effect and treatment interaction terms.

We first outline some standard assumptions while allowing for ICS (Section \ref{sect:assumptions}) and define the iATE and cATE among a broad class of individual and cluster-level estimands (Section \ref{sect:estimands}).
We then describe the CS-g modeling approach (Section \ref{sect:CS-g}) and prove the model-robust consistency of the appropriately weighted CSLMM-g \& CSLMMw-g and more general CSGEE-g \& CSGEEw-g estimators for the iATE and cATE in P-CRTs with ICS (Section \ref{sect:proof}).
Furthermore, we prove that consistent estimators from these CS-g models share exact finite sample equivalence to corresponding estimators from cluster-size saturated models with MRS (Section \ref{sect:g=MRS}).
We then empirically assess the performance of these aforementioned consistent CS-g models,
alongside the CSGLMM-g \& CSGLMMw-g, by simulation (Section \ref{sect:sim}).
We include a re-analysis of a P-CRT case study example (Section \ref{sect:case}) and end with some concluding remarks (Section \ref{sect:discussion}).

\section{Assumptions}
\label{sect:assumptions}

Consider a P-CRT with $m$ clusters, where each cluster $i \in \{1,...,m\}$ contains $N_i$ individuals in its source population (Figure \ref{fig:p_crt_design}).
We assume $N_i$ can vary by clusters and takes values in a bounded subset of positive integers. For simplicity, we assume that all $N_i$ individuals in the cluster source population are included in the P-CRT study.
For each individual $k \in \{1,...,N_i\}$ in cluster $i$, we define $Y_{ik}$ as their outcome, $\bm{X}_{ik}$ as their vector of covariates, and $Z_i=1$ or $Z_i=0$ if cluster $i$ receives treatment or control, respectively.
We use the potential outcomes framework to eventually define treatment effect estimands; let $Y_{ik}(1)$ denote the potential outcome of individual $k$ in cluster $i$ had the cluster received treatment, otherwise, $Y_{ik}(0)$ denotes the untreated potential outcome.

\begin{figure}[ht!]
\setlength{\unitlength}{0.10in} 
\centering 
\begin{picture}(42,13)(-10,6) 
\setlength\fboxsep{0pt}

\put(-4,17){Cluster $i=1$}
\put(3,16.5){\colorbox{gray!40}{\framebox(20,1.5){$\{\bm{O}_1=Y_{1k},\bm{X}_{1k},Z_1=1 : k=1,\dots,N_1\}$}}}

\put(-4,15){Cluster $i=2$}
\put(3,14.5){\colorbox{gray!40}{\framebox(20,1.5){$\{\bm{O}_2=Y_{2k},\bm{X}_{2k},Z_2=1 : k=1,\dots,N_2\}$}}}

\put(-4,13){Cluster $i=3$}
\put(3,12.5){\colorbox{gray!40}{\framebox(20,1.5){$\{\bm{O}_3=Y_{3k},\bm{X}_{3k},Z_3=1 : k=1,\dots,N_3\}$}}}

\put(-4,11){Cluster $i=4$}
\put(3,10.5){\framebox(20,1.5){$\{\bm{O}_4=Y_{4k},\bm{X}_{4k},Z_4=0 : k=1,\dots,N_4\}$}}

\put(-4,9){Cluster $i=5$}
\put(3,8.5){\framebox(20,1.5){$\{\bm{O}_5=Y_{5k},\bm{X}_{5k},Z_5=0 : k=1,\dots,N_5\}$}}

\put(-4,7){Cluster $i=6$}
\put(3,6.5){\framebox(20,1.5){$\{\bm{O}_6=Y_{6k},\bm{X}_{6k},Z_6=0 : k=1,\dots,N_6\}$}}

\end{picture}
\caption{Design of a P-CRT with clusters $i$ assigned to receive either treatment ($Z_i=1$) or control ($Z_i=0$), producing the observed data ($\bm{O}_i$).}
\label{fig:p_crt_design}
\end{figure}

The observed data for each cluster are denoted as $\bm{O}_i=\{Y_{ik}, \bm{X}_{ik}, Z_i : k=1,...,N_i\}$ (Figure \ref{fig:p_crt_design}).
The complete, but not fully observed, data vector for each cluster $i$ is then denoted as $\bm{W}_i = \{(Y_{ik}(0), Y_{ik}(1), \allowbreak \bm{X}_{ik}, Z_i, N_i) : k=1,...,N_i\}$. 
To proceed, we outline the following assumptions on $\bm{W}_i$ to enable the subsequent estimand definitions and theorem proofs.

\noindent \textbf{A1.} (\textit{Super-population sampling}) Data vectors $\{\bm{W}_i, i=1,...,m\}$  are independent and identically distributed draws from a population distribution $\mathcal{P}$ with finite second moments.
Furthermore, within each cluster $i$, the data vectors $\{(Y_{ik}(0), Y_{ik}(1)), \bm{X}_{ik}\}$ for $k=1,...,N_i$ are identically distributed given the source population size $N_i$.

\noindent \textbf{A2.} (\textit{Cluster randomization}) Treatment assignment $Z_i$ is independent of all other random variables in $\bm{W}_i$, and $P(Z_i=1)=E[Z_i]=\pi \in (0,1)$.

We illustrate the super-population sampling scheme (Assumption A1) and subsequent cluster randomization (Assumption A2) in Figure \ref{fig:sampling}.
Assumption A1 is typical for causal inference under a super-population framework. Although the dimension of $\bm{W}_i$ varies by $N_i$, the data can be generated via the mixture model $\mathcal{P} = \mathcal{P}^{\bm{W}|N} \times \mathcal{P}^N$, where we first draw cluster size $N_i \sim \mathcal{P}^N$, then draw within-cluster data $\bm{W}_i \sim \mathcal{P}^{\bm{W}|N}$ given $N_i$.
In addition, this assumption requires that the individual-level complete data vector has the same expectation conditional on $N_i$; this allows us to construct estimands based on marginal expectations of individual-level potential outcomes.
Importantly, although A1 assumes between-cluster independence, it allows for arbitrary within-cluster correlation structures among potential outcomes and covariates.
For notational convenience, we omit the subscript $i$ when taking expectation with respect to population distribution $\mathcal{P}$ (Assumption A1). For example, $E[f(\bm{O_i})]$, where $\bm{O}_i$ is the observed data of cluster $i$ and $f$ is an arbitrary measurable function, is simplified as $E[f(\bm{O})]$, where $\bm{O}$ represents the random variable sampled from $\mathcal{P}$.
Likewise, $(Y_{ik}, N_{i})$ are denoted as $(Y_{.k}, N)$ when taking expectations. This notational simplification is used throughout the manuscript.
Assumption A2 holds by design for cluster randomized trials.

\begin{figure}[h]
    \centering
    \begin{tikzpicture}[
    node distance=2cm,
    main/.style={circle, draw, minimum size=1cm},
    lbl/.style={font=\small, fill=white, inner sep=2pt, align=center}, 
    every edge/.style={draw, -Stealth}
]
    \node[main] (P) at (-3, 0) {$\mathcal{P}$};
    \node[main] (N) at (-1.5,0) {$N_i$};
    \node[main] (W) at (0,0) {$\bm{W}_i$};
    \node[main] (O0) at (2,-1) {$\bm{O}_{i, Z_i=0}$};
    \node[main] (O1) at (2,1) {$\bm{O}_{i,Z_i=1}$};
    
    \draw[->] (P) -- (N);
    \draw[->] (N) -- (W);
    \draw[->] (W) -- node[above, sloped] {$Z_i=1$} (O1);
    \draw[->] (W) -- node[below, sloped] {$Z_i=0$} (O0);

    \draw[{Bar[width=4pt]}-{Bar[width=4pt]}, semithick, gray] (-3, -1.0) -- (0, -1.0) node[midway, below=2pt, lbl, text=black]{Assumption A1};
    \draw[{Bar[width=4pt]}-{Bar[width=4pt]}, semithick, gray] (0, -1.8) -- (2, -1.8) node[midway, below=2pt, lbl, text=black] {Assumption A2};
    \end{tikzpicture}
    \caption{Illustration of the described super-population sampling scheme (Assumption A1) and subsequent cluster randomization (Assumption A2) while allowing for ICS.}
    \label{fig:sampling}
\end{figure}

Unlike similar publications that establish the model-robustness of mixed-effects models in CRTs by making additional assumptions to functionally avoid ICS \cite{wang_how_2024,wang_mixed-model_2026,wang_model-robust_2024}, we explicitly prove the consistency of our proposed modeling approach while explicitly allowing for ICS.

\section{Treatment Effect Estimands and Informative Cluster Sizes}
\label{sect:estimands}

Let $Y_{ik}(z)$ denote the potential outcome of individual $k$ in cluster $i$, with $z \in \{0,1\}$. Then, $Y_{ik}(0)$ and $Y_{ik}(1)$ denote the untreated and treated potential outcomes, respectively.
We connect the observed outcome $Y_{ik}$ and potential outcomes via
\[
Y_{ik} =I\{Z_i=1\}Y_{ik}(1) + I\{Z_i=0\}Y_{ik}(0) \,.
\] 
The cluster-specific vector of observed and potential outcomes is then
\[
\begin{split}
    \bm{Y}_{i} &= (Y_{i1},...,Y_{iN_i})^\top \in \mathbb{R}^{N_i} \,, \\
    \bm{Y}_{i}(z) &= (Y_{i1}(z),...,Y_{iN_i}(z))^\top \in \mathbb{R}^{N_i} \,. \\ 
\end{split}
\]

With informative cluster sizes (ICS), cluster size $N_i$ is correlated with potential outcomes $\{Y_{ik}(0),Y_{ik}(1)\}$, specifically being dependent on (I.) baseline outcomes ($Y_{ik}(0)$) and/or (II.) treatment effects ($Y_{ik}(1)-Y_{ik}(0)$) \cite{kahan_demystifying_2024}.
We can define the individual-average treatment effect (iATE) and cluster-average treatment effect (cATE) estimands, with expectations taken over the distribution of clusters ($E[f(\bm{W}_i)] = \int f(\bm{w}) d\mathcal{P}(\bm{w})$ for any integrable function $f$)
\[
\begin{split}
    \Delta_{iATE} &= \frac{E\left[\sum_{k=1}^N(Y_{.k}(1)-Y_{.k}(0))\right]}{E\left[N\right]} = \frac{E\left[\bm{1}_N^\top (\bm{Y}(1)-\bm{Y}(0))\right]}{E\left[N\right]}  \,, \\
    \Delta_{cATE} &= E\left[\frac{1}{N}\sum_{k=1}^N(Y_{.k}(1)-Y_{.k}(0))\right] = E\left[\frac{1}{N}\bm{1}_N^\top (\bm{Y}(1)-\bm{Y}(0))\right] \,.
\end{split}
\]
These estimand definitions are model-free and differ in the presence of ICS.II. 
The iATE estimand treats each individual as the unit of inference, giving equal weight to all participants regardless of which cluster they belong to. If researchers are primarily interested in an effect on the average individual across the population of participants, the iATE estimand may be more appropriate. This may be suitable with participant-level outcome measures such as mortality, quality of life, etc.
The cATE estimand treats each cluster as the unit of inference, giving equal weight to all clusters regardless of their size. This can be achieved by applying inverse cluster-size weights to the previously described individual-level comparisons or by analyzing cluster-level summaries. If researchers are primarily interested in an effect on the average cluster across the population of clusters, the cATE estimand may be more appropriate. This may be suitable when outcomes capture cluster behavior.
More discussion around these two estimands can be found in some recent publications \cite{kahan_estimands_2023,kahan_demystifying_2024,kahan_crt-estimands_2026}.

The iATE and cATE belong to a broad class of
weighted average treatment effect estimands defined here as
\[
    \Delta_{\bm{\lambda} A}=h\left\{ \frac{E\left[(\lambda/N)\bm{1}_N^\top \bm{Y}(1)\right]}{E\left[\lambda\right]} , \frac{E\left[(\lambda/N)\bm{1}_N^\top \bm{Y}(0)\right]}{ E\left[\lambda\right]} \right\} \,.
\]
Then, $\lambda=N$ returns the individual-average estimand and $\lambda=1$ returns the cluster-average estimand
\[
    \Delta_{iA} = h\left\{ \frac{E\left[\bm{1}_N^\top \bm{Y}(1)\right]}{E\left[N\right]} , \frac{E\left[\bm{1}_N^\top \bm{Y}(0)\right]}{E\left[N\right]} \right\}, \,\,\,\, 
    \Delta_{cA} = h\left\{ E\left[\frac{1}{N}\bm{1}_N^\top \bm{Y}(1)\right] , E\left[\frac{1}{N}\bm{1}_N^\top \bm{Y}(0)\right] \right\} \,,
\]
and $h$ is a pre-specified function determining the contrasts and scale of the effect measure. Then, $h\{x,y\}=x-y$ yields the previously described difference estimands (iATE, cATE), $h\{x,y\}=x/y$ yields the marginal risk ratio estimands (iARR, cARR), and $h\{x,y\}=\{x/(1-x)\}/\{y/(1-y)\}$ yields the marginal odds ratio estimands (iAOR, cAOR).

\section{CS-g: Cluster-size Saturated Models with g-computation}
\label{sect:CS-g}

First, consider the \textit{basic} working linear mixed-effects model for individual $k$ in cluster $i$
\[
    Y_{ik} = \beta_0 + I\{Z_i=1\} \beta_Z + \bm{X}_{ik} \bm{\beta_X} + \alpha_i + \epsilon_{ik}
\]
which can be summarized in vector form across cluster $i$ as
\begin{equation}
    \label{eq:ME_basic}
    \bm{Y}_i =  \bm{1}_{N_i}\beta_0 + I\{Z_i=1\} \bm{1}_{N_i} \beta_Z + \bm{X}_i \bm{\beta_X} + \alpha_i \bm{1}_{N_i} + \bm{\epsilon}_i \,,
\end{equation}
where recall $\bm{Y}_{i} = (Y_{i1},...,Y_{iN_i})^\top \in \mathbb{R}^{N_i}$ and $\bm{1}_{N_i}$ is a $N_i$-dimensional vector of ones. Then $\beta_0$ is the model intercept, $\beta_Z$ is the treatment effect coefficient, $\bm{\beta_X} \in \mathbb{R}^p$ is the vector of $p$ covariates (individual and/or cluster-level), $\bm{X}_i=(\bm{X}_{i1},...,\bm{X}_{iN_i})^\top \in \mathbb{R}^{N_i \times p}$ indicate the $p$ covariates, $\alpha_i \sim N(0,\tau^2)$ is the cluster random intercept inducing within-cluster correlation, and $\bm{\epsilon}_i = (\epsilon_{i1}, ..., \epsilon_{iN_i})^\top$ with $\epsilon_{ik} \sim N(0,\sigma^2)$ are the residuals.

Crucially, the treatment effect estimator $\beta_Z$ from such a model and its inverse cluster-size weighted counterpart have been previously proven to be inconsistent for the iATE and cATE estimands in P-CRTs with ICS.II \cite{wang_two_2022}.

\subsection{Cluster-size Saturated Models}

\label{sect:CSLMM_CSGEE}

Building on the basic model described in Equation (\ref{eq:ME_basic}), the \textit{cluster-size saturated linear mixed-effects model} (CSLMM) for individual $k$ in cluster $i$ can then be specified as
\[
    Y_{ik} = \beta_0 + I\{Z_i=1\} \beta_Z +I\{Z_i=1\} N_i \beta_{ZN} +  N_i \beta_{N} +  \bm{X}_{ik} \bm{\beta_X} + \alpha_i + \epsilon_{ik}
\]
and can be summarized in vector form across cluster $i$ as
\begin{equation}
\label{eq:ME_sat}
    \bm{Y}_i =  \bm{1}_{N_i}\beta_0 + I\{Z_i=1\} \bm{1}_{N_i} \beta_Z + I\{Z_i=1\} N_i \bm{1}_{N_i} \beta_{ZN} + N_i \bm{1}_{N_i} \beta_{N}  + \bm{X}_i \bm{\beta_X} + \alpha_i \bm{1}_{N_i} + \bm{\epsilon}_i \,,
\end{equation}
where cluster-size ($N_i$) is included as a continuous covariate; then $\beta_{ZN}$ is the treatment $\times$ cluster-size interaction term and $\beta_N$ is the cluster-size main effect.
Then, the cluster-size saturated working model is
\begin{equation}
\label{eq:ME_sat_dist}
    \bm{Y}_i|  Z_i, \bm{X}_i, N_i 
    \sim N(\bm{Q}_i\bm{\beta}, \bm{\Sigma}_i) \,,
\end{equation}
where $\bm{Q}_i=(\bm{1}_{N_i}, I\{Z_i=1\} \bm{1}_{N_i}, I\{Z_i=1\} N_i \bm{1}_{N_i}, N_i \bm{1}_{N_i}, \bm{X}_i) \in \mathbb{R}^{N_i \times (4+p)}$, 
$\bm{\beta}=(\beta_0, \beta_Z, \beta_{ZN}, \allowbreak \beta_N ,\bm{\beta}_X^{\top})^{\top} \in \mathbb{R}^{4+p}$, and $\bm{\Sigma}_i = \tau^2 (\bm{1}_{N_i}\bm{1}_{N_i}^\top) + \sigma^2 \textbf{I}_{N_i} \in \mathbb{R}^{N_i \times N_i}$.

The more general \textit{cluster-size saturated GEE} (CSGEE) can be specified with the following marginal mean model for individual-level data
\[
\begin{split}
    E\left[Y_{ik} | \bm{Q}_{ik}\right] &=  g^{-1}\left( \beta_0 + I\{Z_i=1\} \beta_Z + I\{Z_i=1\} N_i \beta_{ZN} + N_i \beta_{N}  + \bm{X}_{ik} \bm{\beta_X} \right) \\
    &= g^{-1}\left( \bm{Q}_{ik}\bm{\beta}  \right) \\
    &= \mu_{ik}
\end{split}
\]
which we summarize in vector form across cluster $i$ as
\begin{equation}
\label{eq:GEE_sat}
\begin{split}
    E[\bm{Y}_i | \bm{Q}_i] &=  g^{-1}\left(\bm{1}_{N_i}\beta_0 + I\{Z_i=1\} \bm{1}_{N_i} \beta_Z + I\{Z_i=1\} N_i \bm{1}_{N_i} \beta_{ZN} + N_i \bm{1}_{N_i} \beta_{N}  + \bm{X}_i \bm{\beta_X}\right) \\
    &= g^{-1}(\bm{Q}_i \bm{\beta}) \\
    &= \bm{\mu}_i \,
    \end{split}
\end{equation}
where $g$ is the link function, which we assume to be canonical (e.g., logit for binary outcomes, log for count outcomes, identity for continuous outcomes) and $\bm{Q}_{ik}=(1, I\{Z_i=1\} , I\{Z_i=1\} N_i, N_i, \bm{X}_{ik}) \in \mathbb{R}^{1 \times (4+p)}$, and $\bm{Q}_i = (\bm{Q}_{ik})_{k:1,...,N_i} \in \mathbb{R}^{N_i \times (4+p)}$.
Then the vector of regression coefficients $\bm{\beta}$ is estimated by $\hat{\bm{\beta}}$, being the solution to the estimating equation
\begin{equation}
\label{eq:GEE}
    \sum_{i=1}^{m} \bm{U}_i^\top \bm{\mathcal{Z}}_i^{-1/2} \bm{R}_i^{-1} \bm{\mathcal{Z}}_i^{-1/2} (\bm{Y}_i - \bm{\mu}_i) = 0 \,,
\end{equation}
where $\bm{\mu}_i = g^{-1}(\bm{Q}_{i} \bm{\beta}) = \left(g^{-1}(\bm{Q}_{ik} \bm{\beta})\right)_{k=1,...,N_i} = (\mu_{ik})_{k=1,...,N_i} \in \mathbb{R}^{N_i}$ is the mean function vector for all observed individuals in cluster $i$, 
$\bm{U}_i = \frac{d\bm{\mu}_i}{d\bm{\beta}}$ is the derivative matrix, $\bm{\mathcal{Z}}_i = \text{diag}\{v(Y_{ik}) : k=1,...,N_i\}$ is the diagonal matrix of the natural variance functions $v(Y_{ik})$, and $\bm{R}_i$ is the working correlation structure for the observed outcomes in cluster $i$.
With the GEE, we generally consider the exchangeable working correlation structure, $\bm{R}_i = \rho(\bm{1}_{N_i}\bm{1}_{N_i}^\top) + (1-\rho)\textbf{I}_{N_i} \in \mathbb{R}^{N_i \times N_i}$ where $\rho$ is the ICC estimated by moment estimators $\hat{\rho}$. In the case of an independence working correlation, then $\rho=0$ and the resulting estimators will generally be referred to as independence estimating equations (IEE).

The CSLMM (Equation \ref{eq:ME_sat}) is a more specific case of the CSGEE (Equation \ref{eq:GEE_sat}) with an identity-link and constant working variance $v(Y_{ik})=\sigma^2$.
Broadly, these models with individual-level observations aim to target individual-level estimands and can be implemented with inverse cluster-size weights (CSLMMw, CSGEEw) to target cluster-level estimands.

\subsection{G-computation}
\label{sect:g-comp}

G-computation (also referred to as ``standardization'' or ``marginalization'') in its simplest form takes the average of fitted model predictions across all samples after setting the treatment assignment to either all-treated or all-control, then takes the average contrasts of those predictions. It has been commonly used as a population standardization technique to estimate marginal estimands \cite{rosenbaum_model-based_1987}.

In the general setting, cluster-size saturated GEE with g-computation treatment effect estimators and appropriate weighting ($\hat{\Delta}_{CSGEE-g}$, $\hat{\Delta}_{CSGEEw-g}$) are devised by g-computation and can be easily implemented in standard statistical software (e.g., using \texttt{predict()} in \texttt{R}; Supplementary Appendix \ref{app:example_code}).
As per Equation \ref{eq:GEE_sat} and with $a \in \{0,1\}$, the vector of all-control ($a=0$) and all-treated ($a=1$) g-computation predicted outcomes can then be specified as
\[
\hat{\bm{\mu}}_i(a) 
= g^{-1}\left( \bm{Q}_{i}(a) \hat{\bm{\beta}}  \right)
=  g^{-1}\left( \hat{\beta}_0 + a \hat{\beta}_Z + a N_i \hat{\beta}_{ZN} + N_i \hat{\beta}_{N}  + \bm{X}_{ik} \hat{\bm{\beta}}_{\bm{X}} \right)_{k:1,...,N_i}
\in \mathbb{R}^{N_i}
\,.
\]
The appropriately weighted estimators targeting a broad range of individual-level or cluster-level estimands can then be appropriately averaged and formulated with contrasts $h$ (Section \ref{sect:estimands}).

Finally, robust sandwich variance estimators can be specified with the delta method within an M-estimation framework \cite{ross_m-estimation_2024}. Alternatively, non-parametric robust variance estimators such as the ``leave-one-cluster-out'' jackknife can also be easily implemented \cite{bell_bias_2002} and have become a standard recommendation in such settings \cite{li_model-robust_2025}.

\subsection{Summary}

Broadly, this CS-g modeling approach consists of the appropriately weighted ``CSGEE$\bm{\lambda}$-g'' (CSGEE-g, CSGEEw-g) estimators, with ``CSLMM$\bm{\lambda}$-g'' (CSLMM-g, CSLMMw-g) estimators being specific incidences.
Altogether, the proposed CS-g modeling procedure is as follows
\begin{enumerate}
    \item Specify a cluster-size saturated linear mixed-effects model or GEE with appropriate weights (unweighted to target an individual-level estimand; inverse cluster-size weighted to target a cluster-level estimand).
    \item Implement g-computation with these appropriately weighted cluster-size saturated models to predict outcomes under all-treated or all-control conditions. Then contrast these predicted outcomes to estimate the marginal estimands.
    \item Implement a cluster-robust variance (e.g., sandwich variance estimator, non-parametric ``leave-one-cluster-out'' jackknife variance estimator, etc) for inference.
\end{enumerate}
An illustrative example of template code using \texttt{R} is included in the Supplementary Appendix (\ref{app:example_code}).

\section{Robust Consistency of CS-g}
\label{sect:proof}

We first outline Lemma \ref{lemma:variance}, a standard result which establishes the necessary regularity conditions for consistency of the cluster-robust sandwich variance estimator within an M-estimation framework \citep{van_der_vaart_asymptotic_1998, tsiatis_semiparametric_2006}.

\begin{Lemma}
\label{lemma:variance}

Let $\bm{O}_1,...,\bm{O}_{m}$ be i.i.d. samples from a common distribution on $O$.
Let $\bm{\psi}(\bm{O},\bm{\theta})$ be a known estimating equation with parameters $\bm{\theta} \in \Theta$, a compact set of Euclidean space.
Let $\hat{\bm{\theta}}$ be the solution to $\sum_{i=1}^{m} \bm{\psi}(\bm{O}_i, \bm{\theta})=0$.
We assume that $\bm{\psi}$ satisfies the following regularity conditions
\begin{enumerate}
    \item There exists a unique solution in the interior of $\Theta$, denoted as $\underline{\bm{\theta}}$, to the equation $E[\bm{\psi}(\bm{O},\bm{\theta})] = 0$.
    \item The function $\bm{\theta} \mapsto \bm{\psi}(o, \bm{\theta})$, together with its first and second derivatives, is dominated by a square-integrable function for every $o$ in the support of $\bm{O}$.
    \item $E\left[\frac{d\bm{\psi}(\bm{O},\bm{\theta})}{d\bm{\theta}^{\top}} \mid_{\bm{\theta}=\underline{\bm{\theta}}}\right]$ is invertible.
\end{enumerate}
Then we have
\[
\begin{split}
    \hat{\bm{\theta}} &\xrightarrow{P} \underline{\bm{\theta}} \\
    m^{1/2}(\hat{\bm{\theta}} - \underline{\bm{\theta}}) &\xrightarrow{d} N(0,\textbf{V})
\end{split}
\]
where $\textbf{V}=E[\text{IF}(\bm{O},\underline{\bm{\theta}})\text{IF}(\bm{O},\underline{\bm{\theta}})^\top]$ and $\text{IF}(\bm{O},\underline{\bm{\theta}}) = -\left( E\left[\frac{d\bm{\psi}(\bm{O},\bm{\theta})}{d\bm{\theta}^{\top}} \mid_{\bm{\theta}=\underline{\bm{\theta}}}\right]^{-1} \bm{\psi}(\bm{O},\underline{\bm{\theta}}) \right)$ is the influence function for $\hat{\bm{\theta}}$.

Furthermore, the sandwich variance estimator
\[
    \hat{\bm{V}} = m^{-1}\sum_{i=1}^{m} \widehat{\text{IF}}(\bm{O}_i,\hat{\bm{\theta}}) \widehat{\text{IF}}(\bm{O}_i,\hat{\bm{\theta}})^\top \xrightarrow{P} \textbf{V}
\]
where $\widehat{\text{IF}}(\bm{O}_i,\hat{\bm{\theta}})=-\left( 
\left[
    m^{-1}\sum_{i=1}^{m} \frac{d\bm{\psi}(\bm{O}_i,\bm{\theta})}{d\bm{\theta}^{\top}} \mid_{\bm{\theta}=\hat{\bm{\theta}}}
\right]^{-1}
\bm{\psi}(\bm{O}_i, \hat{\bm{\theta}})
\right)$.
\end{Lemma}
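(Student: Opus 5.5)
This is the classical M-estimation argument (van der Vaart, Ch. 5; Tsiatis, Ch. 3). I would establish consistency, then asymptotic normality, then consistency of the sandwich.

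\textbf{Consistency.} Set $\Psi_m(\bm{\theta}) = m^{-1}\sum_i \bm{\psi}(\bm{O}_i,\bm{\theta})$ and $\Psi(\bm{\theta}) = E[\bm{\psi}(\bm{O},\bm{\theta})]$.
\begin{itemize}
\item Condition 2 gives an integrable envelope for $\bm{\psi}$, and $\bm{\psi}$ is continuous in $\bm{\theta}$ because it is differentiable. On the compact set $\Theta$, the uniform law of large numbers (a Glivenko--Cantelli argument) then yields $\sup_{\bm{\theta}\in\Theta}\|\Psi_m(\bm{\theta})-\Psi(\bm{\theta})\| \xrightarrow{P} 0$.
\item $\Psi$ is continuous by dominated convergence. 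By Condition 1 it has the unique zero $\underline{\bm{\theta}}$, so on the compact set $\Theta$, for every $\epsilon>0$ we have $\inf_{\|\bm{\theta}-\underline{\bm{\theta}}\|\ge\epsilon}\|\Psi(\bm{\theta})\|>0$.
\item Since $\Psi_m(\hat{\bm{\theta}})=0$, this well-separation combined with uniform convergence gives $\hat{\bm{\theta}}\xrightarrow{P}\underline{\bm{\theta}}$ (van der Vaart, Theorem 5.9).
\end{itemize}

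\textbf{Asymptotic normality.} Expand $0=\Psi_m(\hat{\bm{\theta}})$ about $\underline{\bm{\theta}}$ with an integral-form remainder, so the expansion is valid for a vector-valued function:
\[
0=\Psi_m(\underline{\bm{\theta}}) + \dot\Psi_m(\underline{\bm{\theta}})(\hat{\bm{\theta}}-\underline{\bm{\theta}}) + \tfrac12 (\hat{\bm{\theta}}-\underline{\bm{\theta}})^\top \ddot\Psi_m(\tilde{\bm{\theta}})(\hat{\bm{\theta}}-\underline{\bm{\theta}}).
\]
\begin{itemize}
\item The second-derivative domination in Condition 2 makes $\ddot\Psi_m$ bounded in probability, so the quadratic term is $o_P(\|\hat{\bm{\theta}}-\underline{\bm{\theta}}\|)$ by consistency.
\item The law of large numbers gives $\dot\Psi_m(\underline{\bm{\theta}})\to E[d\bm{\psi}/d\bm{\theta}^\top]$, which is invertible by Condition 3.
\item The multivariate central limit theorem gives $m^{1/2}\Psi_m(\underline{\bm{\theta}})\xrightarrow{d}N(0,E[\bm{\psi}\bm{\psi}^\top])$. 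This uses $E\bm{\psi}(\bm{O},\underline{\bm{\theta}})=0$ and finite second moments from the square-integrable envelope.
\item Rearranging and applying Slutsky's theorem yields
\[
m^{1/2}(\hat{\bm{\theta}}-\underline{\bm{\theta}}) = m^{-1/2}\sum_i \text{IF}(\bm{O}_i,\underline{\bm{\theta}})+o_P(1),
\]
and hence the limit $N(0,\textbf{V})$ with $\textbf{V}=E[\text{IF}\,\text{IF}^\top]$.
\end{itemize}

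\textbf{Sandwich consistency.} Write $\hat{\bm{V}}$ as $\hat{A}^{-1}\hat{B}\hat{A}^{-\top}$, where $\hat{A}=m^{-1}\sum_i \dot{\bm{\psi}}(\bm{O}_i,\hat{\bm{\theta}})$ and $\hat{B}=m^{-1}\sum_i \bm{\psi}(\bm{O}_i,\hat{\bm{\theta}})\bm{\psi}(\bm{O}_i,\hat{\bm{\theta}})^\top$.
\begin{itemize}
\item The classes $\{\dot{\bm{\psi}}(\cdot,\bm{\theta})\}$ and $\{\bm{\psi}\bm{\psi}^\top(\cdot,\bm{\theta})\}$ have integrable envelopes: the latter because the envelope of $\bm{\psi}$ is square-integrable, so its square is integrable. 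Both are continuous in $\bm{\theta}$, so the uniform law of large numbers applies over a neighborhood of $\underline{\bm{\theta}}$.
\item Combining uniform convergence with $\hat{\bm{\theta}}\xrightarrow{P}\underline{\bm{\theta}}$ gives $\hat{A}\to A$ and $\hat{B}\to B$ in probability.
\item Continuity of matrix inversion at the invertible $A$ and the continuous mapping theorem then give $\hat{\bm{V}}\xrightarrow{P}\textbf{V}$.
\end{itemize}

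\textbf{Main obstacle.} The delicate step is the uniform convergence used in both the consistency and the sandwich arguments. The conditions as stated only give pointwise domination, so I would read Condition 2 as requiring an envelope that does not depend on $\bm{\theta}$, and invoke the standard Glivenko--Cantelli result for compact parameter sets under such an envelope. If that reading is not intended, I would instead use a mean-value bound via the dominated derivative to obtain Lipschitz continuity in $\bm{\theta}$ with a square-integrable Lipschitz constant. That delivers both uniform convergence and stochastic equicontinuity directly.
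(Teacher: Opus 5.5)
Your proposal is correct and follows essentially the same route as the paper: consistency comes from a Glivenko--Cantelli class (van der Vaart's Example 19.8) combined with Theorem 5.9, and asymptotic normality comes from the second-order Taylor expansion of Theorem 5.41 together with the CLT and Slutsky. The only minor difference is in the sandwich step, where you use a uniform law of large numbers over a neighborhood of $\underline{\bm{\theta}}$, whereas the paper uses a mean-value expansion of $\dot{\bm{\psi}}$ and $\bm{\psi}\bm{\psi}^\top$, with the dominated derivatives giving an $O_p(1)\cdot o_p(1)$ bound plus a pointwise LLN at $\underline{\bm{\theta}}$; both arguments need the same $\bm{\theta}$-free envelope reading of Condition 2 that you flag.
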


The complete proof for Lemma \ref{lemma:variance} is included in the Supplementary Appendix (\ref{app:proofs}).
While Lemma \ref{lemma:variance} demonstrates the consistency of the sandwich variance estimator, previous publications have highlighted that this variance estimator can be biased with small sample sizes and can yield under-coverage of the 95\% confidence intervals \cite{bell_bias_2002}. This is true when using either a normal approximation or $t$-distribution, even with Satterthwaite degrees of freedom adjustment, and has lead to development for the jackknife and bias-reduced linearization variance estimators for better coverage with a $t$-distribution \cite{bell_bias_2002}.
Throughout this manuscript, we will typically employ a non-parametric ``leave-one-cluster-out'' jackknife variance estimator \cite{li_model-robust_2025,bell_bias_2002}. Such a jackknife variance has similar asymptotic properties to the cluster-robust sandwich variance estimator \cite{bell_bias_2002} and its re-sampling procedure is simple to manually implement with standard statistical software.

With the consistency of the variance estimator established in Lemma \ref{lemma:variance}, we next need to establish consistency of the appropriately weighted CS-g point estimators for their corresponding estimands.

\begin{Theorem}
\label{Theorem_g}
    Under standard regularity conditions in Lemma \ref{lemma:variance} and assume Assumptions A1 and A2, then the following Central Limit Theorems hold for a P-CRT: (a) $\hat{V}_{CSLMM-g}^{-1/2} m^{1/2} (\hat{\Delta}_{CSLMM-g} - 
    \Delta_{iATE}
    ) \xrightarrow{d} N(0,1)$;  and (b) $\hat{V}_{CSLMMw-g}^{-1/2} m^{1/2} (\hat{\Delta}_{CSLMMw-g} - 
    \Delta_{cATE}
    ) \xrightarrow{d} N(0,1)$.
\end{Theorem}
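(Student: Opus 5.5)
Our plan is to cast the CSLMM-g (resp. CSLMMw-g) estimator as the solution of a stacked estimating equation, identify its probability limit with Lemma \ref{lemma:variance}, and then show that this limit equals $\Delta_{iATE}$ (resp. $\Delta_{cATE}$) under A1--A2, whatever the working covariance and the true conditional means. Write $w_i=1$ for CSLMM and $w_i=1/N_i$ for CSLMMw. The linear mixed model score is
\[
\bm{\psi}_\beta(\bm{O}_i,\bm{\beta},\tau^2,\sigma^2)=w_i\,\bm{Q}_i^\top\bm{\Sigma}_i^{-1}(\bm{Y}_i-\bm{Q}_i\bm{\beta}).
\]
We append the (REML/ML or moment) equations for $(\tau^2,\sigma^2)$ and the g-computation equations
\[
w_i\bigl[\bm{1}_{N_i}^\top\bm{Q}_i(a)\bm{\beta}-N_i\mu_a\bigr]=0,\qquad a\in\{0,1\}.
\]
The target is $\Delta=\mu_1-\mu_0$. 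Under the regularity conditions, Lemma \ref{lemma:variance} yields consistency and asymptotic normality of the stacked parameter, with limit $(\underline{\bm\beta},\underline\tau^2,\underline\sigma^2,\underline\mu_1,\underline\mu_0)$. It also gives consistency of the sandwich variance. The delta method applied to $h(x,y)=x-y$, followed by Slutsky's lemma, then gives the stated studentized CLTs, provided $\underline\mu_1-\underline\mu_0$ equals the estimand.

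The key step is identifying the limit. For exchangeable $\bm\Sigma_i$ we have
\[
\bm\Sigma_i^{-1}=\sigma^{-2}\Bigl[\textbf{I}-c(N_i)\bm 1\bm 1^\top\Bigr],\qquad c(N)=\frac{\tau^2}{\sigma^2+N\tau^2}.
\]
Hence, for any column of $\bm{Q}_i$ that is constant within the cluster, of the form $f_i\bm 1_{N_i}$,
\[
f_i\bm 1^\top\bm\Sigma_i^{-1}=f_i\,\sigma^{-2}\bigl(1-N_i c(N_i)\bigr)\bm 1^\top=f_i\,\kappa(N_i)\,\bm 1^\top,
\qquad \kappa(N)=\frac{1}{\sigma^2+N\tau^2}.
\]
The four columns $1$, $Z_i$, $Z_iN_i$ and $N_i$ are all cluster-level. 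Taking expectations of the corresponding components of the limiting score equation gives
\[
E\bigl[w\,\kappa(N)\,Z^a\,N^b\,\bm 1^\top(\bm Y-\bm Q\underline{\bm\beta})\bigr]=0,\qquad a,b\in\{0,1\}.
\]
Using $Z\perp(N,\bm W)$ (A2), we split by arm: within each arm $z$,
\[
E\bigl[w\,\kappa(N)\,N^b\,\bm 1^\top(\bm Y(z)-\bm Q(z)\underline{\bm\beta})\bigr]=0,\qquad b=0,1.
\]
The difficulty is that the weight $\kappa(N)$ is data-dependent. This is exactly why the basic LMM fails, since with $b=0$ alone the weight $w\kappa(N)$ is not proportional to $w$. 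The saturation idea is that the two equations give orthogonality of the arm-specific residual sum $r_z(N)=E[\bm 1^\top(\bm Y(z)-\bm Q(z)\underline{\bm\beta})\mid N]$ to the span $\{w\kappa(N),\,w\kappa(N)N\}$. We need $E[w\,r_z(N)]=0$.

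This is the step we expect to be the main obstacle. For $w=1$ we need $1\in\mathrm{span}\{\kappa(N),N\kappa(N)\}$. Since $1/\kappa(N)=\sigma^2+N\tau^2$, we have
\[
1=\sigma^2\kappa(N)+\tau^2 N\kappa(N)
\]
exactly. We then take the combination $\sigma^2\cdot(b{=}0)+\tau^2\cdot(b{=}1)$ with the limiting $\underline\sigma^2,\underline\tau^2$, which yields $E[\bm 1^\top(\bm Y(z)-\bm Q(z)\underline{\bm\beta})]=0$. For $w=1/N$ the analogous identity is
\[
\frac{1}{N}=\sigma^2\frac{\kappa(N)}{N}+\tau^2\kappa(N).
\]
Its term $\kappa(N)/N$ is not spanned by $\{\kappa(N)/N,\ \kappa(N)\}$ weighted as $w\kappa N^b$, which are precisely $\kappa/N$ and $\kappa$. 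So the identity holds there as well.

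Dividing by $E[N]$, or using $w=1/N$, this gives
\[
\frac{E\bigl[w\,\bm 1^\top\bm Y(z)\bigr]}{E[wN]}=\frac{E\bigl[w\,\bm 1^\top\bm Q(z)\underline{\bm\beta}\bigr]}{E[wN]}=\underline\mu_z.
\]
The left side is $E[\bm 1^\top\bm Y(z)]/E[N]$ for $w=1$, and $E[N^{-1}\bm 1^\top\bm Y(z)]$ for $w=1/N$. This proves $\underline\mu_1-\underline\mu_0=\Delta_{iATE}$ and $\Delta_{cATE}$ respectively. No correctness of the mean model, the covariates or the functional form in $N$ is used.

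It remains to handle the nuisance variance components. Whatever their probability limits $(\underline\tau^2,\underline\sigma^2)$ are (possibly misspecified, but positive by the regularity conditions), the argument above applies with those limits. Finally, we compute the influence function for $\hat\mu_1-\hat\mu_0$ from the stacked system, which completes parts (a) and (b).
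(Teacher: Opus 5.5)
Your proposal is correct and is essentially the paper's argument. The score components for the four cluster-level columns $1$, $Z$, $ZN$, $N$, combined with A2, give orthogonality of the residuals to $w\kappa(N)$ and $w\kappa(N)N$; your identity $1=\underline\sigma^2\kappa(N)+\underline\tau^2 N\kappa(N)$ is the paper's $1=(1-\underline\rho)\kappa'(N)+\underline\rho N\kappa'(N)$, where $\kappa'(N)=\{1+(N-1)\underline\rho\}^{-1}$ is proportional to your $\kappa(N)$. The paper instead subtracts $\pi$ times the intercept and cluster-size equations from the $Z$ and $ZN$ equations and works with the contrast residual $\{\bm{Y}(1)-\bm{Y}(0)\}-\{\underline{\bm{\mu}}(1)-\underline{\bm{\mu}}(0)\}$. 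Your arm-by-arm split gives $\underline\mu_z=E[w\bm{1}_N^\top\bm{Y}(z)]/E[wN]$ for each $z$, which also covers ratio and odds-ratio contrasts $h$ directly. One slip: where you say $\kappa(N)/N$ is ``not spanned'', it should say that it is one of the spanning weights.
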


Unlike the basic linear mixed-effects model (Equation \ref{eq:ME_basic}), the cluster-size saturated linear mixed-effects model (including treatment $\times$ cluster-size interactions and cluster-size main effects; Equation \ref{eq:ME_sat}) with g-computation and appropriate weighting (CSLMM-g: $\hat{\Delta}_{CSLMM-g}$, CSLMMw-g: $\hat{\Delta}_{CSLMMw-g}$) yields consistent estimators for the iATE and cATE ($\Delta_{iATE}, \Delta_{cATE}$) in a P-CRT with ICS by relying only on Assumptions A1 (Super-population sampling) and A2 (Cluster randomization).
Importantly, these consistency results are robust to arbitrary model-misspecification, including to the covariate and correlation structure, and do not require the functional form of the treatment $\times$ cluster-size interactions and cluster-size main effects to be correctly specified (unlike previous suggestions \cite{hooper_wood_2026}).
An in-depth proof of Theorem \ref{Theorem_g} is included in the Supplementary Appendix (\ref{app:proofs}). We briefly outline the proof below to highlight that it is specifically the inclusion of both the treatment $\times$ cluster-size interaction and cluster-size main effect  that produces the robust consistency result.

\noindent \textit{Proof of Theorem \ref{Theorem_g}.}
Denote $\bm{\theta}=(\Delta_{CSLMM\bm{\lambda}-g}, \bm{\beta}, \tau^2, \sigma^2)^{\top} \in \mathbb{R}^{7+p}$ as the vector of unknown parameters to be estimated by M-estimation.
The appropriately weighted estimator $\Delta_{CSLMM\bm{\lambda}-g}$ can be set with $\lambda_i=N_i$ or $\lambda_i=1$ to produce the unweighted (CSLMM-g) or inverse cluster-size weighted (CSLMMw-g) estimators, which we will demonstrate to be consistent for the iATE ($\Delta_{iATE}$) or cATE ($\Delta_{cATE}$) in P-CRTs with potential ICS.
The estimators are then the solution to the estimating equations $\sum_{i=1}^m \bm{\psi}(\bm{O}_i, \bm{\theta})$, where
\[
\begin{split}
    \bm{\psi}(\bm{O}_i, \bm{\theta})
    & = 
        \left(
        \begin{gathered}
            \left(\frac{\sum_{s=1}^{m}\lambda_s}{m}\right) \Delta_{CSLMM\bm{\lambda}-g} - \frac{\lambda_i}{N_i} \bm{1}_{N_i}^\top(\bm{\mu}_i(1) - \bm{\mu}_i(0)) \\
            \frac{\lambda_i}{N_i}\bm{Q}_i^\top \bm{V}_i (\bm{Y}_i - \bm{Q}_i\bm{\beta}) \\
            -tr(\bm{V}_i) + (\bm{Y}_i - \bm{Q}_i\bm{\beta})^\top \bm{V}^2_i (\bm{Y}_i - \bm{Q}_i\bm{\beta}) \\
            -\bm{1}_{N_i}^\top \bm{V}_i \bm{1}_{N_i} + (\bm{Y}_i - \bm{Q}_i\bm{\beta})^\top \bm{V}_i ( \bm{1}_{N_i} \bm{1}_{N_i}^\top) \bm{V}_i  (\bm{Y}_i - \bm{Q}_i\bm{\beta})
        \end{gathered}
        \right)
\end{split}
\]
\sloppy
with
$\bm{\mu}_i(a) = \bm{Q}_i(a) \bm{\beta} = \left(\beta_{0}  + a \beta_Z + a N_i  \beta_{ZN} + N_i \beta_{N} + \bm{\beta}_X^\top \bm{X}_{ik}\right)_{k=1,...,N_i} \in \mathbb{R}^{N_i}$, $a\in\{0,1\}$,
$\bm{V}_i = \bm{\Sigma}_i^{-1} \in \mathbb{R}^{N_i \times N_i}$, and $tr(\bm{V}_i)$ being the trace of $\bm{V}_i$.
With $\rho=\frac{\tau^2}{\tau^2 + \sigma^2}$ being the ICC, we can then easily demonstrate that $\bm{V}_i = \left(\frac{1}{\tau^2 + \sigma^2}\right)\left(\frac{1}{1-\rho}\right)\left(\textbf{I}_{N_i} - \bm{1}_{N_i} \bm{1}_{N_i}^\top (\rho/[1+(N_i-1)\rho])\right)$.
The maximum likelihood estimator for $\bm{\theta}$ is defined as a solution to the estimating equation
\[
    \sum_{i=1}^{m} \bm{\psi}(\bm{O}_i;\bm{\theta})=0 \,.
\]
For the estimating equation $\bm{\psi}$, we prove the convergence and asymptotic normality of $\hat{\bm{\theta}}$ by applying Lemma \ref{lemma:variance}, with its conditions for $\bm{\psi}$ assumed as regularity conditions.

We then denote $\underline{\bm{\theta}}=(\underline{\Delta}_{CSLMM\bm{\lambda}-g}, \underline{\bm{\beta}}_0, \underline{\beta}_Z, \underline{\beta}_{ZN}, \underline{\beta}_{N}, \underline{\bm{\beta}}_{X}^\top, \underline{\tau}^2, \underline{\sigma^2})^{\top}$ as the solution to $E[\bm{\psi}(\bm{O};\bm{\theta})]=0$, and prove $\underline{\Delta}_{CSLMM\bm{\lambda}-g} = \Delta_{\bm{\lambda}ATE}$ ($\underline{\Delta}_{CSLMM-g} = \Delta_{iATE}$, $\underline{\Delta}_{CSLMMw-g} = \Delta_{cATE}$) to imply robust consistency of the appropriately weighted CSLMM$\bm{\lambda}$-g estimator.
To proceed, the second through fifth entries of $E[\bm{\psi}(\bm{O};\bm{\theta})]=0$ are
\begin{gather}
\label{eq:baseline}
    E\left[\frac{\lambda}{N} \bm{1}_N^\top \underline{\bm{V}}(\bm{Y} - \bm{Q}\underline{\bm{\beta}})\right] = 0 \\
\label{eq:treatment}
    E\left[\frac{\lambda}{N} I\{Z=1\}\bm{1}_N^\top \underline{\bm{V}}(\bm{Y} - \bm{Q}\underline{\bm{\beta}})\right] = 0 \\
\label{eq:treatment_N}
    E\left[\lambda I\{Z=1\} \bm{1}_N^\top \underline{\bm{V}}(\bm{Y} - \bm{Q}\underline{\bm{\beta}})\right] = 0 \\
\label{eq:N}
    E\left[\lambda \bm{1}_N^\top \underline{\bm{V}}(\bm{Y} - \bm{Q}\underline{\bm{\beta}})\right] = 0
\end{gather}
corresponding to the score functions for the model intercept $\beta_0$ (Equation \ref{eq:baseline}), treatment effect coefficient $\beta_Z$ (Equation \ref{eq:treatment}), treatment $\times$ cluster-size interaction $\beta_{ZN}$ (Equation \ref{eq:treatment_N}), and cluster-size main effect $\beta_N$ (Equation \ref{eq:N}), respectively. $\underline{\bm{V}}$ is equal to $\bm{V}$ with $(\tau^2, \sigma^2, \rho)$ replaced by $(\underline{\tau}^2, \underline{\sigma}^2, 
\underline{\rho})$.
Recall that subscript $i$ is omitted when taking the expectation with respect to distribution $\mathcal{P}$ with Assumption A1.
With Assumption A2, we can then use Equations (\ref{eq:baseline}-\ref{eq:N}) to derive the following equalities:
\begin{equation}
\label{eq.ee.ZN}
    E\left[\lambda \left(\frac{1}{1+(N-1)\underline{\rho}}\right) \bm{1}_N^\top \left\{\left[\bm{Y}(1)-\bm{Y}(0)\right] - \left[\underline{\bm{\mu}}(1) - \underline{\bm{\mu}}(0)]\right]\right\}\right] = 0
\end{equation}
and
\begin{equation}
\label{eq.ee.Z}
    E\left[\frac{\lambda}{N} \left(\frac{1}{1+(N-1)\underline{\rho}}\right) \bm{1}_N^\top \left\{\left[\bm{Y}(1)-\bm{Y}(0)\right] - \left[\underline{\bm{\mu}}(1) - \underline{\bm{\mu}}(0)]\right]\right\}\right] = 0 \,.
\end{equation}

In contrast to the basic linear mixed-effects model (Equation \ref{eq:ME_basic}) and models including only a cluster-size main effect, it is the presence of both the treatment $\times$ cluster-size interaction $\beta_{ZN}$ (Equation \ref{eq:treatment_N}) and cluster-size main effect $\beta_N$ (Equation \ref{eq:N})  in the CSLMM$\bm{\lambda}$-g estimator that produces Equation (\ref{eq.ee.ZN}), which alongside Equation (\ref{eq.ee.Z}), yields the final result:
\[
\begin{split}
   \underline{\Delta}_{CSLMM\bm{\lambda}-g} &= \frac{E[(\lambda/N)\bm{1}_N^\top \{\underline{\bm{\mu}}(1) - \underline{\bm{\mu}}(0)\}]}{E[\lambda]} \\
    &= \frac{E[(\lambda/N)\bm{1}_N^\top \{\bm{Y}(1) - \bm{Y}(0)\}]}{E[\lambda]} = \Delta_{\bm{\lambda}ATE} 
\end{split}
\]
completing the proof of consistency. $\square$

\subsection{Generalized consistency of CS-g}
\label{sect:iATE_cATE_proof-g}

We further establish robust inference for the iATE and cATE ($\Delta_{iATE}, \Delta_{cATE}$) via the appropriately weighted cluster-size saturated generalized estimating equation with g-computation (CSGEE$\bm{\lambda}$-g: CSGEE-g, CSGEEw-g), as described in (Equation \ref{eq:GEE_sat}).

\begin{Lemma}
\label{lemma:g}
    Assuming either (I) working independence $\rho=0$, (II) $g$ is the identity link with constant working variance $v(Y_{ik})=\sigma^2$, or (III) $\bm{X}_{ik}$ only contains cluster-level information,
    then $\bm{\mathcal{Z}}_i^{-1/2} \bm{R}_i^{-1} \bm{\mathcal{Z}}_i^{-1/2} = \bm{\mathcal{Z}}_i^{-1}\bm{R}_i^{-1}$
\end{Lemma}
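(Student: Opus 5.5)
The identity $\bm{\mathcal{Z}}_i^{-1/2} \bm{R}_i^{-1} \bm{\mathcal{Z}}_i^{-1/2} = \bm{\mathcal{Z}}_i^{-1}\bm{R}_i^{-1}$ holds automatically whenever $\bm{\mathcal{Z}}_i$ commutes with $\bm{R}_i^{-1}$. Indeed, $\bm{\mathcal{Z}}_i^{-1/2}$ is then a function of $\bm{\mathcal{Z}}_i$ (a diagonal matrix with positive entries), so it commutes with $\bm{R}_i^{-1}$ too. This gives $\bm{\mathcal{Z}}_i^{-1/2}\bm{R}_i^{-1}\bm{\mathcal{Z}}_i^{-1/2}=\bm{\mathcal{Z}}_i^{-1/2}\bm{\mathcal{Z}}_i^{-1/2}\bm{R}_i^{-1}=\bm{\mathcal{Z}}_i^{-1}\bm{R}_i^{-1}$. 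A sufficient condition for commutation is that either $\bm{R}_i^{-1}$ is a multiple of the identity, or $\bm{\mathcal{Z}}_i$ is a multiple of the identity. I will show that each of (I)--(III) delivers one of these two situations.

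Case (I): with $\rho=0$ we have $\bm{R}_i=\textbf{I}_{N_i}$, so $\bm{R}_i^{-1}=\textbf{I}_{N_i}$. Both sides then reduce to $\bm{\mathcal{Z}}_i^{-1}$ directly.

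Case (II): with the identity link and constant working variance $\sigma^2$, we have $\bm{\mathcal{Z}}_i=\sigma^2\textbf{I}_{N_i}$. Hence $\bm{\mathcal{Z}}_i^{-1/2}=\sigma^{-1}\textbf{I}_{N_i}$, and both sides equal $\sigma^{-2}\bm{R}_i^{-1}$.

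Case (III) is the only step needing care. The variance function is $v(\mu_{ik})$ evaluated at the working mean $\mu_{ik}=g^{-1}(\bm{Q}_{ik}\bm{\beta})$, which is how $\bm{\mathcal{Z}}_i$ is computed in (\ref{eq:GEE}). I will argue that if $\bm{X}_{ik}$ contains only cluster-level information, i.e.\ $\bm{X}_{ik}=\bm{X}_{i1}$ for all $k$, then every row of $\bm{Q}_i$ is identical. This is because the other components $1$, $I\{Z_i=1\}$, $I\{Z_i=1\}N_i$ and $N_i$ are already cluster-constant. Consequently $\mu_{ik}\equiv\mu_{i}$ is constant in $k$, and $\bm{\mathcal{Z}}_i=v(\mu_i)\textbf{I}_{N_i}$ is a scalar multiple of the identity. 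The same argument as in case (II) then applies with $\sigma^2$ replaced by $v(\mu_i)>0$.

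The main obstacle is interpretive rather than technical. The paper writes $v(Y_{ik})$, and I need the variance to be a function of the model mean rather than of the raw outcome; otherwise case (III) could fail. I will state this reading explicitly, consistent with standard GEE practice.

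The remaining detail is that the exchangeable $\bm{R}_i$ need not be inverted explicitly. The scalar-multiple argument requires only that $\bm{R}_i$ be invertible, which holds for $\rho\in(-1/(N_i-1),1)$. If desired, its explicit inverse, of the same form as $\bm{V}_i$ in the proof of Theorem \ref{Theorem_g}, can be quoted to confirm this.
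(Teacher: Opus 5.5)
Your proposal is correct and follows essentially the same route as the paper: reduce the identity to commutation of $\bm{\mathcal{Z}}_i^{-1/2}$ with $\bm{R}_i^{-1}$, then note that (I) makes $\bm{R}_i$ the identity, while (II) and (III) make $\bm{\mathcal{Z}}_i$ a scalar multiple of the identity. In (III) this holds because every component of $\bm{Q}_{ik}$, and hence the working mean and its variance, is constant in $k$; your explicit reading of $v(\cdot)$ as evaluated at the working mean matches the paper's implicit use of it as a function of $Z_i$, $N_i$, and $\bm{X}_{ik}$.
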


Lemma \ref{lemma:g} has been similarly invoked in previous work \cite{wang_how_2024,lee_fixed-effects_2026} and allows the proof for Theorem \ref{Theorem_GEE_g} to proceed in a similar manner to the proof for Theorem \ref{Theorem_g}.
The complete proofs for Lemma \ref{lemma:g} and Theorem \ref{Theorem_GEE_g} are included in the Supplementary Appendix (\ref{app:proofs}).

\begin{Theorem}
\label{Theorem_GEE_g}
    Under standard regularity conditions in Lemma \ref{lemma:variance}, assume Assumptions A1 and A2 alongside supplementary assumptions specified in Lemma \ref{lemma:g}: (I) working independence $\rho=0$, (II) $g$ is the identity link with constant working variance $v(Y_{ik})=\sigma^2$, or (III) $\bm{X}_{ik}$ only contains cluster-level information,
    or alternatively (IV) the mean model (\ref{eq:GEE_sat}) is correctly specified.
    Then the following Central Limit Theorems hold for a P-CRT: (a) $\hat{V}_{CSGEE-g}^{-1/2} m^{1/2} (\hat{\Delta}_{CSGEE-g} - 
    \Delta_{iATE}
    ) \xrightarrow{d} N(0,1)$;  and (b) $\hat{V}_{CSGEEw-g}^{-1/2} m^{1/2} (\hat{\Delta}_{CSGEEw-g} - 
    \Delta_{cATE}
    ) \xrightarrow{d} N(0,1)$.
\end{Theorem}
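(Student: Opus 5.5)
The plan is to follow the M-estimation template used for Theorem \ref{Theorem_g}. Stack the parameter vector $\bm{\theta}=(\Delta_{CSGEE\bm{\lambda}-g},\bm{\beta},\rho,\phi)^\top$ and write the estimating function $\bm{\psi}(\bm{O}_i,\bm{\theta})$ in blocks:
\begin{itemize}
\item a g-computation row, $\big(m^{-1}\sum_s\lambda_s\big)\Delta-(\lambda_i/N_i)\,\bm{1}_{N_i}^\top\{\bm{\mu}_i(1)-\bm{\mu}_i(0)\}$ (for a general contrast $h$, two such rows, one per arm, followed by $h$);
\item the weighted GEE score $(\lambda_i/N_i)\,\bm{U}_i^\top\bm{\mathcal{Z}}_i^{-1/2}\bm{R}_i^{-1}\bm{\mathcal{Z}}_i^{-1/2}(\bm{Y}_i-\bm{\mu}_i)$;
\item the moment equations for $\rho$ and any dispersion parameter.
\end{itemize}
Under the regularity conditions, Lemma \ref{lemma:variance} gives $\hat{\bm{\theta}}\xrightarrow{P}\underline{\bm{\theta}}$, asymptotic normality, and consistency of the sandwich (or jackknife) variance. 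The delta method then yields the studentized CLTs in (a) and (b). What remains is to show $\underline{\Delta}_{CSGEE\bm{\lambda}-g}=\Delta_{\bm{\lambda}ATE}$ under each of (I)--(IV).

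\textbf{Cases (I)--(III).} The key simplification comes from the canonical link: $\bm{U}_i=\bm{\mathcal{Z}}_i\bm{Q}_i$ up to a dispersion constant. Lemma \ref{lemma:g} gives $\bm{\mathcal{Z}}_i^{-1/2}\bm{R}_i^{-1}\bm{\mathcal{Z}}_i^{-1/2}=\bm{\mathcal{Z}}_i^{-1}\bm{R}_i^{-1}$, so the score collapses to $(\lambda_i/N_i)\bm{Q}_i^\top\bm{R}_i^{-1}(\bm{Y}_i-\bm{\mu}_i)$. The exchangeable inverse satisfies $\bm{1}_N^\top\bm{R}^{-1}=(1-\rho)^{-1}\{1-N\rho/[1+(N-1)\rho]\}\bm{1}_N^\top=[1+(N-1)\rho]^{-1}\bm{1}_N^\top$. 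Since the first four columns of $\bm{Q}_i$ are multiples of $\bm{1}_{N_i}$ (namely $1$, $Z$, $ZN$, $N$), the corresponding population score equations become
\[
E\Big[c(N)\,\tfrac{\lambda}{N}\,\bm{1}_N^\top(\bm{Y}-\underline{\bm{\mu}})\Big]=0,\qquad c(N)\in\{1,\,Z,\,ZN,\,N\}\times[1+(N-1)\underline{\rho}]^{-1}.
\]
These have exactly the structure of (\ref{eq:baseline})--(\ref{eq:N}).

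Next I use A2. Split each expectation by $Z$ and replace $\bm{Y}$ with $\bm{Y}(z)$ on $\{Z=z\}$. Because $Z$ is independent of $(N,\bm{Y}(z),\bm{X})$, the $Z$ and $ZN$ equations give, for $f(N)\in\{1,N\}$,
\[
E\Big[f(N)\tfrac{\lambda}{N}[1+(N-1)\underline\rho]^{-1}\bm{1}^\top\{\bm{Y}(1)-\underline{\bm{\mu}}(1)\}\Big]=0 .
\]
Subtracting these from the intercept and $N$ equations gives the same statement with arm $0$.

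The main obstacle is converting these $\rho$-weighted moment conditions into the unweighted identity $E[(\lambda/N)\bm{1}^\top\{\bm{Y}(a)-\underline{\bm{\mu}}(a)\}]=0$ that g-computation requires. The observation is that the span of $\{1,N\}$ times $[1+(N-1)\rho]^{-1}$ contains $1$, because $1+(N-1)\rho=(1-\rho)\cdot1+\rho N$. Taking the combination $(1-\underline\rho)\cdot(\text{equation with }f=1)+\underline\rho\cdot(\text{equation with }f=N)$ therefore cancels the working-correlation weight exactly, separately for each arm. Here the saturated main effect and interaction are both indispensable, and no correctness of the functional form in $N$ is needed. Averaging then gives $\underline{\Delta}=E[(\lambda/N)\bm{1}^\top\{\underline{\bm{\mu}}(1)-\underline{\bm{\mu}}(0)\}]/E[\lambda]=\Delta_{\bm{\lambda}ATE}$. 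With $\lambda=N$ this is the iATE and with $\lambda=1$ the cATE. For general $h$, the arm-specific identities give each argument of $h$ separately.

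\textbf{Case (IV).} Without Lemma \ref{lemma:g} the weights depend on $\mu_{ik}$ and the cancellation above fails. Instead I argue directly from correct specification. If $E[\bm{Y}_i\mid\bm{Q}_i]=g^{-1}(\bm{Q}_i\bm{\beta}^*)$, then $\bm{\beta}^*$ solves the population score for any working $\bm{R}$, by iterated expectations given $(Z,N,\bm{X})$. Uniqueness in condition 1 of Lemma \ref{lemma:variance} then forces $\underline{\bm\beta}=\bm\beta^*$. Finally, A2 gives $E[\bm{Y}(a)\mid N,\bm{X}]=\underline{\bm\mu}(a)$, which yields the same identity for $\underline{\Delta}$.
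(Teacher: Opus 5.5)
Your proposal is correct. For cases (I)--(III) it follows the paper's argument. The paper likewise collapses the score to $\bm{Q}_i^\top\bm{R}_i^{-1}(\bm{Y}_i-\bm{\mu}_i)$ via the canonical link and Lemma~\ref{lemma:g}, and uses $\bm{1}_N^\top\bm{R}^{-1}=[1+(N-1)\rho]^{-1}\bm{1}_N^\top$. By deferring to the linear proof of Theorem~\ref{Theorem_g}, it then recombines the two $\underline{\rho}$-weighted identities through $1+(N-1)\underline{\rho}=(1-\underline{\rho})+\underline{\rho}N$.

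The one difference in these cases is how randomization is exploited. The paper subtracts $\pi$ times the main-effect (resp.\ intercept) equation from the interaction (resp.\ treatment) equation. It then writes $\bm{Y}=Z\{\bm{Y}(1)-\bm{Y}(0)\}+\bm{Y}(0)$ and removes the $\bm{Y}(0)$ term by independence. This yields identities only for the contrast $\{\bm{Y}(1)-\bm{Y}(0)\}-\{\underline{\bm{\mu}}(1)-\underline{\bm{\mu}}(0)\}$. You instead note that the $Z$ and $ZN$ equations involve only $\bm{Y}(1)-\underline{\bm{\mu}}(1)$, factor out $\pi$ by independence, and subtract them from the intercept and $N$ equations. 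That gives $E[(\lambda/N)\bm{1}_N^\top\{\bm{Y}(a)-\underline{\bm{\mu}}(a)\}]=0$ for each arm separately. This route is more direct and strictly stronger. It is what nonlinear contrasts $h$ (risk and odds ratios) actually require, whereas the paper only asserts those by citation. It also mirrors the finite-sample arm-wise differencing the paper itself uses in proving Theorem~\ref{Theorem_g=MRS}.

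For case (IV) the paper gives no separate argument; its ``extends exactly as in the linear setting'' rests on Lemma~\ref{lemma:g}, which covers only (I)--(III). Your argument fills this gap: the population score vanishes at $\bm{\beta}^*$ for any working weights, uniqueness forces $\underline{\bm{\beta}}=\bm{\beta}^*$, and A2 turns $E[\bm{Y}\mid Z=a,N,\bm{X}]$ into $E[\bm{Y}(a)\mid N,\bm{X}]$. Note two points:
\begin{itemize}
\item It needs the mean model in the vector form (\ref{eq:GEE_sat}), conditional on the whole cluster's covariate matrix, because the working weights mix individuals within a cluster.
\item Under (IV) it is correct specification, not saturation, that does the work.
\end{itemize}

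One small technicality is shared with the paper. The g-computation row containing $m^{-1}\sum_s\lambda_s$ is not of the i.i.d.\ form $\bm{\psi}(\bm{O}_i,\bm{\theta})$ required by Lemma~\ref{lemma:variance}. You can repair this by stacking an extra row $\lambda_i-\kappa$ with $\kappa=E[\lambda]$ and using $\kappa\Delta$ in place of $(m^{-1}\sum_s\lambda_s)\Delta$.
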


Unlike the basic GEE with an exchangeable working correlation, a cluster-size saturated GEE (with treatment $\times$ cluster-size interactions and cluster-size main effects; Equation \ref{eq:GEE_sat}) with g-computation, and appropriate weighting ($\hat{\Delta}_{CSGEE-g}$, $\hat{\Delta}_{CSGEEw-g}$) yield consistent estimators for the iATE and cATE ($\Delta_{iATE}, \Delta_{cATE}$) in a P-CRT with ICS by relying only on Assumptions A1 (Super-population sampling), A2 (Cluster randomization),
and supplementary assumptions (II) or (III).
Theorem \ref{Theorem_g} can be interpreted as a specific instance of Theorem \ref{Theorem_GEE_g} under supplementary assumption (II).
Under supplementary assumptions (I)-(III), these consistency results do not require the functional form of the treatment $\times$ cluster-size interactions and cluster-size main effects to be correctly specified, and are robust to other arbitrary model-misspecification, including to the covariate and correlation structure.

As per Wang et al. \cite{wang_model-robust_2024} (and extensions \cite{wang_how_2024,lee_fixed-effects_2026}), the consistency results described here generally extend to alternative effect measures, including marginal risk ratio and odds ratio estimands.
This is easily achieved by simply changing the contrast $h$ of the g-computation elements to match that of the target estimand (Section \ref{sect:estimands}).

\section{Exact finite-sample equivalency to model-robust standardization}
\label{sect:g=MRS}

Notably, we can prove that the described consistent CS-g estimators (Theorems \ref{Theorem_g} \& \ref{Theorem_GEE_g}) are equivalent to the corresponding estimators from cluster-size saturated models with model-robust standardization (CS-MRS).
Broadly, the appropriately weighted CSGEE$\bm{\lambda}$-g estimator (Section \ref{sect:CS-g}) can be re-written as
\begin{equation}
\label{eq:csg}
\widehat\Delta_{CSGEE\bm{\lambda}-g}
=h\left\{ \sum_{i=1}^{m}\frac{\lambda_i}{\sum_{s=1}^m\lambda_s} \widehat{\bar\mu}_i(1), \sum_{i=1}^{m}\frac{\lambda_i}{\sum_{s=1}^m\lambda_s} \widehat{\bar\mu}_i(0) \right\} \,,
\end{equation}
with contrasts $h$ (Section \ref{sect:estimands}), $\widehat{\bar\mu}_i(a)=N_i^{-1}\bm{1}_{N_i}^\top\hat{\bm{\mu}}_i(a)$ being the cluster average of the g-computation predictions under intervention level $a\in\{0,1\}$,
and $\hat{\bm{\mu}}_i(a)=g^{-1}\left(\bm{Q}_i(a)\widehat{\bm{\beta}}\right)$ where all units in cluster $i$ are set to $Z_i=a$ (Section \ref{sect:g-comp}).
Recall that this returns $\widehat\Delta_{CSGEE-g}$ when $\lambda_i=N_i$, and $\widehat\Delta_{CSGEEw-g}$ when $\lambda_i=1$.

The appropriately weighted cluster-size saturated GEE can alternatively be implemented within the model-robust standardization described by Li et al. \cite{li_model-robust_2025} (CSGEE$\bm{\lambda}$-MRS: CSGEE-MRS, CSGEEw-MRS).
The appropriately weighted CSGEE$\bm{\lambda}$-MRS estimator can then be broadly specified as
\begin{equation}
\label{eq:CSGEE-MRS}
\widehat\Delta_{CSGEE\bm{\lambda}-MRS}
= h\left\{ \widehat{\mu}_{CSGEE\bm{\lambda}-MRS}(1),  \widehat{\mu}_{CSGEE\bm{\lambda}-MRS}(0) \right\}
\end{equation}
where
\begin{equation}
\label{eq:CSGEE-MRS_A}
\widehat{\mu}_{CSGEE\bm{\lambda}-MRS}(a) = \sum_{i=1}^{m} \frac{\lambda_i}{\sum_{s=1}^m\lambda_s}
\Biggl\{
\widehat{\bar\mu}_i(a)
+\underbrace{\frac{I\{Z_i=a\}\bigl(\bar Y_i-\widehat{\bar\mu}_i(a)\bigr)}
{\pi^{a}(1-\pi)^{1-a}}}_{\text{weighted cluster-level residual}}
\Biggr\} \,.
\end{equation}
Here, $\bar Y_i=N_i^{-1}\bm{1}_{N_i}^\top\bm{Y}_i$ is the observed cluster average outcome incorporated into the additional weighted cluster-level residual term.
We specify constant cluster randomization probability $\pi\in(0,1)$ (Assumption A2; e.g., $\pi=1/2$ under 1:1 cluster randomization).

\begin{Theorem}
\label{Theorem_g=MRS}
    Under standard regularity conditions in Lemma \ref{lemma:variance}, assume Assumptions A1 and A2 alongside supplementary assumptions specified in Lemma \ref{lemma:g}: (I) working independence $\rho=0$, (II) $g$ is the identity link with constant working variance $v(Y_{ik})=\sigma^2$, or (III) $\bm{X}_{ik}$ only contains cluster-level information,
    or alternatively (IV) the mean model (\ref{eq:GEE_sat}) is correctly specified.
    Then the following estimators have exact finite-sample equivalence: $\hat{\Delta}_{CSGEE-g} = \hat{\Delta}_{CSGEE-MRS}$ and $\hat{\Delta}_{CSGEEw-g} = \hat{\Delta}_{CSGEEw-MRS}$.
\end{Theorem}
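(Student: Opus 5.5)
The plan is to show that, for each $a\in\{0,1\}$, the weighted cluster-level residual term in (\ref{eq:CSGEE-MRS_A}) vanishes identically at the fitted $\widehat{\bm\beta}$. In other words, we aim to prove
\[
\sum_{i=1}^m \frac{\lambda_i}{N_i} I\{Z_i=a\}\,\bm{1}_{N_i}^\top\bigl(\bm{Y}_i-\hat{\bm\mu}_i\bigr)=0 .
\]
Two facts reduce the statement to this identity. First, $\lambda_i\bigl(\bar Y_i-\widehat{\bar\mu}_i(a)\bigr)=(\lambda_i/N_i)\bm 1_{N_i}^\top(\bm Y_i-\hat{\bm\mu}_i(a))$. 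Second, when $Z_i=a$ we have $\bm Q_i(a)=\bm Q_i$ and hence $\hat{\bm\mu}_i(a)=\hat{\bm\mu}_i$. Once the identity holds, $\widehat\mu_{CSGEE\bm\lambda-MRS}(a)=\sum_i \lambda_i\widehat{\bar\mu}_i(a)/\sum_s\lambda_s$, and applying the same $h$ to both estimators gives $\hat\Delta_{CSGEE\bm\lambda-g}=\hat\Delta_{CSGEE\bm\lambda-MRS}$ for $\lambda_i=N_i$ and $\lambda_i=1$. Note that $\pi$ enters only as a constant multiplier of a quantity that is zero, so its value is irrelevant.

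To obtain the identity, I would work with the finite-sample (weighted) estimating equations (\ref{eq:GEE}) at $\widehat{\bm\beta}$, with the weight $\lambda_i/N_i$ included as in the proof of Theorem \ref{Theorem_g}. Under (I)--(III), Lemma \ref{lemma:g} lets us write the working weight as $\bm{\mathcal Z}_i^{-1}\bm R_i^{-1}$. For a canonical link, $\bm U_i=\bm{\mathcal Z}_i\bm Q_i$, so the score becomes $\sum_i(\lambda_i/N_i)\bm Q_i^\top\bm R_i^{-1}(\bm Y_i-\hat{\bm\mu}_i)=0$. For exchangeable $\bm R_i$,
\[
\bm 1_{N_i}^\top\bm R_i^{-1}=c_i\bm 1_{N_i}^\top,\qquad c_i=\frac{1}{(1-\hat\rho)}\Bigl(1-\frac{N_i\hat\rho}{1+(N_i-1)\hat\rho}\Bigr)=\frac{1}{1+(N_i-1)\hat\rho}.
\]
The columns of $\bm Q_i$ for $\beta_0,\beta_Z,\beta_{ZN},\beta_N$ are $\bm1$, $Z_i\bm1$, $Z_iN_i\bm1$ and $N_i\bm1$, so these four rows of the score give
\[
\sum_i \frac{\lambda_i}{N_i}\,c_i\,f(Z_i,N_i)\,\bm 1^\top(\bm Y_i-\hat{\bm\mu}_i)=0
\quad\text{for } f\in\{1,\;Z,\;ZN,\;N\}.
\]
Taking differences gives the corresponding equations for $f\in\{1-Z,\;(1-Z)N\}$.

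The main obstacle is that $c_i$ depends on $N_i$ while the target identity carries no such factor. This is exactly where cluster-size saturation is used. Since $1/c_i=(1-\hat\rho)+\hat\rho N_i$ is affine in $N_i$, we can write, for each $a$,
\[
I\{Z_i=a\}=c_i\,I\{Z_i=a\}\bigl[(1-\hat\rho)+\hat\rho N_i\bigr],
\]
which is a linear combination of the arm-specific functions $I\{Z=a\}$ and $I\{Z=a\}N$ multiplied by $c_i$. Summing with the displayed score equations then yields the target identity, mirroring how (\ref{eq.ee.ZN})--(\ref{eq.ee.Z}) were derived at the population level.

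Under (IV), or whenever Lemma \ref{lemma:g} does not apply, the column structure of $\bm U_i$ no longer collapses as cleanly. In that case I would fall back on the observation that $\bm 1^\top\bm{\mathcal Z}_i^{-1/2}\bm R_i^{-1}\bm{\mathcal Z}_i^{-1/2}$ cannot be rewritten via the cluster-level columns. I expect this case to require either restricting to (I)--(III) for the exact algebra, or checking carefully what the authors intend by finite-sample equivalence there. Handling this honestly is the delicate point of the proof.
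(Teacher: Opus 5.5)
Your proposal is correct and is essentially the paper's own proof. The paper likewise reduces the MRS correction to $\sum_{i}(\lambda_i/N_i)I\{Z_i=a\}\bm{1}_{N_i}^\top(\bm{Y}_i-\hat{\bm{\mu}}_i(a))=0$, rewrites the intercept, treatment, interaction and cluster-size score rows via $\bm{1}_{N_i}^\top\hat{\bm{R}}_i^{-1}=\{1+(N_i-1)\hat\rho\}^{-1}\bm{1}_{N_i}^\top$, differences them to get the control-arm versions, and splits $1+(N_i-1)\hat\rho$ into $\lambda_i/N_i$-weighted and $\lambda_i$-weighted pieces exactly as your $(1-\hat\rho)+\hat\rho N_i$ decomposition does; you are, if anything, more explicit in noting $\hat{\bm{\mu}}_i(a)=\hat{\bm{\mu}}_i$ when $Z_i=a$.

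Your caution about (IV) is justified and is not a defect of your argument. The paper's proof also only uses the simplified score $(\lambda_i/N_i)\bm{Q}_i^\top\bm{R}_i^{-1}(\bm{Y}_i-\bm{\mu}_i)$, which requires (I)--(III) through Lemma \ref{lemma:g}. Without those conditions, write the cluster-level score weights as $\bm{1}_{N_i}^\top\hat{\bm{\mathcal{Z}}}_i^{1/2}\hat{\bm{R}}_i^{-1}\hat{\bm{\mathcal{Z}}}_i^{-1/2}=\{1+(N_i-1)\hat\rho\}^{-1}\bm{1}_{N_i}^\top+\hat{\bm{\delta}}_i^\top$. Your same decomposition then gives $\sum_i(\lambda_i/N_i)I\{Z_i=a\}\bm{1}_{N_i}^\top(\bm{Y}_i-\hat{\bm{\mu}}_i)=-\sum_i(\lambda_i/N_i)I\{Z_i=a\}\{(1-\hat\rho)+\hat\rho N_i\}\hat{\bm{\delta}}_i^\top(\bm{Y}_i-\hat{\bm{\mu}}_i)$, which is generally nonzero. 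Correct specification is a population-level condition and cannot make this vanish identically. So under (IV) alone you only get a common probability limit (both estimators consistent), not exact finite-sample equality.
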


The complete proofs for Theorem \ref{Theorem_g=MRS} are included in the Supplementary Appendix (\ref{app:proofs}).
Unlike the basic GEE, the consistent cluster-size saturated GEE yields equivalent g-computation and MRS estimators by canceling out the additional MRS weighted cluster-level residual.
Usefully, the proposed consistent CS-g estimators and their jackknife variances can then be easily implemented within the previously developed \texttt{MRStdCRT} package in \texttt{R} \cite{li_model-robust_2025}.

\section{Simulation}
\label{sect:sim}

We describe some simulation studies to empirically illustrate our theoretical findings.
We simulate 1000 replicates of P-CRTs for each of the following simulation scenarios.
Simulation scenarios 1 and 2 generate continuous and binary outcomes, respectively, in P-CRTs with ICS and $m=40$ clusters.
Separately, we also replicate some simulation scenarios described in Li et al. \cite{li_model-robust_2025} with ICS, $m=30$ clusters, and continuous or binary outcomes.
Full simulation results are included in the Supplementary Appendix (\ref{app:extended_sim}).

Simulation scenarios with continuous outcomes are analyzed with the appropriately weighted 
linear independence estimating equations (IEE$\bm{\lambda}$: IEE, IEEw; equivalent here to a linear models using ordinary least squares), 
linear mixed-effects models (LMM$\bm{\lambda}$: LMM, LMMw; Equation \ref{eq:ME_basic}), 
cluster-size main effect linear mixed-effects models (CMLMM$\bm{\lambda}$: CMLMM, CMLMMw; Equation \ref{eq:ME_basic} + a continuous cluster-size main effect),
and
linear mixed-effects models with model-robust standardization (LMM$\bm{\lambda}$-MRS: LMM-MRS, LMMw-MRS) \cite{li_model-robust_2025}.
We compare these results against the appropriately weighted 
cluster-size saturated linear mixed-effects model with g-computation (CSLMM$\bm{\lambda}$-g: CSLMM-g, CSLMMw-g).

Simulation scenarios with binary outcomes are analyzed with the appropriately weighted logit-link IEE$\bm{\lambda}$, generalized linear mixed-effects models with a cluster random effect (GLMM$\bm{\lambda}$: GLMM, GLMMw), GEEs (GEE$\bm{\lambda}$: GEE, GEEw; with an exchangeable working correlation structure),
cluster-size main effect GEEs (CMGEE$\bm{\lambda}$: CMGEE, CMGEEw),
and GEEs with model-robust standardization (GEE$\bm{\lambda}$-MRS: GEE-MRS, GEEw-MRS) \cite{li_model-robust_2025}.
We compare these results against the appropriately weighted cluster-size saturated GEEs with g-computation (CSGEE$\bm{\lambda}$-g: CSGEE-g, CSGEEw-g) and cluster-size saturated generalized linear mixed-effects models with g-computation (CSGLMM$\bm{\lambda}$-g: CSGLMM-g, CSGLMMw-g).

With these estimators, we report the percent relative bias, empirical variance, average ``leave-one-cluster-out'' jackknife variance estimates, and determine the coverage probability of the 95\% confidence intervals with the $m-2$ degrees of freedom $t$-distribution \cite{ford_maintaining_2020}.
The jackknife variance estimator is implemented as described in Bell \& McCaffrey \cite{bell_bias_2002}.
Specifically, the jackknife variance estimator benefits from easily generalizable implementation across different statistical software.
Finally, we also run a test for informative cluster size between the unweighted and inverse cluster-size weighted estimators, as outlined in Li et al. \cite{li_model-robust_2025}.
This ICS test statistic is the ratio of the difference between the unweighted and inverse cluster-size weighted proposed estimator over the jackknife standard error of the difference, and approximately takes on a $t$-distribution which we specify with $m-2$ degrees of freedom.

\subsection{Simulation Scenario 1}
\label{sect:sim_scenario_1}

In simulation scenario 1, we sample $m=40$ clusters with clusters emerging from subpopulations $u \in \{A,B,C,D\}$, where $P(u)=0.25 \, \forall \, u\in\{A,B,C,D\}$.
Cluster sizes $N_{i,u}$, corresponding to subpopulation $u$, are then $N_{i,A} \sim Poisson(20)$, $N_{i,B} \sim Poisson(60)$, $N_{i,C} \sim Poisson(100)$, and $N_{i,D} \sim Poisson(120)$.
The P-CRT data is then generated with random 1:1 cluster assignment to treatment, ICS.I, and ICS.II, as
\[
Y_{ik,u} = \beta_{0,u} + I\{Z_i=1\}  \beta_{Z,u} + I\{X_{1,u} = 1\}\beta_{X_1,u} + sin(X_{2,u})\beta_{X_2,u} + \alpha_{i,u} + \epsilon_{ik} \,,
\]
where subpopulation-specific intercepts are $\beta_{0,u} \in \{\beta_{0,A}, \beta_{0,B},\beta_{0,C},\beta_{0,D}\} = \{2, 3, 4,5\}$.
Subpopulation-specific treatment effects are $\beta_{Z,u} \in \{\beta_{Z,A}, \beta_{Z,B},\beta_{Z,C},\beta_{Z,D}\} = \{0.5, 1,5,3\}$,
which targets $iATE = (0.5*20 + 1*60 + 5*100 + 3*120)/(20+60+100+120) = 3.1$ and $cATE = (0.5 + 1 + 5 + 3)/4 = 2.375$.
Subpopulation covariates are generated as 
$X_{1,u} \sim Bernoulli(p_{1,u})$, where $p_{1,u} \in \{p_{1,A}, p_{1,B}, p_{1,C}, p_{1,D}\} = \{0.2, 0.4, 0.6, 0.8\}$, $\beta_{X_1,u} \in \{\beta_{X_1,A}, \beta_{X_1,B}, \beta_{X_1,C}, \beta_{X_1,D}\} = \{0.2, 0.4, 0.5, 0.3\}$,
and
$X_{2,u} \sim Poisson(p_{2,u})$, where $p_{2,u} \in \{p_{2,A}, p_{2,B}, p_{2,C}, p_{2,D}\} = \{15, 50, 100, 200\}$, $\beta_{X_2,u} \in \{\beta_{X_2,A}, \beta_{X_2,B}, \beta_{X_2,C},\beta_{X_2,D}\} = \{1.2, 0.8, 0.6, 0.4\}$.
Finally, cluster random intercepts and residuals are generated as $\alpha_{i} \sim N(0,\tau^2)$ and $\epsilon_{ik} \sim N(0,\sigma^2)$, given ICC $\rho=0.05$, residual variance $\sigma^2=1$, and cluster variance $\tau^2=\rho\sigma^2/(1-\rho)$.

\subsection{Simulation Scenario 2}
\label{sect:sim_scenario_2}

In simulation scenario 2, we sample $m=40$ clusters where cluster subpopulations $u$ and cluster sizes $N_{i,u}$ are specified as in Scenario 1.
The P-CRT data is then generated with random 1:1 cluster assignment to treatment, ICS.I, and ICS.II, as
\[
logit(E[Y_{ik,u}|\alpha_{i,u}]) = \beta_{0,u} + I\{Z_i=1\}  \beta_{Z,u} + I\{X_{1,u} = 1\}\beta_{X_1,u} + \alpha_{i,u} \,,
\]
where subpopulation-specific intercepts are $\beta_{0,u} \in \{\beta_{0,A}, \beta_{0,B},\beta_{0,C},\beta_{0,D}\} = \{0.2, 0.3, 0.4, 0.5\}$.
Subpopulation-specific treatment effects are $\beta_{Z,u} \in\{\beta_{Z,A}, \beta_{Z,B},\beta_{Z,C},\beta_{Z,D}\} = \{0.2, 0.8 , 2, 1.2\}$.
Subpopulation covariates are generated as 
$X_{1,u} \sim Bernoulli(p_{1,u})$, where $p_{1,u} \in \{p_{1,A}, p_{1,B}, p_{1,C}, p_{1,D}\} = \{0.2, 0.4, 0.6, 0.8\}$, $\beta_{X_1,u} \in \{\beta_{X_1,A}, \beta_{X_1,B}, \beta_{X_1,C}, \beta_{X_1,D}\} = \{0.2, 0.4, 0.5, 0.3\}$.
Finally, cluster random intercepts are generated as $\alpha_{i} \sim N(0,\tau^2 = 0.5)$.
Altogether, this data-generating process targets the following marginal odds ratio estimands of $iAOR = 3.18$ and $cAOR = 2.33$ (Section \ref{sect:estimands}).

\subsection{Additional Simulation Scenarios}
\label{sect:sim_scenario_Li}

We additionally replicate simulation scenarios as described in Li et al. \cite{li_model-robust_2025} with ICS, $m=30$ clusters, and continuous or binary outcomes (Supplementary Appendix \ref{app:extended_sim}).
These simulation scenarios specified complex cluster-level and individual-level covariates relationships in the underlying data-generating processes.
In the replicate of Li et al.'s binary outcome simulation \cite{li_model-robust_2025}, we additionally fit CSGEE$\bm{\lambda}$-g with adjustment for only cluster-level covariates and all covariates (including individual level covariates) to empirically test violations of supplementary assumption (III) (Theorem \ref{Theorem_GEE_g}).

\subsection{Simulation results}

Figure \ref{fig:scenario_SWCRT_cont} includes the results from P-CRT simulation scenario 1 with continuous outcomes, ICS.I, and ICS.II (Section \ref{sect:sim_scenario_1}).
As expected \cite{wang_two_2022}, the appropriately weighted LMM$\bm{\lambda}$ and CMLMM$\bm{\lambda}$ produced biased results for the iATE and cATE.
Meanwhile, the appropriately weighted IEE$\bm{\lambda}$ \cite{wang_two_2022} and LMM$\bm{\lambda}$-MRS \cite{li_model-robust_2025} produced unbiased results for the iATE and cATE.
The proposed appropriately weighted CSLMM$\bm{\lambda}$-g produced unbiased results that were considerably more efficient than the previously validated IEE$\bm{\lambda}$ and LMM$\bm{\lambda}$-MRS approaches.

\begin{figure}[h]
    \centering
    \includegraphics[width=0.8\linewidth]{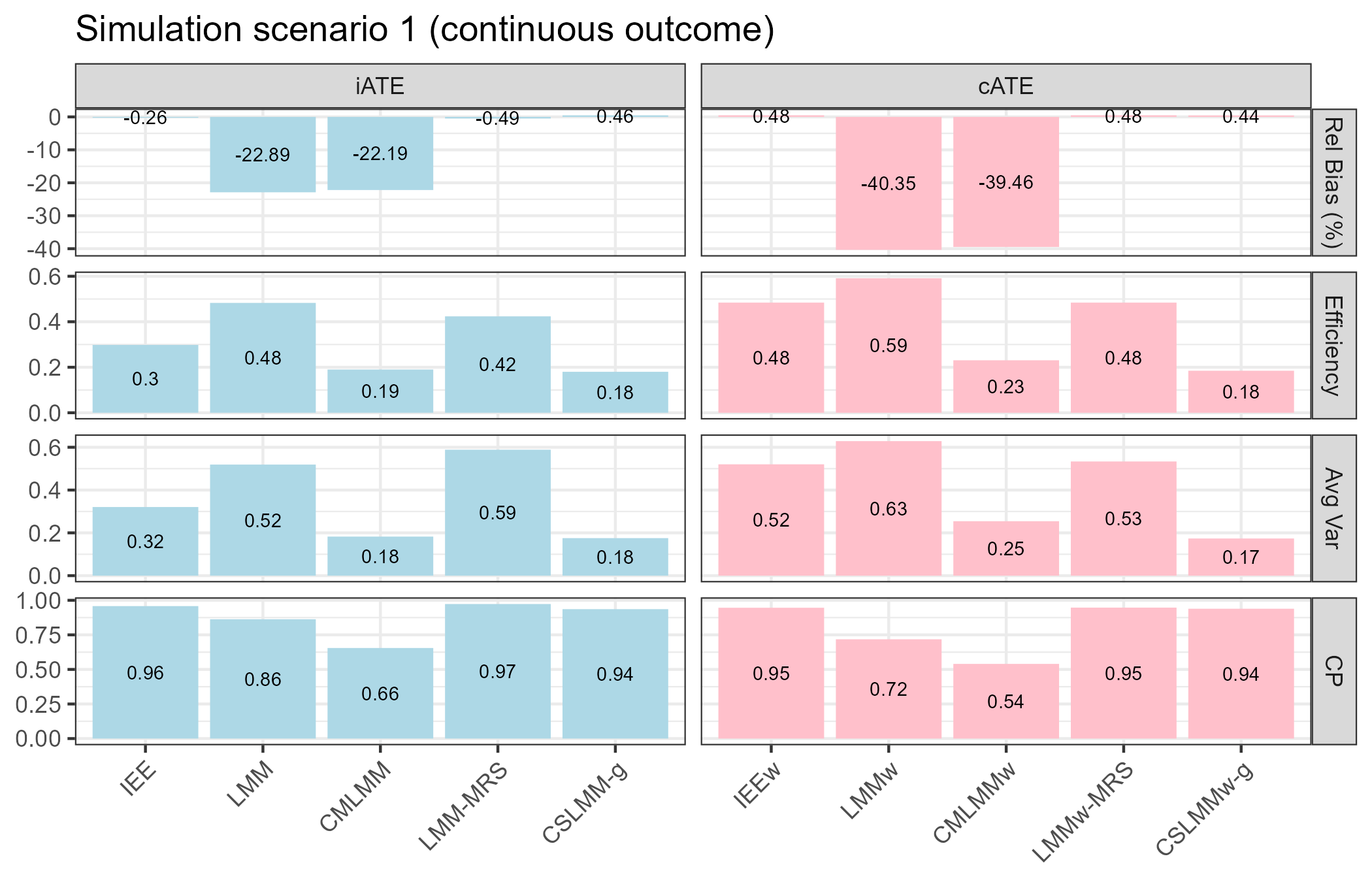}
    \caption{
        Analysis results from simulation scenario 1 of a P-CRT with continuous outcomes, targeting the individual and cluster-average treatment effects (iATE, cATE). 
        Results are reported for (i.) relative bias (\%), (ii.) empirical variance (``Efficiency''), (iii.) average jackknife variance estimates (``Avg Var''), and (iv.) coverage probabilities using the jackknife variance with $t(m-2)$
        (``CP'').
    }
    \label{fig:scenario_SWCRT_cont}
\end{figure}

Figure \ref{fig:scenario_SWCRT_bin} includes the results from P-CRT simulation scenario 2 with binary outcomes, ICS.I, and ICS.II (Section \ref{sect:sim_scenario_2}).
As expected \cite{wang_two_2022}, the appropriately weighted GLMM$\bm{\lambda}$, GEE$\bm{\lambda}$, and CMGEE$\bm{\lambda}$ produced biased results for their correspondingly weighted estimands (iAOR, cAOR). Meanwhile, the appropriately weighted IEE$\bm{\lambda}$ \cite{wang_two_2022} and GEE$\bm{\lambda}$-MRS \cite{li_model-robust_2025} produced unbiased results for their correspondingly weighted estimands.
The proposed appropriately weighted CSGEE$\bm{\lambda}$-g, alongside the CSGLMM$\bm{\lambda}$-g, produced unbiased results that were considerably more efficient than the previously validated IEE$\bm{\lambda}$ and GEE$\bm{\lambda}$-MRS approaches.

\begin{figure}[h]
    \centering
    \includegraphics[width=0.8\linewidth]{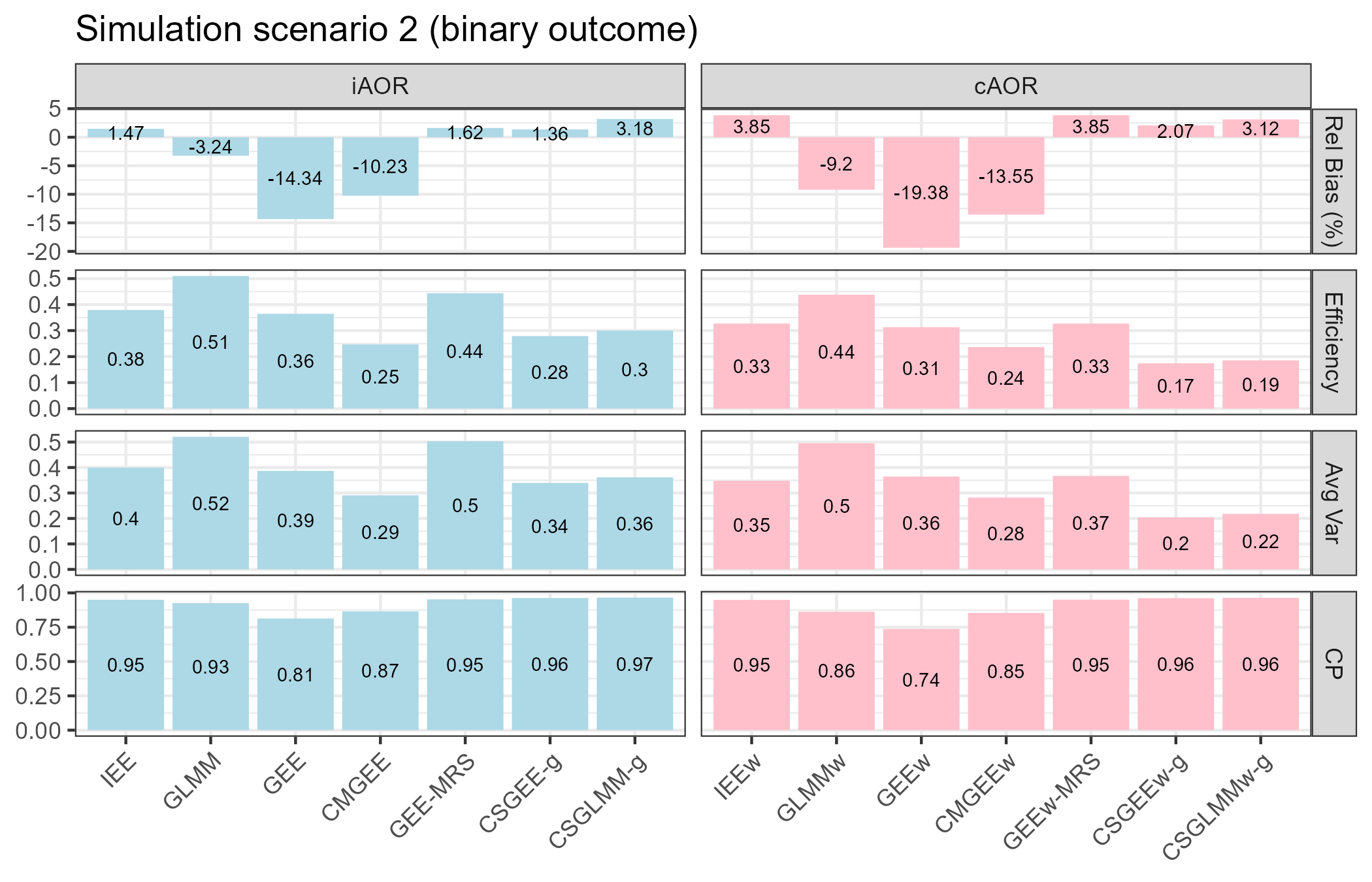}
    \caption{
        Analysis results from simulation scenario 2 of a P-CRT with binary outcomes, targeting the individual and cluster-average odds ratios (iAOR, cAOR). 
        Results are reported for (i.) relative bias (\%), (ii.) empirical variance (``Efficiency''), (iii.) average jackknife variance estimates (``Avg Var''), and (iv.) coverage probabilities using the jackknife variance with $t(m-2)$
        (``CP'').
    }
    \label{fig:scenario_SWCRT_bin}
\end{figure}

The leave-one-cluster-out jackknife variance generally approximated the empirical variance and yielded nominal coverage probabilities for unbiased estimators (Figures \ref{fig:scenario_SWCRT_cont} \& \ref{fig:scenario_SWCRT_bin}).
Unexpectedly, in 2 out of the 1000 simulation replicates, the CSGEEw-g in simulation scenario 2 was observed to yield jackknife variances that failed to converge (and $\rightarrow \infty$), with those 2 replicates being excluded from the reported results (Figure \ref{fig:scenario_SWCRT_bin}).

Overall, the results of our simulations with ICS.I and ICS.II validate our theoretical derivations (Section \ref{sect:proof}) and demonstrate the unbiasedness of the appropriately weighted CSLMM$\bm{\lambda}$-g for the iATE and cATE, and CSGEE$\bm{\lambda}$-g for the iAOR and cAOR.
The simulation results also empirically demonstrate minimal bias with the appropriately weighted CSGLMM$\bm{\lambda}$-g for the iAOR and cAOR; although the corresponding CSGEE$\bm{\lambda}$-g tended to have better performance in terms of bias and efficiency.
In contrast to the bias exhibited by CMLMM$\bm{\lambda}$ and CMGEE$\bm{\lambda}$, we empirically highlight that the unbiasedness of CSLMM$\bm{\lambda}$-g and CSGEE$\bm{\lambda}$-g requires both the treatment $\times$ cluster-size interaction term and cluster-size main effect in the specified model (Theorems \ref{Theorem_g} \& \ref{Theorem_GEE_g}).
Notably, these CS-g estimators were observed to be more efficient than the existing consistent approaches: IEE$\bm{\lambda}$ \cite{wang_two_2022}, LMM$\bm{\lambda}$-MRS, and GEE$\bm{\lambda}$-MRS \cite{li_model-robust_2025}.
As a result, the CS-g estimators also yielded higher average test-statistics and were more powerful for detecting ICS across simulation scenarios 1 and 2 (Table \ref{tab:ICS_test}).

\begin{table}[H]
\caption{
    Power and average values for the test of informative cluster size between the unweighted and inverse cluster-size weighted estimators.
    Results are reported for simulation scenarios 1 \& 2 with $m=40$ clusters across 1000 simulation replicates.
}
\label{tab:ICS_test}
\begin{center}
\bgroup
\def\arraystretch{1.3}
{
\begin{tabular}{|c c c|} 
    \hline
    \textbf{Estimator} & \textbf{Average Test Statistic} & \textbf{Power} \\
    \hline\hline
    \multicolumn{3}{|c|}{Scenario 1 (continuous)} \\
    \hline\hline
    IEE vs IEEw & -1.981 & 0.483\\
    \hdashline
    LMM-MRS vs LMMw-MRS & -1.592 & 0.245\\
    \hdashline
    CSLMM-g vs CSLMMw-g & -3.572 & 0.985\\
    \hline\hline
    \multicolumn{3}{|c|}{Scenario 2 (binary)} \\
    \hline\hline 
    IEE vs IEEw & -2.200 & 0.619\\
    \hdashline
    GEE-MRS vs GEEw-MRS & -1.790 & 0.375 \\
    \hdashline
    CSGLMM-g vs CSGLMMw-g & -2.444 & 0.727 \\
    \hdashline
    CSGEE-g vs CSGEEw-g & -2.433 & 0.728 \\
    \hline
\end{tabular}
}
\egroup
\end{center}
\end{table}

In the replication of Li et al.'s \cite{li_model-robust_2025} continuous outcome simulation, the unbiased results, efficiency gains, and greater power for detecting ICS with the appropriately weighted CSLMM$\bm{\lambda}$-g over existing consistent approaches are similarly observed (Supplementary Appendix \ref{app:extended_sim}).
Likewise, in the replication of Li et al.'s \cite{li_model-robust_2025} binary outcome simulation, the unbiased results, efficiency gains, and greater power for detecting ICS with the appropriately weighted CSGEE$\bm{\lambda}$-g and CSGLMM$\bm{\lambda}$-g over existing consistent approaches are also observed (Supplementary Appendix \ref{app:extended_sim}).
Additional adjustment for only cluster-level covariates in CSGEE$\bm{\lambda}$-g (maintaining Theorem \ref{Theorem_GEE_g}) yielded unbiased results with notable improvements in efficiency despite misspecification of the cluster-level covariate structure (Supplementary Appendix \ref{app:extended_sim}).
Surprisingly, the CSGEE$\bm{\lambda}$-g with adjustment for both cluster and individual-level covariates can still return minimally biased results in this specific binary data-generating process, despite violating all supplementary assumptions in Theorem \ref{Theorem_GEE_g} (Supplementary Appendix \ref{app:extended_sim}).
More discussion characterizing the bias when including individual-level covariates in CSGEE$\bm{\lambda}$-g is included in the Supplementary Appendix (\ref{app:extended_sim}).

Altogether, these simulation results support the consistent CS-g estimators as simpler, unbiased, and more efficient alternatives to existing approaches for targeting interpretable individual and cluster-level estimands in P-CRTs with ICS.

\section{Case study re-analysis}
\label{sect:case}

As in Li et al. \cite{li_model-robust_2025}, we re-analyzed the Pain Program of Active Coping and Training (PPACT) P-CRT \cite{debar_primary_2022}.
This data was collected from a mixed-methods, pragmatic P-CRT conducted at Kaiser Permanente, examining an interdisciplinary cognitive behavioral therapy approach for chronic pain patients in contrast to the usual care of long-term opioid treatment.
105 primary care providers served as the clusters, with 1:1 cluster randomization to either the PPACT intervention or usual care.
Data was collected from over 700 patients with cluster sizes ranging from 2 to 12 individuals.
Patients were followed for 12 months, with measurements taken every 3 months. We only focused on the final measurements at 12 months.
The target outcome for this re-analysis was the PEGS (Pain, Enjoyment, General Activity) score, a composite of pain intensity and interference, with values ranging from 0 to 40 which we treated as continuous.

We define the target estimand as the iATE and cATE on the PEGS score at 12 months.
We targeted these estimands with appropriately weighted independence estimating equations (IEE, IEEw), linear mixed-effects models (LMM, LMMw), linear mixed-effects models with model-robust standardization (LMM-MRS, LMMw-MRS), and cluster-size saturated linear mixed-effects models with g-computation (CSLMM-g, CSLMMw-g).
The point estimator, jackknife variance, and corresponding 95\% confidence intervals are included in Table \ref{tab:case_study_estimators}.
Furthermore, the test statistics for detecting ICS and corresponding p-values for each of the described estimators are included in Table \ref{tab:case_study_comparisons}.

\begin{table}[H]
\centering
\caption{Results from a re-analysis of the PPACT P-CRT are reported in terms of the point estimates, jackknife variance, and 95\% confidence intervals.}
\label{tab:case_study_estimators}
\begin{tabular}{llrrr}
\toprule
 & Estimator & Estimate & Var & 95\% CI \\
\midrule
 & IEE       & -0.633 & 0.035 & (-1.004, -0.262) \\
 & LMM       & -0.651 & 0.036 & (-1.026, -0.276) \\
 & LMM-MRS   & -0.632 & 0.035 & (-1.005, -0.260) \\
 & CSLMM-g   & -0.647 & 0.035 & (-1.020, -0.273) \\
\midrule
 & IEEw      & -0.702 & 0.041 & (-1.103, -0.301) \\
 & LMMw      & -0.726 & 0.046 & (-1.150, -0.301) \\
 & LMMw-MRS  & -0.702 & 0.041 & (-1.105, -0.299) \\
 & CSLMMw-g  & -0.728 & 0.041 & (-1.128, -0.328) \\
\bottomrule
\end{tabular}
\end{table}

\begin{table}[H]
\centering
\caption{The values of the test-statistic for detecting ICS and p-values with the different modeling approaches from a re-analysis of the PPACT P-CRT.}
\label{tab:case_study_comparisons}
\begin{tabular}{lrr}
\toprule
Comparison & Test statistic & p-value \\
\midrule
IEEw vs.\ IEE             & -0.929 & 0.355 \\
LMMw vs.\ LMM             & -0.889 & 0.376 \\
LMMw-MRS vs.\ LMM-MRS     & -0.922 & 0.359 \\
CSLMMw-g vs.\ CSLMM-g & -1.060 & 0.291 \\
\bottomrule
\end{tabular}
\end{table}

While the unweighted and weighted analyses appeared to have slight differences (Table \ref{tab:case_study_estimators}) our re-analyses failed to detect ICS in the PPACT trial (Table \ref{tab:case_study_comparisons}).
Our proposed CSLMM-g and CSLMMw-g approaches largely corresponded with the other analysis results and were similarly efficient.
Still, the test statistic for detecting ICS was larger in magnitude and produced smaller p-values when using the CSLMM-g and CSLMMw-g models, compared to the other consistent approaches. This corresponds with the simulation results that we reported earlier.
Extended re-analysis results are included in the Supplementary Appendix \ref{app:extended_case}.

\section{Discussion}
\label{sect:discussion}

In this work, we propose ``cluster-size saturated models with g-computation'' (CS-g) as a simple parametric strategy that allows the familiar linear mixed-effects model and GEE with an exchangeable working correlation to consistently target the individual-average and cluster-average treatment effects (iATE, cATE) in parallel cluster randomized trials (P-CRTs), while explicitly allowing for informative cluster sizes (ICS).
The approach requires only two simple modifications to standard practice: (1.) augmenting the working model with a cluster-size $\times$ treatment interaction and cluster-size main effect, and (2.) applying g-computation to recover an interpretable marginal estimand.

We rigorously establish the model-robustness of the appropriately weighted cluster-size saturated linear mixed-effects model with g-computation (CSLMM$\bm{\lambda}$-g: CSLMM-g, CSLMMw-g) and the more general cluster-size saturated GEE with g-computation (CSGEE$\bm{\lambda}$-g: CSGEE-g, CSGEEw-g) to arbitrary misspecification of the correlation and covariate structure, including misspecification of the functional form of the cluster-size $\times$ treatment interaction and cluster-size main effect (Theorems \ref{Theorem_g} \& \ref{Theorem_GEE_g}).
These consistency results rely only on a super-population sampling assumption (A1) and cluster randomization (A2), and importantly do not invoke a non-informative cluster size assumption.
Specifically, it is the inclusion of both the cluster-size $\times$ treatment interaction and cluster-size main effect in the cluster-size saturated models that yields this model-robust consistency, which we prove for Theorems \ref{Theorem_g} \& \ref{Theorem_GEE_g} and empirically observe in the simulation study results.
We also broadly demonstrate that under the outlined conditions, the appropriately weighted CSGEE$\bm{\lambda}$-g estimators have exact finite-sample equivalence to the appropriately weighted CSGEE$\bm{\lambda}$-MRS estimators (Theorem \ref{Theorem_g=MRS}).
Our simulations corroborated these results and additionally demonstrated that the logit-link CSGLMM$\bm{\lambda}$-g can also empirically recover the corresponding marginal odds ratio estimands (iAOR, cAOR).
Across all our simulation scenarios, the CS-g estimators were unbiased and generally more efficient in the presence of ICS than the existing consistent alternatives, namely the independence estimating equation \cite{wang_two_2022} and basic models with model-robust standardization (MRS) \cite{li_model-robust_2025}.

A practical appeal of the proposed approach is that it preserves the modeling framework that many investigators already prefer.
The basic linear mixed-effects model and GEE with an exchangeable working correlation remain widely used in CRTs despite their potential issues with ICS \cite{wang_two_2022}.
The CS-g approach retains the mixed-effects model or GEE while restoring consistent estimation of the iATE and cATE in P-CRTs with ICS.
Moreover, these estimators are readily implemented with standard statistical software for fitting mixed-effects models and GEEs (e.g., \texttt{lme4} \& \texttt{geepack} packages in \texttt{R}), with g-computation carried out through routine post-estimation prediction (e.g., \texttt{predict()}; Supplementary Appendix \ref{app:example_code}) and inference obtained through a ``leave-one-cluster-out'' jackknife variance estimator.
Furthermore, Theorem \ref{Theorem_g=MRS} implies easy implementation of the described consistent CS-g estimators via the previously developed \texttt{MRStdCRT} package in \texttt{R} \cite{li_model-robust_2025}.
This positions the CS-g approach as a  specific incidence of the general MRS procedure \cite{li_model-robust_2025}. However, CS-g is computationally simpler than MRS, requiring only the simplest g-computation framework to achieve robust consistency, and does not require the more complicated additional weighted cluster-level residual term employed by MRS (Section \ref{sect:g=MRS}). 
We summarize some qualitative comparisons between the proposed CS-g approaches against existing basic models with MRS in Table \ref{tab:summary}.

\begin{table}[htbp]
\centering
\caption{
Qualitative comparisons of the proposed CS-g estimators (CSLMM$\bm{\lambda}$-g, CSGEE$\bm{\lambda}$-g, CSGLMM$\bm{\lambda}$-g)
against estimators derived from existing basic models with model-robust standardization (LMM$\bm{\lambda}$-MRS, GEE$\bm{\lambda}$-MRS, GLMM$\bm{\lambda}$-MRS).
Models are compared in terms of robust consistency, empirical unbiasedness, empirical efficiency, generality, and theoretical simplicity.
}
\label{tab:summary}
\begin{tabular}{|r|c|c|c|c|c|c|}
\hline
& CSLMM$\bm{\lambda}$-g & CSGEE$\bm{\lambda}$-g & CSGLMM$\bm{\lambda}$-g & LMM$\bm{\lambda}$-MRS & GEE$\bm{\lambda}$-MRS  & GLMM$\bm{\lambda}$-MRS \\
\hline
Consistency & \checkmark & \checkmark & & \checkmark & \checkmark & \checkmark \\
\hdashline
Emp. Unbiased & \checkmark & \checkmark & \checkmark & \checkmark & \checkmark & \checkmark\\
\hdashline
Emp. Efficiency & \checkmark & \checkmark & \checkmark & & & \\
\hdashline
Generality & & & & \checkmark & \checkmark & \checkmark \\
\hdashline
Simplicity & \checkmark & \checkmark & \checkmark & & & \\
\hline
\end{tabular}
\end{table}

Conceptually, our results offer a model-robust perspective to CS-g which had been previously suggested for addressing ICS in mixed-effects models \cite{hooper_wood_2026}.
Whereas that prior work emphasized careful specification of the treatment $\times$ cluster-size relationship to model the ICS \cite{hooper_wood_2026}, our derivations show that it is simply the inclusion of the saturated cluster-size terms, rather than their correct functional form, that confers consistency. Therefore, the working cluster-size $\times$ treatment interaction may be arbitrarily misspecified and the resulting g-computation estimators will remain consistent for the iATE and cATE.
This reframes ICS as a feature that is absorbed by saturation rather than one that must be precisely modeled, and clarifies that researchers do not need to commit to a particular parametric form for the cluster-size dependence in order to obtain valid marginal effects.

We observed notable efficiency gains when using the proposed CS-g estimators in our different simulation scenarios.
A downstream benefit of this improved efficiency is more powerful inference for detecting ICS using the test that contrasts the unweighted and inverse cluster-size weighted estimators \cite{li_model-robust_2025}. The CS-g estimators generally attained higher average test statistics and greater power than the corresponding independence estimating equation and basic models with MRS comparisons (Table \ref{tab:ICS_test}). This was also reflected in the case study re-analysis (Table \ref{tab:case_study_comparisons}).
Additional adjustment for other cluster-level covariates can then further improve precision (Supplementary Appendix \ref{app:extended_sim}) while maintaining the derived robust consistency results (Theorem \ref{Theorem_GEE_g}).

Some caveats apply to the binary outcome setting.
Unlike the CSLMM$\bm{\lambda}$-g estimator, the consistency of the CSGEE$\bm{\lambda}$-g estimator relies on one of several supplementary assumptions: (I) a working independence correlation, (II) an identity link with constant working variance, (III) a working mean model restricted to cluster-level covariates, or (IV) a correctly specified mean model (Theorem \ref{Theorem_GEE_g}).
The practical implication of supplementary assumption (II) is that individual-level covariate adjustment under a misspecified exchangeable working correlation with a non-identity link can compromise consistency.
However, we empirically observe that in some data-generating processes, adjustment for individual-level covariates can still yield minimally biased results (Supplementary Appendix \ref{app:extended_sim}).
Still, we do not recommend adjusting for individual-level covariates for non-identity link GEEs with exchangeable working correlation due to the aforementioned theoretically demonstrated lack of model-robustness (Theorem \ref{Theorem_GEE_g}).
Finally, although the CSGLMM$\bm{\lambda}$-g estimators performed well empirically across our simulations, we did not analytically establish their consistency; the marginalization of a non-identity-link generalized linear mixed-effects model over the random effects does not admit the same closed-form algebraic simplifications exploited in the linear and GEE proofs, and a formal derivation is left to future work.

For inference, we favored the ``leave-one-cluster-out'' jackknife variance estimator for its generalizable implementation across statistical software \cite{bell_bias_2002, ford_maintaining_2020}.
Furthermore, with Theorem \ref{Theorem_g=MRS}, we can conveniently use the \texttt{MRStdCRT} package in \texttt{R} to automatically implement these jackknife variance estimators \cite{li_model-robust_2025}.
While Lemma \ref{lemma:variance} establishes the consistency of the cluster-robust sandwich variance estimator, this estimator is known to be biased and to under-cover with small numbers of clusters, even under a $t$-distribution with Satterthwaite degrees of freedom; the jackknife and bias-reduced linearization estimators were accordingly developed to address this \cite{bell_bias_2002}.
In very rare instances, the jackknife variance for the CSGEEw-g failed to converge and the affected replicates were excluded from the simulation results; such instability is consistent with documented sensitivities of the GEE with an exchangeable working correlation in small samples.
We note that some of our replicated analyses may deviate slightly from those originally reported in Li et al. \cite{li_model-robust_2025}.
These differences are partly attributed to our use of a slightly different jackknife variance estimator, which forms the jackknife residuals as differences of the replicate estimates from the full-data point estimate rather than from the average of the replicate estimates; both variants were outlined by Bell \& McCaffrey \cite{bell_bias_2002} and are available in the \texttt{MRStdCRT} package in \texttt{R}.
Additionally, we constructed confidence intervals and p-values using $t$-distributions with a more conservative $m-2$ degrees of freedom suggested by Ford \& Westgate \cite{ford_maintaining_2020}, rather than the $m-1$ degrees of freedom used in Li et al. \cite{li_model-robust_2025}; in practice this distinction is slight and any differences diminish as the number of clusters $m$ increases.

This work has some limitations that motivate future directions.
Our proofs assume that the entire cluster source population is enrolled, whereas accommodating within-cluster sampling may broaden applicability. Still, the described proofs should be easily extendable to accommodate informative enrollment.
Separately, the interpretation of the ICC estimate from the proposed cluster-size saturated models may differ from those commonly reported from the basic linear mixed-effects models and GEEs due to the inclusion of the additional cluster-size terms. In general, ICC estimation in the presence of ICS is still not well understood and should be clarified in future work.
Extensions to the stepped-wedge, crossover, and other multi-period designs are also natural next steps.
In these multi-period CRT designs, more forms of informative sizes may also arise \cite{lee_what_2025}.
Optimistically, the simple parametric nature of our solution in P-CRTs may imply a simple extension to these multi-period CRT designs.

Altogether, our results support cluster-size saturated mixed-effects models and GEEs with g-computation as simple, unbiased, and often more efficient alternatives to existing consistent approaches for targeting interpretable individual and cluster-level estimands in cluster randomized trials with informative cluster sizes.

\section*{acknowledgements}
Research in this article was supported by a Patient-Centered Outcomes Research Institute Award\textsuperscript{\textregistered} (PCORI\textsuperscript{\textregistered} Award ME-2022C2-27676).
The statements presented in this article are solely the responsibility of the authors and do not necessarily represent the official views of the National Institutes of Health, PCORI\textsuperscript{\textregistered}, its Board of Governors, or the Methodology Committee.

\section*{conflict of interest}
The authors declare no potential conflicts of interest with respect to the research, authorship, and/or publication of this article

\section*{Data Availability Statement}
Data sharing is not applicable to this article as no new data were created or analyzed in this study. R codes for the simulations will be deposited to figshare by Wiley.


\printendnotes

\bibliographystyle{wileyNJD-AMA}

\bibliography{references}

\newpage
\appendix

\section{Example code for implementing CS-g}
\label{app:example_code}

We outline some example code illustrating how to implement CS-g estimators in \texttt{R} using standard packages: \texttt{lme4} and \texttt{geepack}. The example code below also uses \texttt{dplyr}, although it's not required in practice.
For the CSLMMw, we rely on the equivalence between a linear mixed-effects model and a corresponding GEE. Cluster size and inverse cluster sizes are included as continuous variable: ``clusterSize'' and ``clusterSizeInverse'', respectively.
\begin{enumerate}
    \item Specify the model as a cluster-size saturated mixed-effects model or GEE for continuous outcomes:
    \begin{itemize}
        \item \textbf{CSLMM:} 
        \begin{itemize}
           \item model <- lme4::lmer(outcome $\sim$ treatment*clusterSize + (1|cluster), data)
        \end{itemize}
        \item \textbf{CSLMMw:}
        \begin{itemize}
            \item modelw <- geepack::geeglm(outcome $\sim$ treatment*clusterSize, id = cluster, family = gaussian(link = ``identity''), corstr = "exchangeable", weights=clusterSizeInverse, data)
        \end{itemize}
    \end{itemize}
    or for binary outcomes:
    \begin{itemize}
        \item \textbf{CSGEE:}
        \begin{itemize}
            \item model <- geepack::geeglm(outcome $\sim$ treatment*clusterSize, id = cluster, binomial(link = ``logit''), corstr = "exchangeable", data)
        \end{itemize}
        \item \textbf{CSGEEw:}
        \begin{itemize}
            \item modelw <- geepack::geeglm(outcome $\sim$ treatment*clusterSize, id = cluster, binomial(link = ``logit''), corstr = "exchangeable", weights=clusterSizeInverse, data)
        \end{itemize}
        \end{itemize}
    \item Implement g-computation:
    \begin{itemize}
        \item First, create corresponding datasets where all individuals are under treatment or control:
        \begin{itemize}
            \item data1 <- data \%>\% mutate(treatment = 1)
            \item data0 <- data \%>\% mutate(treatment = 0)
        \end{itemize}
        \item Predict outcomes under treatment or control with unweighted and weighted models
        \begin{itemize}
            \item predOutcome1 <- predict(model, data1, type=``response'')
            \item predOutcome0 <- predict(model, data0, type=``response'')
            \item predOutcome1w <- predict(modelw, data1, type=``response'')
            \item predOutcome0w <- predict(modelw, data0, type=``response'')
        \end{itemize}
    \end{itemize}
    \item Summarize into the desired effect measure:
    \begin{itemize}
        \item Estimate the iATE \& cATE:
        \begin{itemize}
            \item iATE <- mean(predOutcome1 - predOutcome0)
            \item cATE <- weighted.mean(predOutcome1w - predOutcome0w, data\$clusterSizeInverse)
        \end{itemize}
        \item Or estimate the iAOR or cAOR:
        \begin{itemize}
            \item iAOR <- (mean(predOutcome1)/(1-mean(predOutcome1))) / (mean(predOutcome0)/(1-mean(predOutcome0)))
            \item cAOR <- (weighted.mean(predOutcome1, data\$clusterSizeInverse) / (1-weighted.mean(predOutcome1, data\$clusterSizeInverse))) / (weighted.mean(predOutcome0, data\$clusterSizeInverse) / (1-weighted.mean(predOutcome0, data\$clusterSizeInverse)))
        \end{itemize}
    \end{itemize}
\end{enumerate}

Subsequent calculation of the ``leave-one-cluster-out'' jackknife variance then proceeds as described in Bell \& McCaffrey \cite{bell_bias_2002}. This process iterates through the above code and removes a different cluster in each re-sample.

Calculation of the test-statistic for detecting ICS is easily implemented by extending the above code and implementing a corresponding jackknife variance estimator, following its description in Li et al. \cite{li_model-robust_2025}.

With our demonstrated exact finite sample equivalence between the consistent cluster-size saturated models with g-computation (CS-g) and cluster-size saturated models with model-robust standardization (CS-MRS) (Section \ref{sect:g=MRS}). 
The described consistent CS-g estimators, the different forms of jackknife variance estimators, and the test-statistic for detecting ICS can therefore be easily implemented within the \texttt{MRStdCRT} package in \texttt{R}.

\section{Extended Proofs to Lemmas \ref{lemma:variance} - \ref{lemma:g} \& Theorems \ref{Theorem_g} - \ref{Theorem_g=MRS}}
\label{app:proofs}

Proofs for Lemmas \ref{lemma:variance} - \ref{lemma:g} \& Theorems \ref{Theorem_g} - \ref{Theorem_g=MRS} are included in the order that they appear in the main manuscript.

\subsection{Proof of Lemma \ref{lemma:variance}}
The  three stated regularity conditions in Lemma \ref{lemma:variance} are standard assumptions in an M-estimation framework \cite{tsiatis_semiparametric_2006,van_der_vaart_asymptotic_1998,ross_m-estimation_2024} to ensure the estimating function $\bm{\psi}$ (equivalent to the likelihood score function) is well-behaved to prove the asymptotic results.
Of note, condition (1) does not imply any component of the working model is correctly specified. Instead, it solely requires the uniqueness of maxima in the maximum likelihood or maximum quasi-likelihood estimation.
This can be achieved by carefully designing $\bm{\psi}$ and restricting the parameter space to rule out degenerative solutions.
As a specific example, if $\bm{\psi}$ is the estimating equation for linear regression (based on ordinary least squares), then condition (1) is equivalent to the invertibility of the covariance matrix of covariates.

We use the notation $o_p(1)$ to denote a sequence of random vectors that conveges to zero in probability, and $O_p(1)$ to denote a sequence that is bounded in probability, as per Section 2.2 of \cite{van_der_vaart_asymptotic_1998}.
An asymptotically linear estimator $\hat{\bm{\theta}}$, excluding nuisance parameters (as described in Section 3 of \cite{tsiatis_semiparametric_2006}), can be uniquely characterized by its influence function, as demonstrated in Theorem 3.1 of \cite{tsiatis_semiparametric_2006}. The influence function captures the influence of the $i$th observation on $\hat{\bm{\theta}}$ and is denoted as $\text{IF}(\bm{O}_i,\underline{\bm{\theta}}) = -\left( E\left[\frac{d\bm{\psi}(\bm{O},\bm{\theta})}{d\bm{\theta}^{\top}} \mid_{\bm{\theta}=\underline{\bm{\theta}}}\right]^{-1} \bm{\psi}(\bm{O},\underline{\bm{\theta}}) \right)$, as per Equation 3.6 of \cite{tsiatis_semiparametric_2006}.

The proof proceeds by using ``classical conditions'' for asymptotic normality of M-estimators, as per Section 5.6, and largely follows Theorem 5.41, of \cite{van_der_vaart_asymptotic_1998}.
By condition (2) for $\bm{\psi}$, Example 19.8 of \cite{van_der_vaart_asymptotic_1998} implies that $\{\bm{\psi}(\bm{O},\bm{\theta}) : \bm{\theta} \in \bm{\Theta}\}$ is P-Glivenko-Cantelli (Glivenko-Cantelli in probability).
Then with condition (1), Theorem 5.9 of \cite{van_der_vaart_asymptotic_1998} shows that $\hat{\bm{\theta}} \xrightarrow{P} \underline{\bm{\theta}}$.
Next, we apply Theorem 5.41 of \cite{van_der_vaart_asymptotic_1998} to obtain asymptotic normality, for which our assumptions on $\bm{\psi}$ ensure all conditions needed are satisfied. Then we have
\begin{equation}
    m^{1/2}(\hat{\bm{\theta}} - \underline{\bm{\theta}}) = m^{-1/2} \left(\sum_{i=1}^{m} \text{IF}(\bm{O}_i, \underline{\bm{\theta}})\right) + o_p(1) \,,
\end{equation}
as per Theorem 5.41 of \cite{van_der_vaart_asymptotic_1998} and equation 3.1 of \cite{tsiatis_semiparametric_2006}, which implies the desired asymptotic normality by the Central Limit Theorem.
Specifically, by the central limit theorem, $m^{1/2} \left(\sum_{i=1}^{m} \text{IF}(\bm{O}_i, \underline{\bm{\theta}})\right) \xrightarrow{D} N(0, E[\text{IF}(\bm{O},\underline{\bm{\theta}}) \text{IF}(\bm{O},\underline{\bm{\theta}})^\top])$ and by Slutsky's theorem,
\begin{equation}
    m^{1/2}(\hat{\bm{\theta}} - \underline{\bm{\theta}}) \xrightarrow{D} N\left( 0, E[\text{IF}(\bm{O},\underline{\bm{\theta}}) \text{IF}(\bm{O},\underline{\bm{\theta}})^\top] \right)
\end{equation}
hence $\textbf{V} = E[\text{IF}(\bm{O},\underline{\bm{\theta}}) \text{IF}(\bm{O},\underline{\bm{\theta}})^\top]$, as per Equation 3.7 of \cite{tsiatis_semiparametric_2006}.

We next prove the consistency of the sandwich variance estimator. First, we prove that
\[
    m^{-1}\sum_{i=1}^{m} \frac{d\bm{\psi}(\bm{O}_i,\bm{\theta})}{d\bm{\theta}^{\top}} \mid_{\bm{\theta}=\hat{\bm{\theta}}}
\xrightarrow{P} E\left[\frac{d\bm{\psi}(\bm{O},\bm{\theta})}{d\bm{\theta}^{\top}} \mid_{\bm{\theta}=\underline{\bm{\theta}}}\right] \,.
\]
Denoting $\dot{\bm{\psi}}_{ik}(\hat{\bm{\theta}})$ as the transpose of the $k$th row of $\frac{d\bm{\psi}(\bm{O}_i,\bm{\theta})}{d\bm{\theta}^{\top}} \mid_{\bm{\theta}=\hat{\bm{\theta}}}$, and $\ddot{\bm{\psi}}_{ik}$ being the derivative of $\dot{\bm{\psi}}_{ik}$, we apply the multivariate Taylor expansion to get
\[
    m^{-1} \sum_{i=1}^{m} \dot{\bm{\psi}}_{ik}(\hat{\bm{\theta}}) - m^{-1} \sum_{i=1}^{m} \dot{\bm{\psi}}_{ik}(\underline{\bm{\theta}}) 
    = m^{-1} \sum_{i=1}^{m} \left( \dot{\bm{\psi}}_{ik}(\hat{\bm{\theta}}) - \dot{\bm{\psi}}_{ik}(\underline{\bm{\theta}}) \right)
    = m^{-1} \left(  \sum_{i=1}^{m} \ddot{\bm{\psi}}_{ik}(\tilde{\bm{\theta}}) \right)(\hat{\bm{\theta}}-\underline{\bm{\theta}})
\]
for some $\tilde{\bm{\theta}}$ on the line segment between $\hat{\bm{\theta}}$ and $\underline{\bm{\theta}}$.
By condition 2 and $\hat{\bm{\theta}} \xrightarrow{P} \bm{\theta}$, we have $m^{-1} \sum_{i=1}^{m} \ddot{\bm{\psi}}_{ik}(\tilde{\bm{\theta}}) =O_p(1)$.
As a result, $\hat{\bm{\theta}}-\bm{\theta} = o_p(1)$ implies that $m^{-1} \sum_{i=1}^{m} \dot{\bm{\psi}}_{ik}(\hat{\bm{\theta}}) - m^{-1} \sum_{i=1}^{m} \dot{\bm{\psi}}_{ik}(\underline{\bm{\theta}})  = o_p(1)$.
Then the first step is completed by the fact that
\[
    m^{-1} \sum_{i=1}^{m} \dot{\bm{\psi}}_{ik}(\underline{\bm{\theta}}) = E[\dot{\bm{\psi}}_{ik}(\underline{\bm{\theta}})] + o_p(1)
\]
which results from the Law of Large Numbers and condition 2.
Next we prove
\[
    m^{-1} \sum_{i=1}^{m} \bm{\psi}(\bm{O}_i, \hat{\bm{\theta}}) \bm{\psi}(\bm{O}_i, \hat{\bm{\theta}})^\top 
    \xrightarrow{P} E[\bm{\psi}(\bm{O}_i, \underline{\bm{\theta}}) \bm{\psi}(\bm{O}_i, \underline{\bm{\theta}})^\top ]
\]
following a similar procedure to the prior step. Letting $\bm{\psi}_{ik}(\bm{\theta})$ be the $k$th entry of $\bm{\psi}(\bm{O}_i,\bm{\theta})$, we apply the multivariate Taylor expansion and get
\[
\begin{split}
    m^{-1} &\sum_{i=1}^{m} \bm{\psi}_{ik}(\hat{\bm{\theta}}) \bm{\psi}(\bm{O}_i, \hat{\bm{\theta}}) -  m^{-1} \sum_{i=1}^{m} \bm{\psi}_{ik}(\underline{\bm{\theta}}) \bm{\psi}(\bm{O}_i, \underline{\bm{\theta}}) \\
    &= m^{-1} \left( \sum_{i=1}^{m} \left[ 
    \bm{\psi}(\bm{O}_i,\tilde{\bm{\theta}})\dot{\bm{\psi}}_{ik}(\tilde{\bm{\theta}})^\top + \bm{\psi}_{ik}(\tilde{\bm{\theta}}) \frac{d\bm{\psi}(\bm{O}_i,\bm{\theta})}{d\bm{\theta}^{\top}} \mid_{\bm{\theta}=\tilde{\bm{\theta}}} 
    \right] \right) (\hat{\bm{\theta}}-\underline{\bm{\theta}}) \\
    &= O_p(1)o_p(1)
\end{split}
\]
which, combined with the Law of Large Numbers on $m^{-1} \sum_{i=1}^{m} \bm{\psi}_{ik}(\underline{\bm{\theta}})\bm{\psi}(\bm{O}_i,\underline{\bm{\theta}})$, implies the desired result in this step.
The sandwich variance estimator can be written in terms of influence functions while substituting $\hat{\bm{\theta}}$ for the unknown $\underline{\bm{\theta}}$, as defined in equation 3.10 of \cite{tsiatis_semiparametric_2006}.
\begin{align}
    m^{-1} &\sum_{i=1}^{m} \widehat{\text{IF}}(\bm{O}_i,\hat{\bm{\theta}}) \widehat{\text{IF}}(\bm{O}_i,\hat{\bm{\theta}})^\top \\
    &=\left(
        m^{-1} \sum_{i=1}^{m} \frac{d\bm{\psi}(\bm{O}_i,\bm{\theta})}{d\bm{\theta}^{\top}} \mid_{\bm{\theta}=\hat{\bm{\theta}}}
    \right)^{-1}
    \left(
        m^{-1} \sum_{i=1}^{m} \bm{\psi}(\bm{O}_i,\hat{\bm{\theta}}) \bm{\psi}(\bm{O}_i,\hat{\bm{\theta}})^\top
    \right)
    \left(
        m^{-1} \sum_{i=1}^{m} \frac{d\bm{\psi}(\bm{O}_i,\bm{\theta})}{d\bm{\theta}^{\top}} \mid_{\bm{\theta}=\hat{\bm{\theta}}}
    \right)^{-1\top} \notag \\
\intertext{Finally by the Continuous Mapping Theorem described in Theorem 2.3 of \cite{van_der_vaart_asymptotic_1998}, and the above derivations, we can show the sandwich variance estimator}
    &=\left(
        E\left[\frac{d\bm{\psi}(\bm{O},\bm{\theta})}{d\bm{\theta}^{\top}} \mid_{\bm{\theta}=\underline{\bm{\theta}}}\right] + o_p(1)
    \right)^{-1}
    \left(
        E\left[ \bm{\psi}(\bm{O},\underline{\bm{\theta}}) \bm{\psi}(\bm{O},\underline{\bm{\theta}})^\top \right] + o_p(1)
    \right)
    \left(
        E\left[\frac{d\bm{\psi}(\bm{O},\bm{\theta})}{d\bm{\theta}^{\top}} \mid_{\bm{\theta}=\underline{\bm{\theta}}}\right] + o_p(1)
    \right)^{-1\top} \notag \\
    &= \textbf{V} + o_p(1) \,.
\end{align}
Altogether, the sandwich variance estimator is consistent for the true variance \textbf{V}. $\square$

\subsection{Proof of Theorem \ref{Theorem_g}.a.}
Denote $\bm{\theta}=(\Delta_{CSLMM-g}, \bm{\beta}, \tau^2, \sigma^2)^{\top} \in \mathbb{R}^{7+p}$ as the vector of unknown parameters to be estimated by M-estimation.
$\Delta_{CSLMM-g}$ is the CSLMM-g estimator that we will demonstrate is consistent for the iATE ($\Delta_{iATE}$) in P-CRTs with potential ICS.
Based on the observed data, the log-likelihood function given $\{Z_i, \bm{X}_i, N_i\}$ is
\[
\begin{split}
    & l(\bm{\theta}; \{\bm{Y}_i\}_{i=1}^{m} | \{Z_i, \bm{X}_i,  N_i\}_{i=1}^{m}) \\
    & = C - \frac{1}{2} \sum_{i=1}^{m} \left[ log(| \bm{\Sigma}_i|) + (\bm{Y}_i - \bm{Q}_i\bm{\beta})^\top  \bm{\Sigma}_i^{-1} (\bm{Y}_i - \bm{Q}_i\bm{\beta}) \right]
\end{split}
\]
where $C$ is a constant independent of  parameters $\bm{\theta}$.
The estimators are then the solution to the estimating equations $\sum_{i=1}^m \bm{\psi}(\bm{O}_i, \bm{\theta})$, where
\[
\begin{split}
    \bm{\psi}(\bm{O}_i, \bm{\theta})
    & = 
        \left(
        \begin{gathered}
            \left(\frac{\sum_{s=1}^{m}N_s}{m}\right) \Delta_{CSLMM-g} - \bm{1}_{N_i}^\top(\bm{\mu}_i(1) - \bm{\mu}_i(0)) \\
            \bm{Q}_i^\top \bm{V}_i (\bm{Y}_i - \bm{Q}_i\bm{\beta}) \\
            -tr(\bm{V}_i) + (\bm{Y}_i - \bm{Q}_i\bm{\beta})^\top \bm{V}^2_i (\bm{Y}_i - \bm{Q}_i\bm{\beta}) \\
            -\bm{1}_{N_i}^\top \bm{V}_i \bm{1}_{N_i} + (\bm{Y}_i - \bm{Q}_i\bm{\beta})^\top \bm{V}_i ( \bm{1}_{N_i} \bm{1}_{N_i}^\top) \bm{V}_i  (\bm{Y}_i - \bm{Q}_i\bm{\beta})
        \end{gathered}
        \right)
\end{split}
\]
\sloppy
where
$\bm{\mu}_i(a) = \bm{Q}_i(a) \bm{\beta} = \left(\beta_{0}  + a \beta_Z + a N_i  \beta_{ZN} + N_i \beta_{N} + \bm{\beta}_X^\top \bm{X}_{ik}\right)_{k=1,...,N_i} \in \mathbb{R}^{N_i}$, $a\in\{0,1\}$,
$\bm{V}_i = \bm{\Sigma}_i^{-1} \in \mathbb{R}^{N_i \times N_i}$, and $tr(\bm{V}_i)$ is the trace of $\bm{V}_i$.
With $\rho=\frac{\tau^2}{\tau^2 + \sigma^2}$ being the ICC, we can then easily demonstrate that $\bm{V}_i = \left(\frac{1}{\tau^2 + \sigma^2}\right)\left(\frac{1}{1-\rho}\right)\left(\textbf{I}_{N_i} - \bm{1}_{N_i} \bm{1}_{N_i}^\top (\rho/[1+(N_i-1)\rho])\right)$.
The maximum likelihood estimator for $\bm{\theta}$ is defined as a solution to the estimating equation
\[
    \sum_{i=1}^{m} \bm{\psi}(\bm{O}_i;\bm{\theta})=0 \,.
\]
We hence define the estimating function as
\begin{equation}
    \bm{\psi}(\bm{O};\bm{\theta}) =  \left(
        \begin{gathered}
            E[N] \Delta_{CSLMM-g} - \bm{1}_{N}^\top(\bm{\mu}(1) - \bm{\mu}(0))  \\
            \bm{Q}^\top \bm{V} (\bm{Y} - \bm{Q}\bm{\beta}) \\
            -tr(\bm{V}) + (\bm{Y} - \bm{Q}\bm{\beta})^\top \bm{V}^2(\bm{Y} - \bm{Q}\bm{\beta}) \\
            -\bm{1}_{N}^\top \bm{V} \bm{1}_{N} + (\bm{Y} - \bm{Q}\bm{\beta})^\top \bm{V} ( \bm{1}_{N} \bm{1}_{N}^\top) \bm{V}  (\bm{Y} - \bm{Q}\bm{\beta})
        \end{gathered}
        \right)
\end{equation}
Recall that subscript $i$ is omitted when taking the expectation with respect to distribution $\mathcal{P}$.
For the estimating equation $\bm{\psi}$, we prove the convergence and asymptotic normality of $\hat{\bm{\theta}}$ by applying Lemma \ref{lemma:variance}. In Lemma \ref{lemma:variance}, the conditions for $\bm{\psi}$ are assumed in the main paper as regularity conditions, which implies the desired results. The consistency of the sandwich variance estimators is then also implied.

Denoting $\underline{\bm{\theta}}=(\underline{\Delta}_{CSLMM-g}, \underline{\bm{\beta}}_0, \underline{\beta}_Z, \underline{\beta}_{ZN}, \underline{\beta}_{N}, \underline{\bm{\beta}}_{X}^\top, \underline{\tau}^2, \underline{\sigma^2})^{\top}$
as the solution to $E[\bm{\psi}(\bm{O};\bm{\theta})]=0$, we next prove $\underline{\Delta}_{CSLMM-g} = \Delta_{iATE}$ to imply robust consistency of this CSLMM-g estimator.
To proceed, the second through fifth entries of $E[\bm{\psi}(\bm{O};\bm{\theta})]=0$ are
\begin{gather}
\label{app:eq:baseline}
    E[\bm{1}_N^\top \underline{\bm{V}}(\bm{Y} - \bm{Q}\underline{\bm{\beta}})] = 0 \\
\label{app:eq:treatment}
    E[I\{Z=1\}\bm{1}_N^\top \underline{\bm{V}}(\bm{Y} - \bm{Q}\underline{\bm{\beta}})] = 0 \\
\label{app:eq:treatment_N}
    E[I\{Z=1\} N \bm{1}_N^\top \underline{\bm{V}}(\bm{Y} - \bm{Q}\underline{\bm{\beta}})] = 0 \\
\label{app:eq:N}
    E[N\bm{1}_N^\top \underline{\bm{V}}(\bm{Y} - \bm{Q}\underline{\bm{\beta}})] = 0
\end{gather}
corresponding to the score functions for the model intercept $\beta_0$ (Equation \ref{app:eq:baseline}), treatment effect coefficient $\beta_Z$ (Equation \ref{app:eq:treatment}), treatment $\times$ cluster-size interaction $\beta_{ZN}$ (Equation \ref{app:eq:treatment_N}), and cluster-size main effect $\beta_N$ (Equation \ref{app:eq:N}), respectively. $\underline{\bm{V}}$ is equal to $\bm{V}$ with $(\tau^2, \sigma^2, \rho)$ replaced by $(\underline{\tau}^2, \underline{\sigma}^2, 
\underline{\rho})$.

It is then trivial to algebraically demonstrate that the above estimating equations are respectively equal to
\[
\begin{split}
E\left[\left(\frac{1}{1+(N-1)\underline{\rho}}\right) \bm{1}_N^\top (\bm{Y} - \underline{\bm{\mu}})\right] &= 0 \,, \\
E\left[I\{Z=1\} \left(\frac{1}{1+(N-1)\underline{\rho}}\right) \bm{1}_N^\top (\bm{Y} - \underline{\bm{\mu}})\right] &= 0 \,, \\
E\left[I\{Z=1\} N \left(\frac{1}{1+(N-1)\underline{\rho}}\right) \bm{1}_N^\top (\bm{Y} - \underline{\bm{\mu}})\right] &= 0 \,, \\
E\left[ N \left(\frac{1}{1+(N-1)\underline{\rho}}\right) \bm{1}_N^\top (\bm{Y} - \underline{\bm{\mu}})\right] &= 0 \,,
\end{split}
\]
where $\underline{\bm{\mu}} = \bm{Q}\underline{\bm{\beta}}$.
Recall that $\pi=P(Z_i=1)=E[Z_i]$ with Assumption A2 (Cluster randomization).
Then, taking Equation (\ref{app:eq:treatment_N}) and subtracting it by $\pi \times$ Equation (\ref{app:eq:N}) returns
\[
    E\left[(I\{Z=1\} N - \pi N) \left(\frac{1}{1+(N-1)\underline{\rho}}\right) \bm{1}_N^\top (\bm{Y} - \underline{\bm{\mu}})\right] = 0 \,.
\]
Given $\bm{Y}_i = I\{Z_i=1\}\left[\bm{Y}_i(1)-\bm{Y}_i(0)\right] + \bm{Y}_i(0)$ and $\underline{\bm{\mu}}_i = I\{Z_i=1\}\left[\underline{\bm{\mu}}_i(1)-\underline{\bm{\mu}}_i(0)\right] + \underline{\bm{\mu}}_i(0)$ in a P-CRT, we then have the following equality
\[
\begin{split}
    E&\left[(I\{Z=1\} N - \pi N) \left(\frac{1}{1+(N-1)\underline{\rho}}\right) \bm{1}_N^\top I\{Z_i=1\} \left\{\left[\bm{Y}(1)-\bm{Y}(0)\right] - \left[\underline{\bm{\mu}}(1) - \underline{\bm{\mu}}(0)]\right]\right\} \right] \\
    &+ E\left[(I\{Z=1\} N - \pi N) \left(\frac{1}{1+(N-1)\underline{\rho}}\right) \bm{1}_N^\top (\bm{Y}(0) - \underline{\bm{\mu}}(0))\right] \\
    &=0
\end{split}
\]
Notably, the second component of the above summation can be shown to be equal to 0
\[
\begin{split}
    E&\left[(I\{Z=1\} N - \pi N) \left(\frac{1}{1+(N-1)\underline{\rho}}\right) \bm{1}_N^\top (\bm{Y}(0) - \underline{\bm{\mu}}(0))\right] \\
    &= E\left[E\left[(I\{Z=1\} N - \pi N) \left(\frac{1}{1+(N-1)\underline{\rho}}\right) \bm{1}_N^\top (\bm{Y}(0) - \underline{\bm{\mu}}(0)) | N\right]\right] \\
    &= E\left[(E\left[I\{Z=1\}|N\right] N - \pi N) \left(\frac{1}{1+(N-1)\underline{\rho}}\right) \bm{1}_N^\top E\left[\bm{Y}(0) - \underline{\bm{\mu}}(0) | N\right]\right] \\
    &= 0
\end{split}
\]
where in the second equality, $E\left[I\{Z=1\}|N\right]=E\left[I\{Z=1\}\right] = \pi$ by Assumption A2 (Cluster randomization).
Altogether, we demonstrate that Equation (\ref{app:eq:treatment_N}) can be transformed to
\[
\begin{split}
    E&\left[(I\{Z=1\} N - \pi N) \left(\frac{1}{1+(N-1)\underline{\rho}}\right) \bm{1}_N^\top I\{Z_i=1\} \left\{\left[\bm{Y}(1)-\bm{Y}(0)\right] - \left[\underline{\bm{\mu}}(1) - \underline{\bm{\mu}}(0)]\right]\right\} \right] \\
    &= E\left[I\{Z=1\} N(1 - \pi) \left(\frac{1}{1+(N-1)\underline{\rho}}\right) \bm{1}_N^\top \left\{\left[\bm{Y}(1)-\bm{Y}(0)\right] - \left[\underline{\bm{\mu}}(1) - \underline{\bm{\mu}}(0)]\right]\right\} \right] \\
    &= E\left[E\left[I\{Z=1\}|N\right] N \left(\frac{1}{1+(N-1)\underline{\rho}}\right) \bm{1}_N^\top \left\{\left[\bm{Y}(1)-\bm{Y}(0)\right] - \left[\underline{\bm{\mu}}(1) - \underline{\bm{\mu}}(0)]\right]\right\} \right] \\
    &= E\left[N \left(\frac{1}{1+(N-1)\underline{\rho}}\right) \bm{1}_N^\top \left\{\left[\bm{Y}(1)-\bm{Y}(0)\right] - \left[\underline{\bm{\mu}}(1) - \underline{\bm{\mu}}(0)]\right]\right\}\right] \\
    &= 0
\end{split}
\]
where the third equality, as before, also results from Assumption A2 (Cluster randomization).

Similarly, we can demonstrate that Equation (\ref{app:eq:treatment}) can likewise be subtracted by $\pi \times$ Equation (\ref{app:eq:baseline}) and transformed to
\[
\begin{split}
    E&\left[(I\{Z=1\} - \pi) \left(\frac{1}{1+(N-1)\underline{\rho}}\right) \bm{1}_N^\top I\{Z_i=1\} \left\{\left[\bm{Y}(1)-\bm{Y}(0)\right] - \left[\underline{\bm{\mu}}(1) - \underline{\bm{\mu}}(0)]\right]\right\} \right] \\
    &= E\left[I\{Z=1\} (1 - \pi) \left(\frac{1}{1+(N-1)\underline{\rho}}\right) \bm{1}_N^\top \left\{\left[\bm{Y}(1)-\bm{Y}(0)\right] - \left[\underline{\bm{\mu}}(1) - \underline{\bm{\mu}}(0)]\right]\right\} \right] \\
    &= E\left[E\left[I\{Z=1\}|N\right] \left(\frac{1}{1+(N-1)\underline{\rho}}\right) \bm{1}_N^\top \left\{\left[\bm{Y}(1)-\bm{Y}(0)\right] - \left[\underline{\bm{\mu}}(1) - \underline{\bm{\mu}}(0)]\right]\right\} \right] \\
    &= E\left[\left(\frac{1}{1+(N-1)\underline{\rho}}\right) \bm{1}_N^\top \left\{\left[\bm{Y}(1)-\bm{Y}(0)\right] - \left[\underline{\bm{\mu}}(1) - \underline{\bm{\mu}}(0)]\right]\right\}\right] \\
    &= 0 \,.
\end{split}
\]

We then have the following equalities
\begin{equation}
\label{app:eq.ee.ZN}
    E\left[N \left(\frac{1}{1+(N-1)\underline{\rho}}\right) \bm{1}_N^\top \left\{\left[\bm{Y}(1)-\bm{Y}(0)\right] - \left[\underline{\bm{\mu}}(1) - \underline{\bm{\mu}}(0)]\right]\right\}\right] = 0
\end{equation}
and
\begin{equation}
\label{app:eq.ee.Z}
    E\left[\left(\frac{1}{1+(N-1)\underline{\rho}}\right) \bm{1}_N^\top \left\{\left[\bm{Y}(1)-\bm{Y}(0)\right] - \left[\underline{\bm{\mu}}(1) - \underline{\bm{\mu}}(0)]\right]\right\}\right] = 0 \,.
\end{equation}
We can then demonstrate that $\Delta_{iATE} = E[\bm{1}_N^\top \{\bm{Y}(1) - \bm{Y}(0)\}]/E[N] = E[\bm{1}_N^\top \{\underline{\bm{\mu}}(1) - \underline{\bm{\mu}}(0)\}]/E[N] = \underline{\Delta}_{CSLMM-g}$ and complete the proof of consistency. For this statement to be true, then $E[\bm{1}_N^\top \{[\bm{Y}(1) - \bm{Y}(0) ]- [\underline{\bm{\mu}}(1) - \underline{\bm{\mu}}(0)]\} ] = 0$ needs to be true.
We prove this is the case with the previously proven results in Equations (\ref{app:eq.ee.ZN}) and (\ref{app:eq.ee.Z})
\[
\begin{split}
    E&[\bm{1}_N^\top \{[\bm{Y}(1) - \bm{Y}(0) ]- [\underline{\bm{\mu}}(1) - \underline{\bm{\mu}}(0)]\}] \\
    &= E\left[ \left(\frac{1+(N-1)\underline{\rho}}{1+(N-1)\underline{\rho}}\right) \bm{1}_N^\top \{[\bm{Y}(1) - \bm{Y}(0) ]- [\underline{\bm{\mu}}(1) - \underline{\bm{\mu}}(0)]\}\right] \\
    &\begin{aligned}
            = E&\left[ \left(\frac{1}{1+(N-1)\underline{\rho}}\right) \bm{1}_N^\top \{[\bm{Y}(1) - \bm{Y}(0) ]- [\underline{\bm{\mu}}(1) - \underline{\bm{\mu}}(0)]\}\right] \\
            &+ \underline{\rho}\left( E\left[ \left(\frac{N}{1+(N-1)\underline{\rho}}\right) \bm{1}_N^\top \{[\bm{Y}(1) - \bm{Y}(0) ]- [\underline{\bm{\mu}}(1) - \underline{\bm{\mu}}(0)]\}\right] \right) \\
            &- \underline{\rho}\left( E\left[ \left(\frac{1}{1+(N-1)\underline{\rho}}\right) \bm{1}_N^\top \{[\bm{Y}(1) - \bm{Y}(0) ]- [\underline{\bm{\mu}}(1) - \underline{\bm{\mu}}(0)]\}\right] \right)
    \end{aligned}\\
    &=0 \,.
\end{split}
\]
In contrast to the basic linear mixed-effects model (Equation \ref{eq:ME_basic}) and models including only a cluster-size main effect, it is the presence of both the treatment $\times$ cluster-size interaction $\beta_{ZN}$ (Equation \ref{app:eq:treatment_N}) and cluster-size main effect $\beta_N$ (Equation \ref{app:eq:N}) which produces Equation (\ref{app:eq.ee.ZN}) in the CSLMM-g estimator to yield the final result:
\[
\begin{split}
   \underline{\Delta}_{CSLMM-g} &= \frac{E[\bm{1}_N^\top \{\underline{\bm{\mu}}(1) - \underline{\bm{\mu}}(0)\}]}{E[N]} \\
    &= \frac{E[\bm{1}_N^\top \{\bm{Y}(1) - \bm{Y}(0)\}]}{E[N]} = \Delta_{iATE} \,.
\end{split}
\]
completing the proof of consistency. $\square$

\subsection{Proof of Theorem \ref{Theorem_g}.b.}

We denote $\bm{\theta}=(\Delta_{CSLMMw-g}, \bm{\beta}, \tau^2, \sigma^2)^{\top} \in \mathbb{R}^{7+p}$ as the vector of unknown parameters to be estimated by M-estimation.
The estimators to target the cATE are then the solution to the following estimating equations $\sum_{i=1}^m \bm{\psi}(\bm{O}_i, \bm{\theta})$, where
\[
\begin{split}
    \bm{\psi}(\bm{O}_i, \bm{\theta})
    & = 
        \left(
        \begin{gathered}
            \Delta_{CSLMMw-g} - \frac{1}{N_i}\bm{1}_{N_i}^\top(\bm{\mu}_i(1) - \bm{\mu}_i(0)) \\
            \frac{1}{N_i}\bm{Q}_i^\top \bm{V}_i (\bm{Y}_i - \bm{Q}_i\bm{\beta}) \\
            -tr(\bm{V}_i) + (\bm{Y}_i - \bm{Q}_i\bm{\beta})^\top \bm{V}^2_i (\bm{Y}_i - \bm{Q}_i\bm{\beta}) \\
            -\bm{1}_{N_i}^\top \bm{V}_i \bm{1}_{N_i} + (\bm{Y}_i - \bm{Q}_i\bm{\beta})^\top \bm{V}_i ( \bm{1}_{N_i} \bm{1}_{N_i}^\top) \bm{V}_i  (\bm{Y}_i - \bm{Q}_i\bm{\beta})
        \end{gathered}
        \right)
\end{split}
\]
The maximum likelihood estimator for $\bm{\theta}$ is defined as a solution to the estimating equation
\[
    \sum_{i=1}^{m} \bm{\psi}(\bm{O}_i;\bm{\theta})=0 \,.
\]
We hence define the estimating function as
\begin{equation}
    \bm{\psi}(\bm{O};\bm{\theta}) =  \left(
        \begin{gathered}
            \Delta_{CSLMMw-g} - \frac{1}{N}\bm{1}_{N}^\top(\bm{\mu}(1) - \bm{\mu}(0))  \\
            \frac{1}{N}\bm{Q}^\top \bm{V} (\bm{Y} - \bm{Q}\bm{\beta}) \\
            -tr(\bm{V}) + (\bm{Y} - \bm{Q}\bm{\beta})^\top \bm{V}^2(\bm{Y} - \bm{Q}\bm{\beta}) \\
            -\bm{1}_{N}^\top \bm{V} \bm{1}_{N} + (\bm{Y} - \bm{Q}\bm{\beta})^\top \bm{V} ( \bm{1}_{N} \bm{1}_{N}^\top) \bm{V}  (\bm{Y} - \bm{Q}\bm{\beta})
        \end{gathered}
        \right) \,.
\end{equation}

The second through fifth entries of $E[\bm{\psi}(\bm{O};\bm{\theta})]=0$ are
\begin{gather}
\label{app:eq:baseline.c}
    E\left[\frac{1}{N}\bm{1}_N^\top \underline{\bm{V}}(\bm{Y} - \bm{Q}\underline{\bm{\beta}})\right] = 0 \\
\label{app:eq:treatment.c}
    E\left[\frac{1}{N}I\{Z=1\}\bm{1}_N^\top \underline{\bm{V}}(\bm{Y} - \bm{Q}\underline{\bm{\beta}})\right] = 0 \\
\label{app:eq:treatment_N.c}
    E\left[I\{Z=1\} \bm{1}_N^\top \underline{\bm{V}}(\bm{Y} - \bm{Q}\underline{\bm{\beta}})\right] = 0 \\
\label{app:eq:N.c}
    E\left[\bm{1}_N^\top \underline{\bm{V}}(\bm{Y} - \bm{Q}\underline{\bm{\beta}})\right] = 0
\end{gather}
corresponding to the score functions for the model intercept $\beta_0$ (Equation \ref{app:eq:baseline.c}), treatment effect coefficient $\beta_Z$ (Equation \ref{app:eq:treatment.c}), treatment $\times$ cluster-size interaction $\beta_{ZN}$ (Equation \ref{app:eq:treatment_N.c}), and cluster-size main effect $\beta_N$ (Equation \ref{app:eq:N.c}), respectively.

The proof for the consistency of the inverse cluster-size weighted estimating equations for the cATE then follows as in the case with the unweighted estimating equations for the iATE.
Accordingly, we can use Equations (\ref{app:eq:baseline.c}-\ref{app:eq:N.c}) to prove the following equalities 
\begin{equation}
\label{app:eq.ee.ZN.c}
    E\left[\left(\frac{1}{1+(N-1)\underline{\rho}}\right) \bm{1}_N^\top \left\{\left[\bm{Y}(1)-\bm{Y}(0)\right] - \left[\underline{\bm{\mu}}(1) - \underline{\bm{\mu}}(0)]\right]\right\}\right] = 0
\end{equation}
and
\begin{equation}
\label{app:eq.ee.Z.c}
    E\left[\left(\frac{1}{N}\right)\left(\frac{1}{1+(N-1)\underline{\rho}}\right) \bm{1}_N^\top \left\{\left[\bm{Y}(1)-\bm{Y}(0)\right] - \left[\underline{\bm{\mu}}(1) - \underline{\bm{\mu}}(0)]\right]\right\}\right] = 0 \,.
\end{equation}
We can then demonstrate that $\Delta_{cATE} = E[(1/N)\bm{1}_N^\top \{\bm{Y}(1) - \bm{Y}(0)\}] = E[(1/N)\bm{1}_N^\top \{\underline{\bm{\mu}}(1) - \underline{\bm{\mu}}(0)\}] = \underline{\Delta}_{CSLMMw-g}$ and complete the proof of consistency. For the previous statement to be true, then $E[(1/N)\bm{1}_N^\top \{[\bm{Y}(1) - \bm{Y}(0) ]- [\underline{\bm{\mu}}(1) - \underline{\bm{\mu}}(0)]\} ] = 0$ needs to also be true.
We prove this is the case with the previously proven results in Equations (\ref{app:eq.ee.ZN.c}) and (\ref{app:eq.ee.Z.c})
\[
\begin{split}
    E&\left[\left(\frac{1}{N}\right)\bm{1}_N^\top \{[\bm{Y}(1) - \bm{Y}(0) ]- [\underline{\bm{\mu}}(1) - \underline{\bm{\mu}}(0)]\}\right] \\
    &= E\left[ \left(\frac{1}{N}\right)\left(\frac{1+(N-1)\underline{\rho}}{1+(N-1)\underline{\rho}}\right) \bm{1}_N^\top \{[\bm{Y}(1) - \bm{Y}(0) ]- [\underline{\bm{\mu}}(1) - \underline{\bm{\mu}}(0)]\}\right] \\
    &\begin{aligned}
            = E&\left[\left(\frac{1}{N}\right) \left(\frac{1}{1+(N-1)\underline{\rho}}\right) \bm{1}_N^\top \{[\bm{Y}(1) - \bm{Y}(0) ]- [\underline{\bm{\mu}}(1) - \underline{\bm{\mu}}(0)]\}\right] \\
            &+ \underline{\rho}\left( E\left[ \left(\frac{1}{1+(N-1)\underline{\rho}}\right) \bm{1}_N^\top \{[\bm{Y}(1) - \bm{Y}(0) ]- [\underline{\bm{\mu}}(1) - \underline{\bm{\mu}}(0)]\}\right] \right) \\
            &- \underline{\rho}\left( E\left[ \left(\frac{1}{N}\right) \left(\frac{1}{1+(N-1)\underline{\rho}}\right) \bm{1}_N^\top \{[\bm{Y}(1) - \bm{Y}(0) ]- [\underline{\bm{\mu}}(1) - \underline{\bm{\mu}}(0)]\}\right] \right)
    \end{aligned}\\
    &=0 \,.
\end{split}
\]
Therefore
\[
\begin{split}
    \underline{\Delta}_{CSLMMw-g} &= E\left[\frac{1}{N} \bm{1}_N^\top \{\underline{\bm{\mu}}(1) - \underline{\bm{\mu}}(0)\}\right] \\
    &= E\left[\frac{1}{N} \bm{1}_N^\top \{\bm{Y}(1) - \bm{Y}(0)\}\right] = \Delta_{cATE} 
\end{split}
\]
completing the proof of consistency. $\square$

\subsection{Proof of Lemma \ref{lemma:g}.}
It suffices to show $\bm{R}_i^{-1}\bm{\mathcal{Z}}_i^{-1/2} = \bm{\mathcal{Z}}_i^{-1/2}\bm{R}_i^{-1}$.
For supplementary assumption (I), with $\rho=0$, we have a working independence correlation $\bm{R}_i = \textbf{I}_{N_i}$, which directly implies the desired result.
For supplementary assumption (II) that $v(Y_{ik}) \equiv \sigma^2$, then $\bm{\mathcal{Z}}_i=\sigma^2 \textbf{I}_{N_i}$ is a diagonal matrix, which yields the desired result.
For supplementary assumption (III), we have $\bm{\mathcal{Z}}_i = v_{i}I_{N_i}$ for a variance function $v_{i}$ common across subjects within each cluster $i$ since $v(Y_{ik})$ is a function of $Z_i$, $N_i$, and $\bm{X_{ik}}$, which are all constant across $k$.
Then $\bm{R}_i^{-1} \bm{\mathcal{Z}}_i^{-1/2} = \bm{\mathcal{Z}}_i^{-1/2} \bm{R}_i^{-1}$, which completes the proof.
$\square$

\subsection{Proof of Theorem \ref{Theorem_GEE_g}.a.}

Consider the estimating equations $\sum_{i=1}^m \bm{\psi}(\bm{O}_i, \bm{\theta})$, where
\[
\begin{split}
    \bm{\psi}(\bm{O}_i, \bm{\theta})
    & = 
        \left(
        \begin{gathered}
            \left(\frac{\sum_{s=1}^{m}N_s}{m}\right) \Delta_{CSGEE-g} - \bm{1}_{N_i}^\top(\bm{\mu}_i(1) - \bm{\mu}_i(0)) \\
            \bm{Q}_i^\top \bm{R}_i^{-1} (\bm{Y}_i - \bm{\mu}_i) \\
            f(\bm{O}_i, \rho)
        \end{gathered}
        \right)
\end{split}
\]
where
$\bm{\mu}_i = g^{-1}(\bm{Q}_i\bm{\beta})$ and
$\bm{\mu}_i(a) = g^{-1}\left(\bm{Q}_i(a) \bm{\beta}\right) = \left(g^{-1}\left(\beta_{0}  + a \beta_Z + a N_i  \beta_{ZN} + N_i \beta_{N} + \bm{\beta}_X^\top \bm{X}_{ik}\right)\right)_{k=1,...,N_i} \in \mathbb{R}^{N_i}$, $a\in\{0,1\}$.
With a canonical link function $U_i=\frac{d\bm{\mu}_i}{d\bm{\beta}}=\bm{\mathcal{Z}}_i\bm{Q}_i$; then with Lemma \ref{lemma:g}, $\bm{U}_i^\top \bm{\mathcal{Z}}_i^{-1/2} \bm{R}_i^{-1} \bm{\mathcal{Z}}_i^{-1/2} (\bm{Y}_i - \bm{\mu}_i)$ in Equation (\ref{eq:GEE}) simplifies to $\bm{Q}_i^\top \bm{R}_i^{-1} (\bm{Y}_i - \bm{\mu}_i)$.
The estimator for $\bm{\theta}=(\Delta_{CSGEE-g}, \bm{\beta}, \rho)^{\top} \in \mathbb{R}^{6+p}$ is then defined as a solution to the estimating equation
\[
    \sum_{i=1}^{m} \bm{\psi}(\bm{O}_i;\bm{\theta})=0 \,.
\]
We hence define the estimating function as
\begin{equation}
    \bm{\psi}(\bm{O};\bm{\theta}) =  \left(
        \begin{gathered}
            E[N] \Delta_{CSGEE-g} - \bm{1}_{N}^\top(\bm{\mu}(1) - \bm{\mu}(0))  \\
            \bm{Q}^\top \bm{R}^{-1} (\bm{Y} - \bm{\mu}) \\
            f(\bm{O}, \rho)
        \end{gathered}
        \right)
\end{equation}
Recall that subscript $i$ is omitted when taking the expectation with respect to distribution $\mathcal{P}$.

The proof in this generalized setting then extends exactly as in the linear setting, leading again to the following equalities
\begin{equation}
    E\left[N \left(\frac{1}{1+(N-1)\underline{\rho}}\right) \bm{1}_N^\top \left\{\left[\bm{Y}(1)-\bm{Y}(0)\right] - \left[\underline{\bm{\mu}}(1) - \underline{\bm{\mu}}(0)]\right]\right\}\right] = 0
\end{equation}
and
\begin{equation}
    E\left[\left(\frac{1}{1+(N-1)\underline{\rho}}\right) \bm{1}_N^\top \left\{\left[\bm{Y}(1)-\bm{Y}(0)\right] - \left[\underline{\bm{\mu}}(1) - \underline{\bm{\mu}}(0)]\right]\right\}\right] = 0 \,.
\end{equation}
We can then demonstrate that implementing CSGEE-g (g-computation with the cluster-size saturated working GEE) results in
\[
\begin{split}
   \underline{\Delta}_{CSGEE-g} &= \frac{E[\bm{1}_N^\top \{\underline{\bm{\mu}}(1) - \underline{\bm{\mu}}(0)\}]}{E[N]} \\
    &= \frac{E[\bm{1}_N^\top \{\bm{Y}(1) - \bm{Y}(0)\}]}{E[N]} = \Delta_{iATE} 
\end{split}
\]
completing the proof of consistency. $\square$

\subsection{Proof of Theorem \ref{Theorem_GEE_g}.b.}

Consider the estimating equations $\sum_{i=1}^m \bm{\psi}(\bm{O}_i, \bm{\theta})$, where
\[
\begin{split}
    \bm{\psi}(\bm{O}_i, \bm{\theta})
    & = 
        \left(
        \begin{gathered}
            \Delta_{CSGEEw-g} - \frac{1}{N_i}\bm{1}_{N_i}^\top(\bm{\mu}_i(1) - \bm{\mu}_i(0)) \\
            \frac{1}{N_i}\bm{Q}_i^\top \bm{R}_i^{-1} (\bm{Y}_i - \bm{\mu}_i) \\
            f(\bm{O}_i, \rho)
        \end{gathered}
        \right)
\end{split}
\]
The estimator for $\bm{\theta}=(\Delta_{CSGEEw-g}, \bm{\beta}, \rho)^{\top} \in \mathbb{R}^{6+p}$ is then defined as a solution to the estimating equation
\[
    \sum_{i=1}^{m} \bm{\psi}(\bm{O}_i;\bm{\theta})=0 \,.
\]
We hence define the estimating function as
\begin{equation}
    \bm{\psi}(\bm{O};\bm{\theta}) =  \left(
        \begin{gathered}
            \Delta_{CSGEEw-g} - \frac{1}{N}\bm{1}_{N}^\top(\bm{\mu}(1) - \bm{\mu}(0)) \\
            \frac{1}{N} \bm{Q}^\top \bm{R}^{-1} (\bm{Y} - \bm{\mu}) \\
            f(\bm{O}, \rho)
        \end{gathered}
        \right)
\end{equation}
Again, the proof in this generalized setting then extends exactly as in the linear setting, leading again to the following equalities
\begin{equation}
    E\left[\left(\frac{1}{1+(N-1)\underline{\rho}}\right) \bm{1}_N^\top \left\{\left[\bm{Y}(1)-\bm{Y}(0)\right] - \left[\underline{\bm{\mu}}(1) - \underline{\bm{\mu}}(0)]\right]\right\}\right] = 0
\end{equation}
and
\begin{equation}
    E\left[\left(\frac{1}{N}\right)\left(\frac{1}{1+(N-1)\underline{\rho}}\right) \bm{1}_N^\top \left\{\left[\bm{Y}(1)-\bm{Y}(0)\right] - \left[\underline{\bm{\mu}}(1) - \underline{\bm{\mu}}(0)]\right]\right\}\right] = 0 \,.
\end{equation}
We can then demonstrate that implementing g-computation with the cluster-size saturated working GEE with inverse cluster-size weights results in
\[
\begin{split}
    \underline{\Delta}_{CSGEEw-g} &= E\left[\frac{1}{N} \bm{1}_N^\top \{\underline{\bm{\mu}}(1) - \underline{\bm{\mu}}(0)\}\right] \\
    &= E\left[\frac{1}{N} \bm{1}_N^\top \{\bm{Y}(1) - \bm{Y}(0)\}\right] = \Delta_{cATE} 
\end{split}
\]
completing the proof of consistency. $\square$

\subsection{Proof of Theorem \ref{Theorem_g=MRS}}
We can generally prove Theorem \ref{Theorem_g=MRS} by demonstrating that
\[
\sum_{i=1}^{m} \frac{\lambda_i}{\sum_{s=1}^m\lambda_s}
\Biggl\{
\underbrace{\frac{I\{Z_i=a\}\bigl(\bar Y_i-\widehat{\bar\mu}_i(a)\bigr)}
{\pi^{a}(1-\pi)^{1-a}}}_{\text{weighted cluster-level residual}}
\Biggr\} = 0
\]
for each $a \in \{0,1\}$, when cluster-saturation is specified within the g-computation or model-robust standardization.
The above expression can be algebraically reorganized into
\begin{equation}
\label{eq:g=MRS}
\sum_{i=1}^{m} \frac{\lambda_i}{N_i} I\{Z_i=a\} \bm{1}_{N_i}^\top\left(\bm{Y}_i-\hat{\bm{\mu}}_i(a)\right) = 0
\end{equation}
which we will prove to be the case.

As described in the proofs for Theorems \ref{Theorem_g} \& \ref{Theorem_GEE_g}, under the described assumptions and regularity conditions, recall the estimating equations 
$\sum_{i=1}^m \bm{\psi}(\bm{O}_i, \bm{\theta})=0$
where
\[
\begin{split}
    \bm{\psi}(\bm{O}_i, \bm{\theta})
    & = 
        \left(
        \begin{gathered}
             \left(\frac{\sum_{s=1}^{m}\lambda_s}{m}\right) \Delta_{CSGEE\bm{\lambda}-g} -  \frac{\lambda_i}{N_i}\bm{1}_{N_i}^\top(\bm{\mu}_i(1) - \bm{\mu}_i(0)) \\
            \frac{\lambda_i}{N_i}\bm{Q}_i^\top \bm{R}_i^{-1} (\bm{Y}_i - \bm{\mu}_i) \\
            f(\bm{O}_i, \rho)
        \end{gathered}
        \right)
\end{split} =0
\]
with $\lambda_i=N_i$ and $\lambda_i=1$ yielding the individual-level and cluster-level estimators, respectively.
Then the second through fifth entries of $\sum_{i=1}^m \bm{\psi}(\bm{O}_i, \hat{\bm{\theta}})=0$ are
\begin{gather}
\label{eq:baseline.c.general}
    \sum_{i=1}^m \frac{\lambda_i}{N_i}\bm{1}_{N_i}^\top \bm{\hat{R}}_i^{-1}(\bm{Y}_i-\hat{\bm{\mu}}_i) = 0 \\
\label{eq:treatment.c.general}
    \sum_{i=1}^m \frac{\lambda_i}{N_i} I\{Z_i=1\} \bm{1}_{N_i}^\top \bm{\hat{R}}_i^{-1}(\bm{Y}_i-\hat{\bm{\mu}}_i) = 0 \\
\label{eq:treatment_N.c.general}
    \sum_{i=1}^m \lambda_i I\{Z_i=1\} \bm{1}_{N_i}^\top \bm{\hat{R}}_i^{-1}(\bm{Y}_i-\hat{\bm{\mu}}_i) = 0 \\
\label{eq:N.c.general}
    \sum_{i=1}^m \lambda_i \bm{1}_{N_i}^\top \bm{\hat{R}}_i^{-1}(\bm{Y}_i-\hat{\bm{\mu}}_i) = 0
\end{gather}
corresponding to the score functions for the model intercept $\beta_0$, treatment effect coefficient $\beta_Z$, treatment $\times$ cluster-size interaction $\beta_{ZN}$, and cluster-size main effect $\beta_N$, respectively.
It is then trivial to algebraically demonstrate that the above estimating equations are respectively equal to
\[
\begin{split}
\sum_{i=1}^m \frac{\lambda_i}{N_i}  \left(\frac{1}{1+(N_i-1)\hat{\rho}}\right) \bm{1}_{N_i}^\top (\bm{Y}_i-\hat{\bm{\mu}}_i) &= 0 \,, \\
\sum_{i=1}^m \frac{\lambda_i}{N_i} I\{Z_i=1\} \left(\frac{1}{1+(N_i-1)\hat{\rho}}\right) \bm{1}_{N_i}^\top (\bm{Y}_i-\hat{\bm{\mu}}_i) &= 0 \,, \\
\sum_{i=1}^m \lambda_i I\{Z_i=1\} \left(\frac{1}{1+(N_i-1)\hat{\rho}}\right)\bm{1}_{N_i}^\top (\bm{Y}_i-\hat{\bm{\mu}}_i) &= 0 \,, \\
\sum_{i=1}^m \lambda_i \left(\frac{1}{1+(N_i-1)\hat{\rho}}\right) \bm{1}_{N_i}^\top (\bm{Y}_i-\hat{\bm{\mu}}_i) &= 0 \,.
\end{split}
\]
Taking the difference between Equations (\ref{eq:baseline.c.general}) and (\ref{eq:treatment.c.general}), as well as between Equations (\ref{eq:N.c.general}) and (\ref{eq:treatment_N.c.general}), subsequently yield
\[
\begin{split}
\sum_{i=1}^m \frac{\lambda_i}{N_i} I\{Z_i=0\} \left(\frac{1}{1+(N_i-1)\hat{\rho}}\right) \bm{1}_{N_i}^\top (\bm{Y}_i-\hat{\bm{\mu}}_i) &= 0 \,, \\
\sum_{i=1}^m \lambda_i I\{Z_i=0\} \left(\frac{1}{1+(N_i-1)\hat{\rho}}\right) \bm{1}_{N_i}^\top (\bm{Y}_i-\hat{\bm{\mu}}_i) &= 0 \,, \\
\end{split}
\]

Altogether, we can then prove the equality in Equation (\ref{eq:g=MRS})
\[
\begin{split}
    \sum_{i=1}^{m} & \frac{\lambda_i}{N_i} I\{Z_i=a\} \bm{1}_{N_i}^\top\left(\bm{Y}_i-\hat{\bm{\mu}}_i(a)\right)  \\
    &= \sum_{i=1}^{m} \frac{\lambda_i}{N_i} I\{Z_i=a\} \left(\frac{1+(N_i-1)\hat{\rho}}{1+(N_i-1)\hat{\rho}}\right)\bm{1}_{N_i}^\top\left(\bm{Y}_i-\hat{\bm{\mu}}_i(a)\right) \\
    &\begin{aligned}
        = \sum_{i=1}^{m} & \frac{\lambda_i}{N_i} I\{Z_i=a\} \left(\frac{1}{1+(N_i-1)\hat{\rho}}\right)\bm{1}_{N_i}^\top\left(\bm{Y}_i-\hat{\bm{\mu}}_i(a)\right) \\
        & + \hat{\rho} \sum_{i=1}^{m} \lambda_i I\{Z_i=a\} \left(\frac{1}{1+(N_i-1)\hat{\rho}}\right)\bm{1}_{N_i}^\top\left(\bm{Y}_i-\hat{\bm{\mu}}_i(a)\right) \\
        & - \hat{\rho} \sum_{i=1}^{m} \frac{\lambda_i}{N_i} I\{Z_i=a\} \left(\frac{1}{1+(N_i-1)\hat{\rho}}\right)\bm{1}_{N_i}^\top\left(\bm{Y}_i-\hat{\bm{\mu}}_i(a)\right) \\
    \end{aligned} \\
    &= 0
\end{split}
\]
as per the equalities derived above for $a \in \{0,1\}$. Altogether, this completes the proof for both the cluster-size saturated individual-level ($\lambda_i=N_i$; $\hat{\Delta}_{CSGEE-g} = \hat{\Delta}_{CSGEE-MRS}$) and cluster-level ($\lambda_i=1$; $\hat{\Delta}_{CSGEEw-g} = \hat{\Delta}_{CSGEEw-MRS}$) estimators.
$\square$

\section{Extended simulation results}
\label{app:extended_sim}

As per Li et al. \cite{li_model-robust_2025}, scenarios with continuous outcomes, ICS.I, ICS.II, and $m = 30$ clusters are simulated as
\[
    Y_{ik} = \frac{H_{1i}X_{1ik}^{2}}{5 N_i}
    + \cos(H_{2i})\, X_{2ik}
    -\frac{N_i^{2} \log(N_i)}{(E[N_i])^{2}}
    +  |H_{2i}| \sin(X_{2ik})
    +  I\{Z_i=1\}\frac{N_i^{2} \log(N_i)}{(E[N_i])^{2}}
    + I\{Z_i=1\} \alpha_i + \epsilon_{ik}, 
\]
Cluster sizes are drawn as $N_i \sim Uniform\{20,  180\}$ with expected cluster size $E[N_i] = 100$.
A binary cluster-level covariate $H_{1i} \sim Bernoulli\left( \Phi(\sin (N_i)) \right)$ (where $\Phi(\cdot)$ denotes the standard normal cumulative distribution function), a continuous cluster-level covariate $H_{2i} \sim N\left( 2 + H_{1i} N_i / 10,\, 9 \right)$, a continuous individual-level covariate $X_{1ik} \sim N\left( H_{1i} H_{2i} + N_i / 100,\, 16 \right)$, and a binary individual-level covariate $X_{2ik} \sim Bernoulli\left( expit\left[\log(N_i)\, X_{1ik}\, H_{1i} + H_{2i} \right] \right)$ are included.
Finally, cluster random intercepts and residuals are generated as $\alpha_i \sim N(0, \tau^2 = 0.2)$ and $\epsilon_{ik} \sim N(0, 1)$.
This data-generating process targets $iATE = 8.15$ and $cATE = 5.92$, as reported in Li et al. \cite{li_model-robust_2025}.

As per Li et al. \cite{li_model-robust_2025}, scenarios with binary outcomes, ICS.I, ICS.II, and $m = 30$ clusters are simulated as
\[
logit(E[Y_{ik}|\alpha_i]) =  
              -\frac{N_i^{2} \log(N_i)}{5 (E[N_i])^{2}}
             + \frac{X_{1ik}^{2}}{2 N_i}
             + H_{1i}
             + \cos(H_{2i})\, X_{2ik}
             + \frac{|H_{2i}|}{5} + I\{Z_i=1\}\frac{N_i^{2} \log(N_i)}{5 (E[N_i])^{2}}
              + I\{Z_i=1\}\alpha_i ,
\]
Cluster sizes are drawn as $N_i \sim Uniform\{20, 180\}$ with expected cluster size $E[N_i] = 100$.
A binary cluster-level covariate $H_{1i} \sim Bernoulli\left(0.5\right)$, a continuous cluster-level covariate $H_{2i} \sim N\left(2 + H_{1i} + N_i / E[N_i],\, 1 \right)$, a continuous individual-level covariate
$X_{1ik} \sim N\left( H_{1i} + H_{2i} / 20 + N_i / 100,\, 16 \right)$, and a binary individual-level covariate $X_{2ik} \sim Bernoulli\left( expit\left[\log(N_i)\, X_{1ik}\, H_{1i} + H_{2i} \right] \right)$ are included.
Finally, cluster random intercepts are generated as $\alpha_i \sim N(0, \tau^2 = 0.2)$.
This data-generating process targets marginal log-odds-ratio estimands of $iALOR = 1.24$ and $cALOR = 0.91$, as reported in Li et al. \cite{li_model-robust_2025}.

Complete and extended simulation results are included below as tables. This includes results from the appropriately weighted cluster-size saturated independence estimating equation with g-computation (CSIEE-g, CSIEEw-g) and the corresponding cluster main-effect models (CMLMM, CMLMMw, CMGEE, CMGEEw) in both simulation scenarios 1 and 2, and the generalized linear mixed-effects model with model-robust standardization (GLMM-MRS, GLMMw-MRS) in simulation scenario 2.

Furthermore, we report the simulation results from these models in the simulation replicates of Li et al. \cite{li_model-robust_2025}.
The described simulation scenarios from Li et al. \cite{li_model-robust_2025} include cluster-level and individual-level covariates; accordingly we additionally explore CSGEE-g and CSGEEw-g with adjustment for cluster-level covariates (``clus cov adj'') or adjustment for both cluster and individual-level covariates (``all cov adj'').

\subsection{Extended simulation results from scenario 1 (continuous outcomes)}

\begin{table}[H]
\centering
\caption{Analysis Results targeting iATE and cATE, with $m=40$ clusters across $1000$ simulation replicates}
\begin{tabular}{llcccc}
\toprule
Analysis & Estimand & Rel Bias (\%) & Efficiency & Avg Var & CP \\
\midrule
IEE      & iATE & -0.263 & 0.298 & 0.321 & 0.958 \\
LMM      & iATE & -22.886 & 0.483 & 0.519 & 0.863 \\
CMLMM & iATE & -22.193 & 0.190 & 0.183 & 0.655 \\
LMM-MRS  & iATE & -0.491 & 0.424 & 0.588 & 0.973 \\
CSIEE-g  & iATE & 0.755  & 0.176 & 0.176 & 0.940 \\
CSLMM-g  & iATE & 0.459  & 0.180 & 0.175 & 0.936 \\
\midrule
IEEw     & cATE & 0.480  & 0.484 & 0.520 & 0.946 \\
LMMw     & cATE & -40.351 & 0.591 & 0.629 & 0.718 \\
CMLMMw & cATE & -39.462 & 0.231 & 0.254 & 0.540 \\
LMMw-MRS & cATE & 0.480  & 0.484 & 0.534 & 0.947 \\
CSIEEw-g & cATE & 0.743  & 0.186 & 0.173 & 0.937 \\
CSLMMw-g & cATE & 0.437  & 0.185 & 0.174 & 0.939 \\
\bottomrule
\end{tabular}
\end{table}

\subsection{Extended simulation results from scenario 2 (binary outcomes)}

\begin{table}[H]
\centering
\caption{Analysis Results targeting iAOR and cAOR, with $m=40$ clusters across $1000$ simulation replicates}
\begin{tabular}{llcccc}
\toprule
Analysis & Estimand & Rel Bias (\%) & Efficiency & Avg Var & CP \\
\midrule
IEE       & iAOR &   1.466 &    0.379 &  0.400   & 0.950 \\
GLMM      & iAOR &  -3.240 &    0.510 &  0.521   & 0.926 \\
GEE       & iAOR & -14.338 &    0.365 &  0.387   & 0.814 \\
CMGEE & iAOR & -10.228 & 0.247 & 0.291 & 0.866 \\
GLMM-MRS  & iAOR &  22.814 & 8255.551 & $\infty$ & 0.964 \\
GEE-MRS   & iAOR &   1.619 &    0.444 &  0.503   & 0.953 \\
CSIEE-g   & iAOR &   1.437 &    0.281 &  0.341   & 0.961 \\
CSGLMM-g  & iAOR &   3.181 &    0.300 &  0.362   & 0.966 \\
CSGEE-g   & iAOR &   1.363 &    0.279 &  0.328   & 0.959 \\
\midrule
IEEw      & cAOR &   3.853 &    0.327 &  0.348   & 0.949 \\
GLMMw     & cAOR &  -9.203 &    0.438 &  0.495   & 0.863 \\
GEEw      & cAOR & -19.377 &    0.313 &  0.364   & 0.737 \\
CMGEEw & cAOR & -13.555 & 0.237 & 0.282 & 0.854 \\
GLMMw-MRS & cAOR &   3.853 &    0.327 & 11.483   & 1.000 \\
GEEw-MRS  & cAOR &   3.853 &    0.327 &  0.367   & 0.951 \\
CSIEEw-g  & cAOR &   2.127 & 0.174 & 0.203 & 0.965 \\
CSGLMMw-g & cAOR &   3.119 &    0.185 &  0.218   & 0.964 \\
CSGEEw-g  & cAOR &   2.066 & 0.174 & 0.204 & 0.962 \\
\bottomrule
\end{tabular}
\end{table}

\begin{table}[H]
\caption{
    Power and average values for the test of informative cluster size between the unweighted and inverse cluster-size weighted estimators.
    Results are reported for simulation scenarios 1 \& 2 with $m=40$ clusters.
}
\begin{center}
\bgroup
\def\arraystretch{1.3}
{
\begin{tabular}{|c c c|} 
    \hline
    \textbf{Estimator} & \textbf{Average Test Statistic} & \textbf{Power} \\
    \hline\hline
    Scenario 1 (continuous) \\
    \hline\hline
    IEE vs IEEw & -1.981 & 0.483\\
    \hdashline
    LMM-MRS vs LMMw-MRS & -1.592 & 0.245\\
    \hdashline
    CSIEE-g vs CSIEEw-g & -3.631 & 0.984 \\
    \hdashline
    CSLMM-g vs CSLMMw-g & -3.572 & 0.985\\
    \hline\hline
    Scenario 2 (binary) \\
    \hline\hline 
    IEE vs IEEw & -2.200 & 0.619\\
    \hdashline
    GEE-MRS vs GEEw-MRS & -1.790 & 0.375 \\
    \hdashline
    GLMM-MRS vs GLMMw-MRS & 0.081 & 0.001 \\
    \hdashline
    CSIEE-g vs CSIEEw-g & -2.438 & 0.727 \\
    \hdashline
    CSGLMM-g vs CSGLMMw-g & -2.444 & 0.727 \\
    \hdashline
    CSGEE-g vs CSGEEw-g & -2.433 & 0.728 \\
    \hline
\end{tabular}
}
\egroup
\end{center}
\end{table}

\subsection{Simulation replicates of Li et al. \cite{li_model-robust_2025}}

\begin{table}[H]
\centering
\caption{Analysis Results for iATE and cATE, with $m=30$ clusters across $1000$ simulation replicates}
\begin{tabular}{llcccc}
\toprule
Analysis & Estimand & Relative Bias (\%) & Efficiency & Avg Var & Coverage Probability \\
\midrule
IEE       & iATE &  -1.673 & 6.234 & 6.220 & 0.923 \\
LMM       & iATE & -27.203 & 3.814 & 3.814 & 0.760 \\
CMLMM & iATE & -27.439 & 3.763 & 4.225 & 0.783 \\
LMM-MRS   & iATE &  -1.829 & 6.284 & 6.376 & 0.925 \\
CSIEE-g   & iATE &  -0.784 & 6.083 & 6.764 & 0.929 \\
CSLMM-g   & iATE &  -1.002 & 6.045 & 6.518 & 0.934 \\
\midrule
IEEw      & cATE &   0.048 & 3.815 & 3.812 & 0.938 \\
LMMw      & cATE & -38.252 & 2.644 & 2.855 & 0.703 \\
CMLMMw & cATE & -37.293 & 2.771 & 3.355 & 0.743 \\
LMMw-MRS  & cATE &   0.048 & 3.815 & 3.944 & 0.942 \\
CSIEEw-g  & cATE &  -0.229 & 3.642 & 3.997 & 0.952 \\
CSLMMw-g  & cATE &  -0.396 & 3.656 & 3.947 & 0.948 \\
\bottomrule
\end{tabular}
\end{table}

\begin{table}[H]
\centering
\caption{Analysis Results for the individual-average and cluster-average log-odds ratio (iALOR, cALOR), with $m=30$ clusters across $1000$ simulation replicates}
\label{app:tab:lietal_bin}
\begin{tabular}{llcccc}
\toprule
Analysis & Estimand & Relative Bias (log; \%) & Efficiency & Avg Var & Coverage Probability \\
\midrule
IEE       & iALOR &   0.953 & 2.509 &     2.863 & 0.932 \\
GLMM      & iALOR &  -3.708 & 3.023 &     3.407 & 0.905 \\
GEE       & iALOR & -24.618 & 1.045 &     1.147 & 0.741 \\
CMGEE & iALOR & -21.796 & 1.186 & 1.41 & 0.77 \\
GLMM-MRS  & iALOR &   5.172 & 9.429 & 12235.265 & 0.989 \\
GEE-MRS   & iALOR &   0.777 & 2.587 &     3.061 & 0.940 \\
CSIEE-g   & iALOR &   1.776 & 2.425 &     2.933 & 0.929 \\
CSGLMM-g  & iALOR &   4.280 & 2.695 &     3.228 & 0.934 \\
CSGEE-g   & iALOR &   1.676 & 2.405 &     2.616 & 0.930 \\
CSGEE-g (clus cov adj) & iALOR & 0.991 & 0.874 & 1.076 & 0.947 \\
CSGEE-g (all cov adj) & iALOR & 0.848 & 0.862 & 1.062 & 0.943 \\
\midrule
IEEw      & cALOR &   0.748 & 0.996 &     1.091 & 0.944 \\
GLMMw     & cALOR & -21.239 & 1.370 &     1.615 & 0.857 \\
GEEw      & cALOR & -36.797 & 0.623 &     0.718 & 0.741 \\
CMGEEw & cALOR & -32.749 & 0.672 & 0.9 & 0.788 \\
GLMMw-MRS & cALOR &   0.748 & 0.996 &     9.551 & 0.999 \\
GEEw-MRS  & cALOR &   0.748 & 0.996 &     1.129 & 0.944 \\
CSIEEw-g  & cALOR &  0.779 & 0.895 & 1.615 & 0.938 \\
CSGLMMw-g & cALOR &   3.198 & 0.967 &     1.120 & 0.944 \\
CSGEEw-g  & cALOR & 0.71 & 0.892 & 1.023 & 0.938 \\
CSGEEw-g (clus cov adj) & cALOR & 1.096 & 0.414 & 0.506 & 0.966 \\
CSGEEw-g (all cov adj) & cALOR & 1.168 & 0.409 & 0.502 & 0.966 \\
\bottomrule
\end{tabular}
\end{table}

Adjusting for cluster-level covariates in CSGEE-g and CSGEEw-g (``clus cov adj'') yielded consistent estimators (Theorem \ref{Theorem_GEE_g}) that were unbiased and had notable improvements in efficiency (Table \ref{app:tab:lietal_bin}), despite misspecification of the cluster-level covariate structure.

While the described CSGEE-g and CSGEEw-g with adjustment for both cluster and individual-level covariates (``all cov adj'') is not consistent (Lemma \ref{lemma:g}, Theorem \ref{Theorem_GEE_g}), we observe that adjusting for individual-level covariates can still return minimally biased results.
This is despite violating all supplementary assumptions in Lemma \ref{lemma:g}, and particularly supplementary assumption (III).
These minimally biased results can be characterized by first starting with the estimating equation in Equation (\ref{eq:GEE}) with $E[\bm{\psi}(\bm{O};\bm{\theta})]=0$ in the absence of Lemma 2's supplementary assumptions (I)-(III)
\[
\begin{split}
    E &\left[ \underline{\bm{U}}^\top \underline{\bm{\mathcal{Z}}}^{-1/2} \underline{\bm{R}}^{-1} \underline{\bm{\mathcal{Z}}}^{-1/2} (\bm{Y} - \underline{\bm{\mu}}) \right]\\
    &= E\left[ \underline{\bm{Q}}^\top \underline{\bm{\mathcal{Z}}} \, \underline{\bm{\mathcal{Z}}}^{-1/2} \underline{\bm{R}}^{-1} \underline{\bm{\mathcal{Z}}}^{-1/2} (\bm{Y} - \underline{\bm{\mu}}) \right]\\
    &= E \left[ \underline{\bm{Q}}^\top \underline{\bm{\mathcal{Z}}}^{1/2} \underline{\bm{R}}^{-1} \underline{\bm{\mathcal{Z}}}^{-1/2} (\bm{Y} - \underline{\bm{\mu}}) \right]\\
    &=0 \,.
\end{split}
\]
Then with the second through fifth entries of the estimating equations that are used in the proof for Theorem \ref{Theorem_GEE_g}, corresponding to the score functions for the model intercept $\beta_0$, treatment effect coefficient $\beta_Z$, treatment $\times$ cluster-size interaction $\beta_{ZN}$, and cluster-size main effect $\beta_N$, are then
\begin{gather}
    E\left[ \frac{\lambda}{N}\bm{1}_{N}^\top \underline{\bm{\mathcal{Z}}}^{1/2} \underline{\bm{R}}^{-1} \underline{\bm{\mathcal{Z}}}^{-1/2}(\bm{Y}-\underline{\bm{\mu}}) \right] =  E\left[ \frac{\lambda}{N} \underline{\bm{\Omega}}^\top (\bm{Y}-\underline{\bm{\mu}}) \right]  = 0 \\
     E\left[ \frac{\lambda}{N} I\{Z=1\} \bm{1}_{N}^\top \underline{\bm{\mathcal{Z}}}^{1/2} \underline{\bm{R}}^{-1} \underline{\bm{\mathcal{Z}}}^{-1/2} (\bm{Y}-\underline{\bm{\mu}}) \right] =  E\left[ \frac{\lambda}{N} I\{Z=1\} \underline{\bm{\Omega}}^\top (\bm{Y}-\underline{\bm{\mu}}) \right] = 0 \\
     E\left[  \lambda I\{Z=1\} \bm{1}_{N}^\top \underline{\bm{\mathcal{Z}}}^{1/2} \underline{\bm{R}}^{-1} \underline{\bm{\mathcal{Z}}}^{-1/2} (\bm{Y}-\underline{\bm{\mu}}) \right]  = E\left[  \lambda I\{Z=1\} \underline{\bm{\Omega}}^\top (\bm{Y}-\underline{\bm{\mu}}) \right]  = 0 \\
     E\left[  \lambda \bm{1}_{N}^\top \underline{\bm{\mathcal{Z}}}^{1/2} \underline{\bm{R}}^{-1} \underline{\bm{\mathcal{Z}}}^{-1/2} (\bm{Y} -\underline{\bm{\mu}}) \right] = E\left[  \lambda \underline{\bm{\Omega}}^\top (\bm{Y} -\underline{\bm{\mu}}) \right]  = 0
\end{gather}
with $\lambda=N$ or $1$ yielding the individual-level or cluster-level estimators, respectively, as per Equations (\ref{eq:baseline.c.general}) - (\ref{eq:N.c.general}).
In the above, $\underline{\bm{\Omega}}^\top=\bm{1}_{N}^\top \underline{\bm{\mathcal{Z}}}^{1/2} \underline{\bm{R}}^{-1} \underline{\bm{\mathcal{Z}}}^{-1/2} \in \mathbb{R}^{1\times N}$ with entries
\[
\begin{split}
    \Omega_{.k} &= \frac{1}{1-\underline{\rho}} \left( 1 - \frac{\underline{\rho} \sum_{l=1}^{N} \sqrt{v(\underline{\mu}_{.l})}}{(1+(N-1)\underline{\rho})\sqrt{v(\underline{\mu}_{.k})}} \right) \\
    &= \frac{1}{1+(N-1)\underline{\rho}} + \left( \frac{\underline{\rho}N}{(1-\underline{\rho})(1+(N-1)\underline{\rho})} \right) \left( 1- \frac{\left(\sum_{l=1}^{N}\sqrt{v(\underline{\mu}_{.l})}\right) /N}{\sqrt{v(\underline{\mu}_{.k})}} \right) \\
    &= \frac{1}{1+(N-1)\underline{\rho}} + \underline{\delta}_{.k} \,.
\end{split}
\]
With any of the supplementary assumptions (I)-(III) detailed in Lemma \ref{lemma:g}, $\underline{\delta}_{.k}=0$ in the above equation, so $\underline{\bm{\Omega}}^\top$ reduces to the uniform per-cluster weight $\{1+(N-1)\underline{\rho}\}^{-1}\bm{1}_{N}^\top$ and the saturating equations collapse to the scalar form used to establish the Theorem \ref{Theorem_GEE_g} result.

When $\underline{\delta}_{.k}\neq 0$, two consequences arise. First, the saturating equations weight individuals non-uniformly through $\underline{\bm{\Omega}}$ rather than through the uniform scalar, distorting the treatment contrast. Second, because $\sqrt{v(\underline{\mu}_{.k})}$ is evaluated at the fitted mean under the observed arm, $\underline{\delta}_{.k}$ is itself treatment arm-dependent; consequently $\bm{1}_{N}^\top(\bm{Y}(0)-\underline{\bm{\mu}}(0))$, which vanishes under cluster randomization (A2) in the proof of Theorem \ref{Theorem_GEE_g}, no longer cancels on its own and contributes an additional bias term. Both channels are governed by the same product of $\underline{\rho}$ and the within-cluster spread of $\sqrt{v(\underline{\mu}_{.k})}$, and each vanishes under conditions (I)-(III).

In the described replicates of Li et al.'s binary simulation results \cite{li_model-robust_2025}, where we violate supplementary assumptions (I)-(III) detailed in Lemma \ref{lemma:g}, the bias from $\underline{\delta}_{.k}$ is minimized due to $\tau^2 = 0.2$ on the logit scale keeping the marginal working $\rho$ low.
Furthermore, the individual-level covariate induced fluctuations in $\sqrt{v(\underline{\mu}_{.k})}=\sqrt{\underline{\mu}_{.k}(1-\underline{\mu}_{.k})}$ varies little for $\underline{\mu}_{.k}$ near $1/2$, where this data-generating process concentrates; even sizable within-cluster heterogeneity in the fitted probabilities therefore induces negligible spread in $\sqrt{v(\underline{\mu}_{.k})}$.
Altogether, $\underline{\delta}_{.k} \approx 0$ and $\Omega_{.k} \approx \frac{1}{1+(N-1)\underline{\rho}}$, leading to the observed minimally biased results.

\begin{table}[H]
\caption{
    Power and average values for the test of informative cluster size between the unweighted and inverse cluster-size weighted estimators.
    Results are reported for replications of  Li et al.'s \cite{li_model-robust_2025} simulation scenarios with $m=30$ clusters.
}
\label{tab:ICS_test_lietal}
\begin{center}
\bgroup
\def\arraystretch{1.3}
{
\begin{tabular}{|c c c|} 
    \hline
    \textbf{Estimator} & \textbf{Average Test Statistic} & \textbf{Power} \\
    \hline\hline
    Li et al. Scenario (continuous) \\
    \hline\hline
    IEE vs IEEw & -2.238 & 0.574 \\
    \hdashline
    LMM-MRS vs LMMw-MRS & -2.193 & 0.571 \\
    \hdashline
    CSIEE-g vs CSIEEw-g & -2.296 & 0.557 \\
    \hdashline
    CSLMM-g vs CSLMMw-g & -2.311 & 0.571\\
    \hline\hline
    Li et al. Scenario (binary) \\
    \hline\hline 
    IEE vs IEEw & -1.337 & 0.036 \\
    \hdashline
    GLMM-MRS vs GLMMw-MRS & -0.342 & 0.000 \\
    \hdashline
    GEE-MRS vs GEEw-MRS & -1.280 & 0.020 \\
    \hdashline
    CSIEE-g vs CSIEEw-g & -1.356 & 0.055 \\
    \hdashline
    CSGLMM-g vs CSGLMMw-g & -1.371 & 0.059 \\
    \hdashline
    CSGEE-g vs CSGEEw-g & -1.362 & 0.053 \\
    \hdashline
    CSGEE-g vs CSGEEw-g (clus cov adj) & -2.040 & 0.500 \\
    \hline
\end{tabular}
}
\egroup
\end{center}
\end{table}

\newpage
\section{Extended case study re-analysis results}
\label{app:extended_case}

Complete and extended case study re-analysis results are included below as tables. This includes results from the appropriately weighted cluster-size saturated independence estimating equation with g-computation (CSIEE-g, CSIEEw-g).

\begin{table}[H]
\centering
\caption{Results from a re-analysis of the PPACT P-CRT are reported in terms of the point estimates, jackknife variance, and 95\% confidence intervals.}
\begin{tabular}{llrrr}
\toprule
 & Estimator & Estimate & Var & 95\% CI \\
\midrule
 & IEE       & -0.633 & 0.035 & (-1.004, -0.262) \\
 & CSIEE-g   & -0.646 & 0.036 & (-1.020, -0.272) \\
 & LMM       & -0.651 & 0.036 & (-1.026, -0.276) \\
 & LMM-MRS   & -0.632 & 0.035 & (-1.005, -0.260) \\
 & CSLMM-g   & -0.647 & 0.035 & (-1.020, -0.273) \\
\midrule
 & IEEw      & -0.702 & 0.041 & (-1.103, -0.301) \\
 & CSIEEw-g  & -0.727 & 0.041 & (-1.127, -0.327) \\
 & LMMw      & -0.726 & 0.046 & (-1.150, -0.301) \\
 & LMMw-MRS  & -0.702 & 0.041 & (-1.105, -0.299) \\
 & CSLMMw-g  & -0.728 & 0.041 & (-1.128, -0.328) \\
\bottomrule
\end{tabular}
\end{table}

\begin{table}[H]
\centering
\caption{The values of the test-statistic for detecting ICS and p-values with the different modeling approaches from a re-analysis of the PPACT P-CRT.}
\begin{tabular}{lrr}
\toprule
Comparison & Statistic & p-value \\
\midrule
IEEw vs.\ IEE             & -0.929 & 0.355 \\
CSIEEw-g vs.\ CSIEE-g & -1.061 & 0.291 \\
LMMw vs.\ LMM             & -0.889 & 0.376 \\
LMMw-MRS vs.\ LMM-MRS     & -0.922 & 0.359 \\
CSLMMw-g vs.\ CSLMM-g & -1.060 & 0.291 \\
\bottomrule
\end{tabular}
\end{table}


\end{document}